\newcolumntype{Y}{>{\centering\arraybackslash}X}
\begin{document}

\title{Certified randomness using a trapped-ion quantum processor}
\affiliation{Global Technology Applied Research, JPMorganChase, New York, NY 10017, USA}
\affiliation{Quantinuum, Broomfield, CO 80021, USA}
\author{Minzhao~Liu}
\thanks{Equal contribution}
\affiliation{Global Technology Applied Research, JPMorganChase, New York, NY 10017, USA}
\affiliation{Computational Science Division, Argonne National Laboratory, Lemont, IL 60439, USA}
\affiliation{Department of Physics, The University of Chicago, Chicago, IL 60637, USA}
\author{Ruslan~Shaydulin}
\thanks{Equal contribution}
\thanks{Corresponding author,\\email: \url{ruslan.shaydulin@jpmorgan.com}}
\affiliation{Global Technology Applied Research, JPMorganChase, New York, NY 10017, USA}
\author{Pradeep~Niroula}
\thanks{Equal contribution}
\affiliation{Global Technology Applied Research, JPMorganChase, New York, NY 10017, USA}
\author{Matthew~DeCross} 
\affiliation{Quantinuum, Broomfield, CO 80021, USA}
\author{Shih-Han~Hung}
\affiliation{Department of Computer Science, The University of Texas at Austin, Austin, TX 78712, USA}
\affiliation{Department of Electrical Engineering, National Taiwan University, Taipei City, 10617, ROC}
\author{Wen Yu Kon} 
\affiliation{Global Technology Applied Research, JPMorganChase, New York, NY 10017, USA}
\author{Enrique~Cervero-Mart\'{i}n} 
\affiliation{Global Technology Applied Research, JPMorganChase, New York, NY 10017, USA}
\author{Kaushik~Chakraborty} 
\affiliation{Global Technology Applied Research, JPMorganChase, New York, NY 10017, USA}
\author{Omar~Amer} 
\affiliation{Global Technology Applied Research, JPMorganChase, New York, NY 10017, USA}
\author{Scott~Aaronson} 
\affiliation{Department of Computer Science, The University of Texas at Austin, Austin, TX 78712, USA}
\author{Atithi~Acharya}
\affiliation{Global Technology Applied Research, JPMorganChase, New York, NY 10017, USA}
\author{Yuri~Alexeev}
\thanks{Current address: NVIDIA Corporation, Santa Clara, CA, USA}
\affiliation{Computational Science Division, Argonne National Laboratory, Lemont, IL 60439, USA}
\author{K.~Jordan~Berg}
\affiliation{Quantinuum, Broomfield, CO 80021, USA}
\author{Shouvanik~Chakrabarti}
\affiliation{Global Technology Applied Research, JPMorganChase, New York, NY 10017, USA}
\author{Florian J. Curchod}
\affiliation{Quantinuum, Terrington House, 13–15 Hills Road, Cambridge CB2 1NL, United Kingdom}
\author{Joan~M.~Dreiling}
\affiliation{Quantinuum, Broomfield, CO 80021, USA}
\author{Neal~Erickson}
\affiliation{Quantinuum, Broomfield, CO 80021, USA}
\author{Cameron~Foltz}
\affiliation{Quantinuum, Broomfield, CO 80021, USA}
\author{Michael~Foss-Feig}
\affiliation{Quantinuum, Broomfield, CO 80021, USA}
\author{David~Hayes}
\affiliation{Quantinuum, Broomfield, CO 80021, USA}
\author{Travis~S.~Humble}
\affiliation{Quantum Science Center, Oak Ridge National Laboratory, Oak Ridge, TN 37831, USA}
\author{Niraj~Kumar}
\affiliation{Global Technology Applied Research, JPMorganChase, New York, NY 10017, USA}
\author{Jeffrey~Larson}
\affiliation{Mathematics and Computer Science Division, Argonne National Laboratory, Lemont, IL 60439, USA}
\author{Danylo~Lykov}
\thanks{Current address: NVIDIA Corporation, Santa Clara, CA, USA}
\affiliation{Global Technology Applied Research, JPMorganChase, New York, NY 10017, USA}
\affiliation{Computational Science Division, Argonne National Laboratory, Lemont, IL 60439, USA}
\author{Michael~Mills} 
\affiliation{Quantinuum, Broomfield, CO 80021, USA}
\author{Steven~A.~Moses} 
\affiliation{Quantinuum, Broomfield, CO 80021, USA}
\author{Brian~Neyenhuis} 
\affiliation{Quantinuum, Broomfield, CO 80021, USA}
\author{Shaltiel~Eloul} 
\affiliation{Global Technology Applied Research, JPMorganChase, New York, NY 10017, USA}
\author{Peter~Siegfried} 
\affiliation{Quantinuum, Broomfield, CO 80021, USA}
\author{James~Walker} 
\affiliation{Quantinuum, Broomfield, CO 80021, USA}
\author{Charles~Lim}
\thanks{Corresponding author, email: \url{charles.lim@jpmorgan.com}}
\affiliation{Global Technology Applied Research, JPMorganChase, New York, NY 10017, USA}
\author{Marco~Pistoia}
\thanks{Principal investigator}
\thanks{Corresponding author,\\email: \url{marco.pistoia@jpmorgan.com}}
\affiliation{Global Technology Applied Research, JPMorganChase, New York, NY 10017, USA}

\date{\today}

\begin{abstract} 
    While quantum computers have the potential to perform a wide range of practically important tasks beyond the capabilities of classical computers \cite{Alexeev2021, Herman2023}, realizing this potential remains a challenge. One such task is to use an untrusted remote device to generate random bits that can be certified to contain a certain amount of entropy \cite{aaronson2023certified}. Certified randomness has many applications \cite{applicationsCertRand} but is fundamentally impossible to achieve solely by classical computation. In this work, we demonstrate the generation of certifiably random bits using the 56-qubit Quantinuum H2-1 trapped-ion quantum computer accessed over the internet. Our protocol leverages the classical hardness of recent random circuit sampling demonstrations \cite{qntm_rcs,Arute2019}: a client generates quantum ``challenge'' circuits using a small randomness seed, sends them to an untrusted quantum server to execute, and verifies the server's results. We analyze the security of our protocol against a restricted class of realistic near-term adversaries.
    Using classical verification with measured combined sustained performance of $1.1\times10^{18}$ floating-point operations per second across multiple supercomputers, we certify $71,313$ bits of entropy under this restricted adversary and additional assumptions. Our results demonstrate a step towards the practical applicability of today's quantum computers.
\end{abstract}

\maketitle
\renewcommand{\theequation}{\arabic{equation}}

In recent years, numerous theoretical results have shown evidence that quantum computers have the potential to tackle a wide range of problems out of reach of classical techniques. Prime examples include factoring large integers~\cite{shor1994algorithms}, implicitly solving exponentially sized systems of linear equations \cite{harrow2009quantum}, optimizing intractable problems~\cite{Shaydulin2024}, learning certain functions~\cite{liu2021rigorous}, and simulating large quantum many-body systems~\cite{berry2007Efficient}. However, accounting for considerations such as quantum error correction overheads and gate speeds, the resource requirements of known quantum algorithms for these problems put them far outside the reach of near-term quantum devices, including many of hypothesized fault-tolerant architectures. Consequently, it remains unclear if the devices available in the near-term will benefit a practical application \cite{hoefler2023disentangling}.

Starting with one of the first ``quantum supremacy'' demonstrations~\cite{Arute2019}, multiple groups have  used random circuit sampling (RCS) as an example of a task that can be executed faster and with a lower energy cost on today's quantum computers compared to what is achievable classically~\cite{Wu2021,Zhu2022,morvan2023phase,qntm_rcs}. Yet, despite rapid experimental progress, a beyond-classical demonstration of a \emph{practically useful} task performed by gate-based quantum computers has thus far remained elusive.

\begin{figure*}[!ht]
    \centering
    \includegraphics[width=\textwidth]{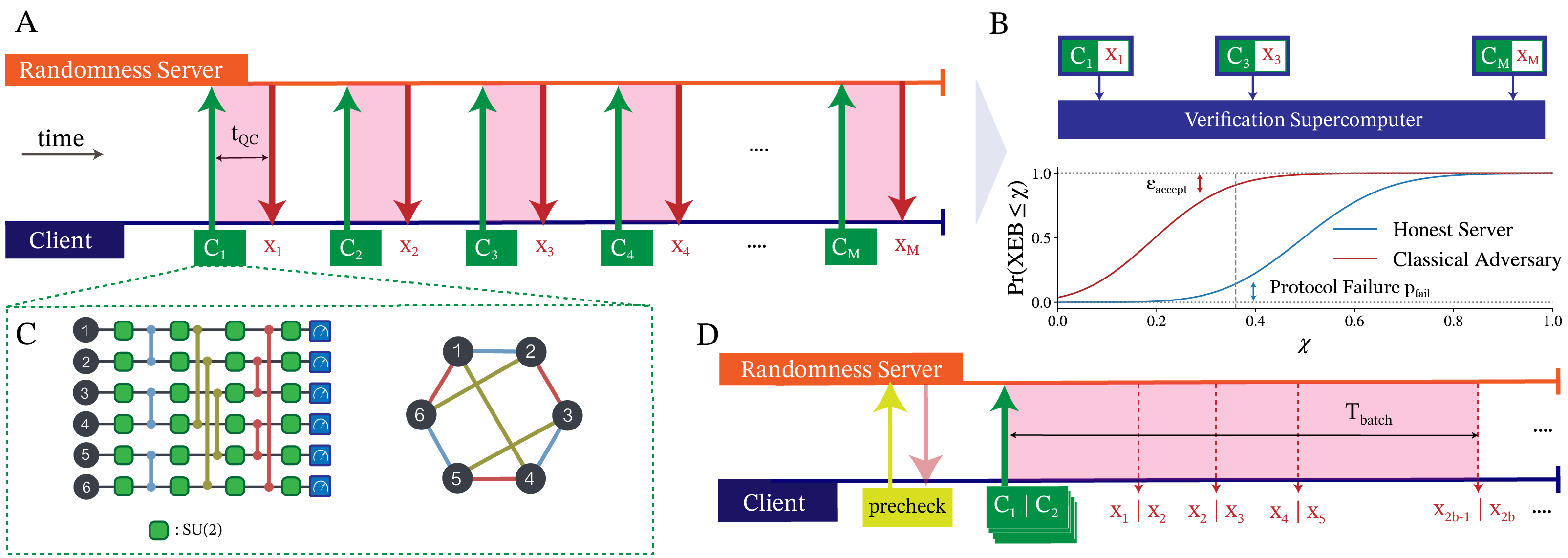}
    \caption{
    \textbf{Overview of the Protocol.} \textbf{a}, The idealized protocol. A client submits $M$ random circuits $\{C_i\}_{i\in[M]}$ serially to a randomness server and expects bitstrings $\{x_i\}_{i \in [M]}$ back, each within a time $t_{\rm QC}$. \textbf{b}, A subset of circuit-bitstring pairs is used to compute the {\rm XEB}{} score. The {\rm XEB}{} score has distributions (bottom plot for qualitative illustration only) corresponding to either an honest server or an adversarial server performing a low-fidelity classical simulation. For any {\rm XEB}{} target indicated by the dashed line, an honest server may fail to achieve a score above this threshold with probability $p_{\rm fail}$. \textbf{c}, Illustration of the challenge circuits, consisting of layers of $U_{\rm ZZ}$ gates sandwiched between layers of random SU(2) gates on all qubits. The arrangement of two-qubit gates is obtained via edge coloring (right) on a random $n$-node graph. \textbf{d}, Client-server interaction as implemented in our protocol. Following a device-readiness check (``precheck"), the client submits a batch of $2b$ circuits and expects all the samples corresponding to the batch to be returned within a cutoff duration $T_{b, \rm{cutoff}}$. Note that only one batc with execution time of $T_{\rm batch}$ is illustrated in the figure. The client continues the protocol until $M$ total circuits have been successfully executed. }
    \label{fig:main-overview-panel}
\end{figure*}

Random number generation is a natural task for such a demonstration since randomness is intrinsic to quantum mechanics, and it is important in a broad variety of applications, ranging from information security to ensuring fairness of processes such as jury selection~\cite{Acn2016,HerreroCollantes2017,Mannalatha2023,applicationsCertRand}. A central challenge for any client receiving randomness from a third-party provider, such as a hardware security module, is to verify that the bits received are truly random and freshly generated. While certified randomness is not necessary for every use of random numbers, the freshness requirement is especially important in applications such as lotteries and e-games, where multiple parties (which may or may not trust each other) need to ensure that a publicly distributed random number was generated on demand. We refer interested reader to a companion Perspective outlining applications that benefits in specific ways from certified randomness~\cite{applicationsCertRand}. Additionally, certified randomness can be used to verify the position of a dishonest party \cite{amer2024certified}.

Protocols exist for certifying random numbers based on the violation of Bell inequalities  \cite{acin2016certified, pironio2010random, Liu2018, Foreman2023practicalrandomness, bierhorst2018experimentally}. However, these protocols typically require the underlying Bell test to be loophole-free, which can be hard for the client to enforce when the quantum devices are controlled by a third-party provider. This approach thus necessitates that the client trust a third-party quantum device provider to perform the Bell test faithfully.

Alternatively, Ref.~\cite{aaronson2023certified} proposed a certified randomness protocol that combines RCS with ``verification" on classical supercomputers~\cite{aaronson2023certified, bassirian2021certified}. This type of protocol allows a classical client to verify randomness using only remote access to an untrusted quantum server. A classical client pseudorandomly generates $n$-qubit challenge circuits and sends them to a quantum server, which is asked to return length-$n$ bitstrings sampled from the output distribution of these circuits within a short amount of time (Fig. 1a, c). The circuits are chosen such that no realistic adversarial server can classically simulate them within the short response time. A small subset of circuits are then used to compute the cross-entropy benchmarking ({\rm XEB}) score~\cite{Boixo2018} (Fig. 1b), which reflects how well the samples returned by the server match the ideal output distributions of the submitted circuits. Crucially, extensive complexity-theoretic evidence suggests that {\rm XEB}{} is hard to ``spoof'' classically~\cite{aaronson2017complexity, Aaronson2020}. Therefore, a high {\rm XEB}{} score, combined with a short response time, allows the client to certify that the server must have used a quantum computer to generate its responses, thereby guaranteeing a certain amount of entropy with high probability. Our analysis quantifies the minimum amount of entropy that an untrusted server, possibly acting as an adversary, must provide to achieve a given {\rm XEB}{} score in a short amount of time.

The protocol proposed in Ref.~\cite{aaronson2023certified} provides a complexity-theoretic guarantee of $\Omega(n)$ bits of entropy for a server returning \textit{many samples from the same circuit}. This protocol is best suited for quantum computing architectures with overheads that make it preferable to sample a circuit many times after loading it once. In practice, the classical simulation cost of sampling a circuit many times is comparable to the cost of sampling only once~\cite{Liu2024}. Furthermore, the trapped-ion based quantum computer used in this work is configured to feature minimal overhead per circuit, such that executing many single-shot circuits does not introduce a substantial time penalty per circuit compared with sampling one circuit many times. Together, these two observations motivate strengthening the security of the protocol by requesting the server to return only one sample per circuit. To this end, in Supplemental Material (SM) Sec. I we extend the complexity-theoretic analysis to this modified setting of one sample per circuit, guaranteeing $\Omega(n)$ bits of entropy.

In this work we report an experimental demonstration of an RCS-based certified randomness protocol. Our main contributions are as follows. First, inspired by Ref.~\cite{aaronson2023certified}, we propose a modified RCS-based certified randomness protocol that is tailored to near-term quantum servers. Second, we prove the security of our implementation against a class of realistic finite-sized adversaries. Third, we use a high-fidelity quantum computer and exascale classical computation to experimentally realize this proposed protocol, pushing the boundaries of both quantum and classical computing capabilities. By combining the high-fidelity Quantinuum H2-1 quantum processor with exascale verification, we demonstrate a useful beyond-classical application of gate-based digital quantum computers.

In our proposed protocol, illustrated in Fig. 1d and detailed in the Methods section, the client pseudorandomly generates a sufficiently large number of $n$-qubit quantum circuits and then sends them in batches of $2b$ circuits, where $b$ is an integer. After a batch is submitted, the client waits for $2b$ length-$n$ bitstrings to be returned within $T_{b, \rm{cutoff}}$ seconds. The batch cutoff time prevents the protocol from stalling and is fixed in advance based on preliminary experiments to a value intended to maximize the amount of certifiable entropy while ensuring that the average response time per circuit remains low enough to preclude classical simulation as a viable strategy for the server to generate responses. If a batch times out or if a failure status is reported, all of the outstanding jobs in the batch are canceled, and \textit{all} bitstrings received from the batch are discarded. Consequently, results from a failed batch are not included in calculating the {\rm XEB}{} score or entropy extraction. Batches are continually submitted until $M$ valid samples are collected. The cumulative response time for successful batches gives the total time $T_{\rm tot}$ and the average time per sample $t_{\rm QC}=T_{\rm tot}/M$. Subsequently, the client calculates the {\rm XEB}{} score on a subset of size $m$ randomly sampled from the $M$ circuit-sample pairs:
\begin{equation}
    {\rm XEB}_{\rm test}= \frac{2^n}{m}\sum_{i \in \mathcal{V}} p_{C_i}(x_i) - 1
    \label{eq:xeb-definition},
\end{equation}
where $\mathcal{V}$ is the set of indices for the random subset of size $m$ and $p_{C}(x) = |\langle x | C|0\rangle|^2$ is the probability of measuring bitstring $x$ from an ideal quantum computer executing circuit $C$. If the bitstrings $x_i$ are perfectly drawn from the output distributions of sufficiently deep random circuits $C_i$, the {\rm XEB}{} score is expected to concentrate around 1. On the other hand, if the $x_i$ are drawn from distributions uncorrelated with the distributions induced by $C_i$, the {\rm XEB}{} score is expected to concentrate around $0$. The client decides to accept the received samples as random bits based on two criteria. First, the average time per sample must be lower than a threshold $t_{\rm{threshold}}$, which is chosen to preclude high-fidelity classical simulation. This time can be lower than $T_{b, \rm{cutoff}}$ since it is advantageous from the perspective of extractable entropy to accept some samples with response time slightly larger than $t_{\rm{threshold}}$ as long as the average response time remains low. Second, the {\rm XEB}{} score on $\mathcal{V}$ must be greater than a threshold $\chi \in [0, 1]$. All of $t_{\rm{threshold}}$, $\chi$, and $T_{b, \rm{cutoff}}$ are determined in advance of protocol execution, based on (for example) preliminary hardware experiments, with the goal of certifying a certain fixed amount of entropy at the end of the protocol with high probability. Together, the protocol succeeds if
\begin{equation}
   t_{\rm QC}=T_{\rm tot}/M \leq t_{\rm{threshold}}, \quad {\text{ and }} \:\: {\rm XEB}_{\rm test}\geq \chi
    \label{eq:protocol-success-criteria},
\end{equation}
and otherwise aborts.

\begin{figure*}[!ht]
    \centering
    \includegraphics[width=\textwidth]{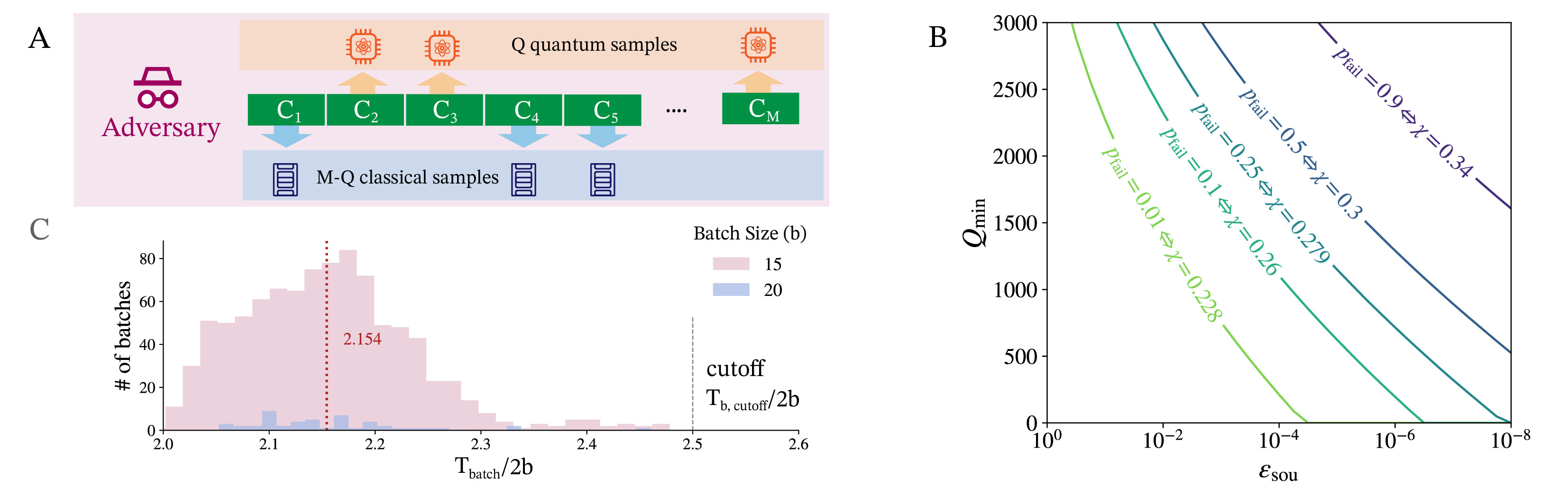}
    \caption{\textbf{Adversary model and protocol security.} \textbf{a}, In the adversarial model considered in this work, $Q$ samples are obtained using a perfect-fidelity quantum computer and $M-Q$ using classical simulation. \textbf{b}, Probability of an honest server with fidelity $\phi=0.3$ failing to certify $Q_{\rm min}$ quantum samples (and corresponding threshold $\chi$) with soundness $\varepsilon_{\rm sou}$ against an adversary four times more powerful than Frontier over repeated experiments, with the protocol parameters set to those from Table I. \textbf{c}, Distribution of batch times per successful sample, from a total of 984 successful batches, in our experiment. Vertical dashed line indicates the average time per sample.
    }
    \label{fig:panel-2}
\end{figure*}

The security of our protocol relies on the central assumption that, for the family of pseudorandom circuits we consider, there exists no practical classical algorithm that can spoof the {\rm XEB}{} test used in the protocol. We analyze the protocol security by modeling a restricted but realistic adversarial server that we believe to be the most relevant: for each circuit received, the adversary either samples an output honestly from a quantum computer or performs classical simulation (see Fig. 2a). Since only the former contains entropy, the adversary tries to achieve the threshold {\rm XEB}{} score with the fewest quantum samples, in order to pass the {\rm XEB}{} test while returning as little entropy as possible. For our protocol we assume an adversary with a perfect-fidelity quantum computer, which allows the adversary to spoof the maximum number of bitstrings classically. We further assume that the adversary's classical computational power is bounded by a fixed number of floating-point operations per second (FLOPS) $\mathcal{A}$, which may be measured relative to the most powerful supercomputer in the world (at the time of experiment, the Frontier supercomputer \cite{top500}), and that the adversary possesses the same optimized methods to simulate the circuits as the client has. Note that an adversary possessing more powerful classical methods for simulating circuits than expected can equivalently be modeled as an adversary with identical classical methods and larger computational power. We note that since the adversaries we analyze are only allowed a restricted set of strategies, the subsequent mathematical results hold only in this limited setting, conditioned on some additional assumptions further detailed in SM Sec. III C. To the best of our knowledge, the restricted set of classical and quantum adversary strategies considered here correspond to the current state of the art. We leave the incorporation of a broader class of adversaries to future analysis.

Crucially, the client needs to ensure that the circuits are difficult to simulate within the time $t_{\rm{threshold}}$. Otherwise, the server can use its classical supercomputer to deterministically simulate the circuits with high fidelity and generate samples that readily pass the tests in equation 2. For the family and size of circuits we consider, tensor network contraction is the most performant known method for both finite-fidelity and exact simulation \cite{qntm_rcs} as well as sampling. If a circuit has a verification (exact simulation) cost of $\mathcal{B}$ FLOPs, the adversary can simulate each circuit to a target fidelity of $\mathcal{A}\cdot t_{\rm{threshold}}/\mathcal{B}$ using partial contraction of tensor networks, for which the simulation cost and simulation fidelity are related linearly~\cite{markov2018quantum}. 
The protocol is successful only if the parameters are chosen such that the fidelity $\phi$ of an honest server satisfies
\begin{equation}
\phi \gg \mathcal{A}\cdot t_{\rm{threshold}}/\mathcal{B}.
\end{equation}
This condition requires that there exist a gap between the fidelity of an honest server and that achievable by an adversary performing mostly classical simulations. If this condition is satisfied, the {\rm XEB}{} score of an honest server will have a probability distribution with a higher average value than the probability distribution of the adversary's {\rm XEB}{} (qualitatively illustrated in Fig. 1b), allowing the client to distinguish between the two.

After certification (i.e., if the tests in equation 2 pass) the client uses a randomness extractor to process the $M$ samples. An ideal protocol for certified randomness either aborts, resulting in an ``abort state,'' or succeeds, resulting in a uniformly distributed bitstring that is uncorrelated with any side information.
Viewing the protocol as a channel acting on some initial state composed of both server and the client, an end-to-end protocol is said to be $\varepsilon_{\rm sou}$-sound if for any initial state, the end result is $\varepsilon_{\rm sou}$-close (in terms of trace distance) to the ideal output: a mixture of the ``abort state'' and the maximally mixed state (see SM Sec. III A for the rigorous definition of \textit{soundness}). 

The entropy that the client can extract out of the received samples upon successful execution of the protocol depends on how stringent its thresholds on the response time ($t_{\rm{threshold}}$) and the ${\rm XEB}$ score ($\chi$) are. It is in the client's interest to set these thresholds as stringently as possible, to force the hypothetical adversary to draw more samples from the quantum computer, while still allowing that an honest server can succeed with high probability. Since the thresholds are known to both parties, the adversary's strategy is to minimize the use of the quantum computer while ensuring that the protocol does not abort. Based on the protocol thresholds, the client can determine the number of quantum samples $Q_{\rm min}$ such that the protocol aborts with a large probability $1-\varepsilon_{\rm accept}$ if the adversary returns fewer than $Q_{\rm min}$ samples from the quantum computer (see SM Sec. IV F for details). Such a lower bound on $Q_{\rm min}$ can be used to derive the minimum smooth min-entropy of the received samples. Note that the smooth min-entropy of an information source characterizes the number of random bits that can be extracted from the source. In particular, we devise an $\varepsilon_{\rm sou}$-sound protocol that provides a lower bound on the smooth min-entropy $H^{\varepsilon_{\rm s}}_{\rm min}$ (defined in SM Sec. III D) with smoothness parameter $\varepsilon_{\rm s}=\varepsilon_{\rm sou}/4$ and with $\varepsilon_{\rm accept}=\varepsilon_{\rm sou}$. The results in the paper are reported in terms of the soundness parameter $\varepsilon_{\rm sou}$ and the smooth min-entropy $H^{\varepsilon_{\rm s}}_{\rm min}$.

A smaller $\varepsilon_{\rm sou}$ makes a stronger security guarantee by making it more difficult for an adversary to pass the {\rm XEB}{} test with a small $Q_{\rm min}$. This may be achieved by choosing a higher threshold $\chi$. However, a higher threshold also makes it more likely for an honest server to fail the {\rm XEB}{} test, meaning that the honest server cannot be certified to have produced the target amount of extractable entropy. Note that this does not necessarily mean that the samples provided by the honest server do not contain entropy, only that they fail to satisfy the criteria of equation 2 and consequently the protocol aborts. In practice, it is desirable to ensure that an honest server fails only with a low failure probability $p_{\rm fail}$. To that end, we may compute a threshold $\chi(p_{\rm fail})$ corresponding to any acceptable $p_{\rm fail}$. This threshold, along with $t_{\rm{threshold}}$, then allows us to determine $Q_{\rm min}$ for a target soundness $\varepsilon_{\rm sou}$ (see SM Sec. III D). Fig. 2b illustrates the achievable $Q_{\rm min}$ at different $p_{\rm fail}$ and $\varepsilon_{\rm sou}$, showing the trade-off between the three quantities at fixed experimental configuration and adversary's classical computational power ($\phi,t_{\rm QC},M,m,\mathcal{B}$, and $\mathcal{A}$).

\begin{table*}[ht]
\begin{ruledtabular}
\begin{tabular}{clc}
Label & Meaning                  & Value \\[.1em]  \hline \\[-.6em]
$n$ & Number of qubits & $56$  \\[.1em]
$\mathcal{B}$ & Cost of simulating challenge circuits & $90 \times 10^{18}$ FLOPs  \\[.1em]
$\mathcal{A}$ & Sustained peak performance of the Frontier supercomputer & $0.897 \times 10^{18}$ FLOPS  \\[.1em]
\mbox{}  & Time to simulate challenge circuits on the Frontier supercomputer & $100.3$ s \\[.1em]
$\chi$ & Threshold for XEB test & 0.3 \\ [.1em]
$t_{\rm{threshold}}$ & Threshold for average time per sample & 2.2 s \\
[.1em]
$T_{b, \rm{cutoff}}$ & Cutoff time for the server to respond to a batch of $2b$ circuits  & $2.5 \times 2b$ s \\
[1em] 
 $M$ & Number of successful samples & $30,010$  \\[.1em]
$t_{\rm QC}$ & Average response time per successful quantum sample & $2.154$ s  \\[.1em]
$m$ & Number of samples used to measure {\rm XEB}{}  &  $1,522$  \\[.1em]
${\rm XEB}_{\rm test}$ & Measured {\rm XEB}{}  & $0.32$  \\[.1em]
\end{tabular}
\end{ruledtabular}
\caption{Summary of experimental parameters.} \label{tab:summary-of-parameters}
\end{table*}

We demonstrate our protocol using the Quantinuum H2-1 trapped-ion quantum processor accessed remotely over the internet. The experimental parameters are summarized in Table~\ref{tab:summary-of-parameters}. The challenge circuits (illustrated in Fig. 1c, see SM Sec. IV C for the considerations involved in choosing the circuits) have a fixed arrangement of 10 layers of entangling $U_{\rm ZZ}$ gates, each sandwiched between layers of pseudorandomly generated $SU(2)$ gates on all qubits. The arrangement of two-qubit gates is obtained via edge coloring on a random $n$-node graph. Preliminary mirror-benchmarking experiments, along with gate-counting arguments based on the measured fidelities of component operations, allow us to estimate the fidelity of an honest server \cite{qntm_rcs}. At the time of the experiment, the H2-1 quantum processor was expected to attain a fidelity of $\phi \gtrsim 0.3$ or better on depth-10 circuits (multiple improvements were made to the H2-1 device after collection of this experiment's data that slightly increased the fidelity estimate in Ref.~\cite{qntm_rcs}). Likewise, the same preliminary experiments also let us anticipate average time per sample to be approximately $2.1 \text{ s}$, with a long-tailed timing distribution out to just below $2.5 \text{ s}$, as also seen in the full experiment in Fig. 2c. Reasonable ($p_{\rm fail} = 50\%$) protocol success rates can therefore be achieved with thresholds $t_{\rm{threshold}} = 2.2$ s and $\chi = 0.3$. For illustrative purposes, we describe the experiment based on these choices (in practice, one might want to lower $p_{\rm fail}$ by setting $\chi$ somewhat below the expected value). The batch cutoff time is set to be $T_{b, \rm{cutoff}} = (2b) \cdot 2.5$ seconds, anticipating that the relatively small expected fraction of batches taking average time per sample between $t_{\rm{threshold}} = 2.2 \text{ s}$ and $2.5 \text{ s}$ would contribute additional entropy to the received samples while being unlikely to increase the average time per sample from the expected $2.1 \text{ s}$ past the threshold of $2.2 \text{ s}$.

\begin{table}[h]
    \centering
\begin{tabularx}{\columnwidth}{ |m{0.8 in}| *{5}{Y|} }
\cline{2-6}
   \multicolumn{1}{c|}{} 
 & \multicolumn{5}{m{2 in}|}{\centering $\mathcal{A}$ (multiples of Frontier)}\\
\hline
 \centering $\varepsilon_{\rm sou}$ & 1 & 2 & 4 & 6 & 8 \\
         \hline
     \centering $10^{-2}$ & 0.19 & 0.16 & 0.11 & 0.06 & 0.01\\\hline
     \centering $10^{-4}$ & 0.15 & 0.12 & 0.07 & 0.02 & 0.00\\\hline
     \centering $10^{-6}$ & 0.12 & 0.09 & \textbf{0.04} & 0.00 & 0.00\\\hline
     \centering $10^{-8}$ & 0.10 & 0.07 & 0.02 & 0.00 & 0.00\\\hline
     \centering $10^{-10}$ & 0.08 & 0.05 & 0.00 & 0.00 & 0.00\\
\hline
\end{tabularx}
    \caption{Smooth min-entropy rate at varying $\varepsilon_{\rm sou}$ and $\mathcal{A}$. The adversary is assumed to have the same efficiency for classical simulation as client verification. The ratio corresponding to entropy we report in the main text is boldfaced.}
    \label{tab:ratio}
\end{table}

The circuit family considered has a simulation cost of $\mathcal{B} = 90\times 10^{18}$ FLOPs on the Department of Energy's Frontier supercomputer~\cite{frontier}, the most powerful supercomputer in the world at the time of writing~\cite{top500}. Following a detailed estimate of runtime on Frontier, we determine an exact simulation time of $100.3$ seconds per circuit when utilizing the entire supercomputer at a ``numerical efficiency" of $45\%$, where numerical efficiency is the ratio between the actual algorithm runtime and its theoretical expectation. See SM Sec. IV A for details on the circuit simulation cost.

In our experiment we use two batch sizes, $b = 15$ and $b = 20$;  most of the batches have $b=15$. In total, we submitted $1,993$ batches for a total of $60,952$ circuits. From those, we obtain a total of $M=30,010$ valid samples out of  $984$ successful batches. The cumulative device time of the successful samples was $64,652$ seconds, giving an average time of $t_{\rm QC} = 2.154$ seconds per sample, inclusive of all overheads such as communication time. Fig. 2c shows the distribution of $t_{\rm QC}$ per successful sample.

\begin{figure*}[!ht]
    \centering
     \includegraphics[width=\textwidth]{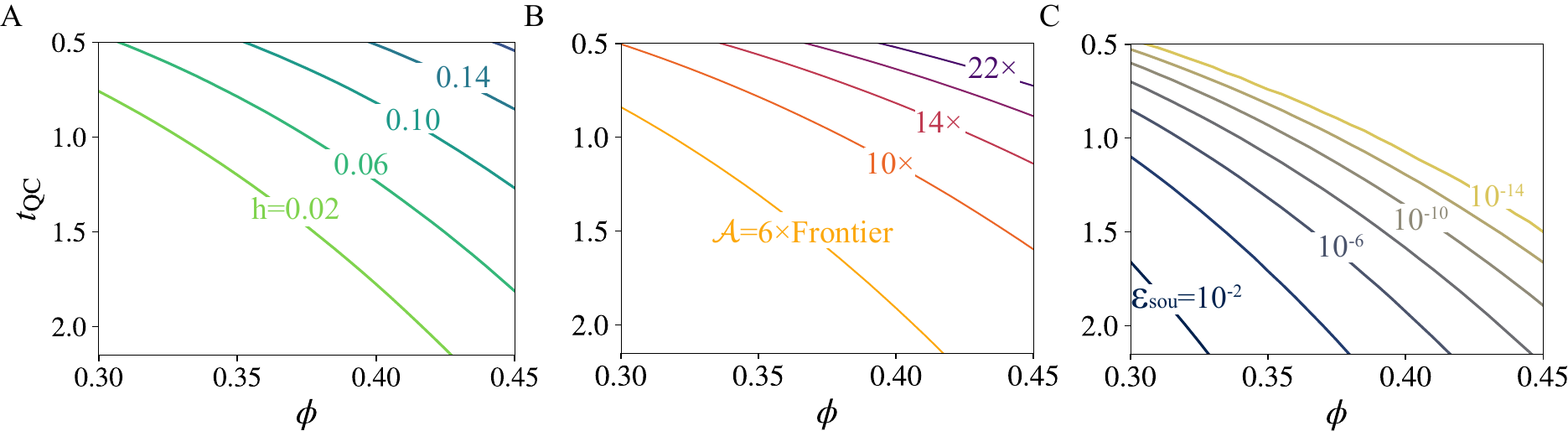}
    \caption{\textbf{Future improvements.} Improvement in metrics as fidelity $\phi$ and time per sample $t_{\rm QC}$ improve. All panels assume the same verification budget as this experiment, classical simulation numerical efficiency of $50\%$ for both verification and spoofing, and target failure probability $p_{\rm fail}=10^{-4}$. \textbf{a}, Smooth min-entropy rate, $h = H^{\varepsilon_{\rm s}}_{\rm min}/(M \cdot n)$, against an adversary four times as powerful as Frontier with $\varepsilon_{\rm sou} = 10^{-6}$ and $\varepsilon_{\rm s}=\varepsilon_{\rm sou}/4$. \textbf{b}, Adversarial power that still allows $h=0.01$ to be guaranteed with $\varepsilon_{\rm sou}=10^{-6}$. \textbf{c}, Soundness parameter $\varepsilon_{\rm sou}$ that still allows $h=0.01$ to be guaranteed with an adversary that is four times as powerful as Frontier.}
    \label{fig:future}
\end{figure*}

In this work the client's classical computational budget is spread across the Frontier \cite{frontier}, Summit \cite{summit}, Perlmutter \cite{perlmutter}, and Polaris \cite{polaris} supercomputers equipped with graphics processing units (GPUs), which are especially suitable for quantum circuit simulations. Of the four supercomputers, Frontier and Summit were utilized at full-machine scale during verification. We measure the sustained peak performance of 897 petaFLOPS and 228 petaFLOPS respectively (corresponding to numerical efficiencies of $45\%$ and $59\%$), achieving a combined performance of 1.1 exaFLOPS (see SM Sec. IV E). We compute the {\rm XEB}{} score for $m=1,522$ circuit-sample pairs, obtaining  ${\rm XEB}_{\rm test}=0.32$. The complete set of experimental parameters is listed in Table I.

The measured fidelity of ${\rm XEB}_{\rm test} = 0.32$ and measured time per sample $t_{\rm QC} = 2.154$ s pass the protocol specified by $\chi=0.3$ and $t_{\rm{threshold}}=2.2$ s. For a choice of soundness parameter $\varepsilon_{\rm sou}$ and a smoothness parameter $\varepsilon_{\rm s}=\varepsilon_{\rm sou}/4$, the protocol thresholds determine the number of quantum samples $Q$ and the smooth min-entropy $H^{\varepsilon_{\rm s}}_{\rm min}$ guaranteed by the success of this protocol against an adversary with classical resources bounded by $\mathcal{A}$. In Table II, we report the smooth min-entropy rate, $H^{\varepsilon_{\rm s}}_{\rm min}/(56\cdot M)$, for a range of $\mathcal{A}$ and $\varepsilon_{\rm sou}$ (see SM Sec. IV F for details of this calculation). This is to show that if we want to increase the security of the protocol either by increasing the assumed adversary's classical computational power or by reducing the soundness parameter, the amount of entropy that we can obtain must reduce.
In particular, we highlight that at $\varepsilon_{\rm sou}=10^{-6}$, we have $Q_{\rm min}=1,297$, corresponding to $H^{\varepsilon_{\rm s}}_{\rm min}=71,313$ against an adversary four times more powerful than Frontier (under the assumptions discussed earlier).

We feed the $56 \times 30,010$ raw bits into a Toeplitz randomness extractor~\cite{foreman2024cryptomite}
and extract $71,273$ bits 
(see SM Sec. IV F for details on extraction and the determination of extractable entropy). We note that the Toeplitz extractor is a ``strong'' seeded extractor for which the output is independent of the seed. For private use of the randomness, in which the extracted bits are not revealed, the extractor seed can be reused. We append the seed used in the extractor to the protocol output and do not count the seed as randomness ``consumed'' by our protocol.
The total input randomness used to seed the pseudorandom generator is thereby only 32 bits, and our protocol achieves certified randomness expansion. We further note that other extractors can be used that may consume less seed but have different security guarantees.

Future experiments are expected to improve device fidelity (higher $\phi$) and execution speed (lower $t_{\rm QC}$). Adjusting protocol thresholds ($\chi$ and $t_{\rm{threshold}}$) against improved device specifications stands to improve our protocol in terms of the achievable entropy, the adversarial computational power that can be guarded against, and the soundness parameter. Fig. 3 shows these metrics as we improve $t_{\rm QC}$ and $\phi$ (see SM Sec. V for details of this calculation). Conversely, for a fixed adversary and soundness parameter, any improvement in $t_{\rm QC}$ and $\phi$ reduces the verification budget required to certify a target number of quantum samples $Q$, making our protocol more cost-effective. Any improvement in entropy, all else being equal, translates into a higher throughput in the sense of a higher rate of entropy generation per second. With $\chi=0.3$ and $t_{\rm{threshold}}=2.2$ s, our experiment has a bitrate of $71,273 / (
30,010 \times 2.2~\text{s})\approx1$ bit per second at $\varepsilon_{\rm sou}=10^{-6}$. For $\varepsilon_{\rm sou}=10^{-6}$ and $p_{\rm fail}=0.1$, improving fidelity to $\phi=0.67$ and response time to $t_{\rm QC}=0.55$ seconds would let us achieve the bitrate of the NIST Public Randomness beacon \cite{kelsey2019a} (512 bits per minute). We note that improvement in $t_{\rm QC}$ can come from higher clock rates as well as parallelization over multiple quantum processors or over many qubits of one large quantum processor.     

The security of our protocol relies on the circuits being difficult to simulate. When better exact simulation techniques are developed by researchers in the future, both the adversary and the client can use the improved techniques to spoof and verify: such symmetric gains neutralize each other. While a significant improvement in approximate simulation techniques may benefit spoofing asymmetrically, the client might be able to neutralize those gains by modifying the ensemble of challenge circuits to make approximate simulations more difficult.

In summary, this work implements a protocol for certified randomness, which also lends itself to multiparty and public verification~\cite{applicationsCertRand}. We note that the bit rate and soundness parameter achieved by our experiment, the restricted adversarial model, as well as the numerous assumptions used in our analysis limit immediate deployment of the proposed protocol in production applications. However, we numerically analyze how future developments may improve the security and cost-effectiveness of our protocol.
Our experiments pave the way for new opportunities within cryptography and communication.

\section*{Acknowledgments}
We are grateful to Jamie Dimon, Daniel Pinto and Lori Beer for their executive support of JPMorganChase's Global Technology Applied Research Center and our work in Quantum Computing.
We thank the technical staff at JPMorganChase's Global Technology Applied Research Center for their invaluable contributions to this work. We are thankful to Johnnie Gray for helpful discussions on tensor network contraction path optimization using CoTenGra.  We acknowledge the entire Quantinuum team for their many contributions toward the successful operation of the H2 quantum computer with 56 qubits, and we acknowledge Honeywell for fabricating the trap used in this experiment. J.L., M.L., Y.A., and D.L.~acknowledge support from the U.S.~Department of Energy, Office of Science, under contract DE-AC02-06CH11357 at Argonne National Laboratory and the U.S.~Department of Energy, Office of Science, National Quantum Information Science Research Centers. S.A.~and S.H.~acknowledge the support from the U.S.~Department of Energy, Office of Science, National Quantum Information Science Research Centers, Quantum Systems Accelerator. T.H.~was supported by the U.S. Department of Energy, Office of Science, Advanced Scientific Computing Research program office under the quantum computing user program. This research used supporting resources at the Argonne and the Oak Ridge Leadership Computing Facilities. The Argonne Leadership Computing Facility at Argonne National Laboratory is supported by the Office of Science of the U.S.~DOE under Contract No.~DE-AC02-06CH11357. The Oak Ridge Leadership Computing Facility at the Oak Ridge National Laboratory is supported by the Office of Science of the U.S.~DOE under Contract No.~DE-AC05-00OR22725. This research used resources of the National Energy Research Scientific Computing Center (NERSC), a Department of Energy Office of Science User Facility using NERSC award DDR-ERCAP0030284.

\section*{Author Contributions}
M.P. and R.S. devised the project.
M.L., R.S., P.N., M.D., M.F.-F. designed the protocol implementation. M.L., R.S., P.N. implemented the code for circuit generation and client-server interaction. P.N. executed the experiments on the quantum computer and collected the data. M.L., R.S., P.N., A.A., J.L., D.L. implemented and benchmarked the tensor-network-based verification code. M.L. executed the verification on supercomputers and collected the data. Y.A. and T.S.H. provided support for supercomputer runs. M.L. and P.N. analyzed the data. M.L., R.S., P.N., S.C., S.-H. H., S.A. developed the complexity-theoretic analysis. M.L., R.S., P.N., W.-Y.K., E.C.-M., K.C., O.A., C.L. developed the main security analysis.
C.L., M.P., R.S., N.K., S.E., F.J.C. improved the adversarial model and enhanced its connection to applications. K.J.B., J.M.D., N.E., C.F., D.H., M.M., S.A.M., J.W., B.N., P.S. maintained, optimized, and operated the trapped-ion hardware and software stack. M.P. led the overall project as the lead principal investigator. All authors contributed to technical discussions and the writing and editing of the manuscript.

\section*{Competing Interests}

M.P., M.L., P.N. and R.S. are co-inventors on a patent application related to this work (no 18/625,605, filed on 3 Apr 2024 by JPMorganChase). The authors declare no other competing interests.

\section*{Additional Information}
Supplementary Information is available for this paper.

\section*{Data Availability}

The full data presented in this work is available at  \url{https://doi.org/10.5281/zenodo.12952178}.

\section*{Code Availability}

The code required to verify and reproduce the results presented in this work is available at \url{https://doi.org/10.5281/zenodo.12952178}.

\section*{Disclaimer}
This paper was prepared for informational purposes with contributions from the Global Technology Applied Research center of JPMorgan Chase \& Co. This paper is not a product of the Research Department of JPMorgan Chase \& Co. or its affiliates. Neither JPMorgan Chase \& Co. nor any of its affiliates makes any explicit or implied representation or warranty and none of them accept any liability in connection with this paper, including, without limitation, with respect to the completeness, accuracy, or reliability of the information contained herein and the potential legal, compliance, tax, or accounting effects thereof. This document is not intended as investment research or investment advice, or as a recommendation, offer, or solicitation for the purchase or sale of any security, financial instrument, financial product or service, or to be used in any way for evaluating the merits of participating in any transaction.

The submitted manuscript includes contributions from 
UChicago Argonne, LLC, Operator of Argonne National Laboratory (``Argonne'').
Argonne, a U.S.\ Department of Energy Office of Science laboratory, is operated
under Contract No.\ DE-AC02-06CH11357.  The U.S.\ Government retains for itself,
and others acting on its behalf, a paid-up nonexclusive, irrevocable worldwide
license in said article to reproduce, prepare derivative works, distribute
copies to the public, and perform publicly and display publicly, by or on
behalf of the Government.  The Department of Energy will provide public access
to these results of federally sponsored research in accordance with the DOE
Public Access Plan \url{http://energy.gov/downloads/doe-public-access-plan}.

\clearpage

\section{Methods}
\label{sec:methods}

The goal of the certified randomness protocol is to achieve the two properties:
\begin{enumerate}
    \item \textbf{Randomness Certification}: Outputs generated by the protocol should be close to unpredictable and uniformly distributed, uncorrelated with any side information the client, server, and the environment might possess.
    \item \textbf{Randomness Expansion}: The entropy the client certifies in the protocol should be larger than the entropy it consumes in generating the circuits and selecting the set for validation.
\end{enumerate}

The $M$ bitstrings received from the server, which we denote as $X^M$, do not directly satisfy the randomness certification requirement since they are not uniformly distributed. They are passed to a randomness extractor $Ext$ along with an extractor seed $K_{\rm ext}$, which is private to the client, in order to obtain the final output bits $K$ that are uniformly distributed along with some side information. The possible side information we consider is any classical information possessed by the client, the server, and the environment prior to the start of the protocol, and we denote this ``snapshot" of initial classical information as $I_{\rm sn}$. This snapshot includes any initial randomness possessed by the client or the server.

An ideal randomness certification protocol outputs a string of bits (in the register $K$) that is uniformly random and independent of $I_{\rm sn}$. That is to say, the ideal output of a successful randomness certification protocol is precisely $\tau_K\otimes\rho_{I_{\rm sn}}$, where $\tau_K$ is a maximally mixed state and $\rho_{I_{\rm sn}}$ is the quantum state representing any side information. In the event that the protocol aborts, the output is expected to be some ``abort state." We quantify the security or \emph{soundness} of our protocol by the closeness (as given by a trace distance) between the ideal output and the actual output produced by the protocol. Since a lower bound to the smooth min-entropy of the $M$ raw samples returned by the server suffices to guarantee soundness via the use of randomness extractors, we present our main result in terms of bounds on the smooth min-entropy of the returned samples.

\subsection{Protocol Details}\label{sec:protocol}

Our primary objective in the protocol design is to minimize the time between the client submitting a quantum circuit and receiving the corresponding bitstring. As a result, our protocol is designed to mitigate the following experimental considerations:
\begin{enumerate}
    \item There is a significant latency due to network communication and the time to load a circuit into the quantum device controls. Furthermore, there is also overhead associated with executing a circuit. To ameliorate this, instead of submitting circuits one at a time, we group the circuits into batches of $15$ or $20$ jobs, with each job consisting of two circuits joined by a layer of mid-circuit measurements and reset. Each batch of size $b$, therefore, consists of $2b$ circuits. 
    \item There is downtime associated with the device, such as during periodic calibrations. Before submitting a batch, a client probes the machine for readiness using a predetermined precheck circuit $C_{\rm precheck}$. This circuit announces the client's intent to submit a batch of circuits and triggers any server-side maintenance if necessary. 
    \item To ensure that the device does not stall and to keep the average time per sample low, we demand that the entire batch be returned within a cutoff time $ 2.5 \times 2b$ seconds. If the entire batch is not received within this cutoff time, we cancel all outstanding jobs in the batch, and we discard all bitstrings received from this batch. 
\end{enumerate}

To formally describe our experimental protocol with all details accurately represented (including details on challenge circuits generation and randomness extraction), we present the protocol below.

\subsubsection{Protocol Arguments}
\begin{align*}
  n \in \mathbb{N}~&:~\text{Number of qubits}\\
  d \in \mathbb{N}~&:~\text{Circuit depth}\\
  M \in \mathbb{N}~&:~\text{Total number of samples}\\
  b \in \mathbb{N}~&:~\text{Batch size}\\
  m \in \mathbb{N}~&:~\text{Test set size}\\
  K_{\rm seed}\in\{0, 1\}^r ~&:~\text{Random bitstring that is private}\\
  &~~~\text{to client}\\
  T_{b, \rm{cutoff}} ~&:~\text{Round-trip communication}\\
  &~~~\text{time threshold between the client and }\\
  &~~~\text{the server for a batch}\\
  t_{\rm{threshold}}~&:~\text{Threshold on the overall average }\\
  &~~~\text{time-per-sample}\\
  \chi~&:~\text{Threshold for the XEB test}\\
  Ext:~&:~(\kappa,\varepsilon_{\rm ext})\text{-Quantum-proof strong}\\
  &~~~\text{extractor (See SM~Definition 4)}\\
  K_{\rm ext} \in \{0,1\}^s~&:~\text{A random seed for the extractor} \\ 
  C_{\rm precheck} ~&:~\text{A predetermined ``precheck" instruction}\\
  &~~~\text{used to announce the client's readiness}\\
  &~~~\text{to submit a batch}
  \end{align*}

\subsubsection{Protocol Steps}
  \begin{enumerate}
  \item Set the samples collected $\mathcal{M}_{\text{keep}} = \emptyset$.
  \item Set $i=0, T_{\rm tot} = 0$.
   \item Initialize a pseudorandom generator with an $r$-bit seed $K_{\rm seed}$.
    \item %
      While $|\mathcal{M}_{\text{keep}}| < M$, run the following steps:
  \begin{enumerate}
  \item \textbf{Challenge Circuit Generation Subroutine: }The client generates each of the circuits $\{C_{i \cdot 2b+k}\}_{k=1}^{2b}$ as follows.
  \begin{enumerate}
    \item Initialize an empty circuit on $n$ qubits.
    \item For $j=1,\ldots,d$, run the following steps:
          \begin{enumerate}
              \item Sample $n$ SU(2) gates using the seeded pesudorandom genenerator and apply them to all $n$ qubits. 
              \item Apply the two-qubit gates corresponding to layer $T_j$ of the chosen edge-colored circuit topology.
          \end{enumerate}
  \item Sample $n$ SU(2) gates using the seeded pseudorandom generator and apply them to all $n$ qubits. 
  \end{enumerate}
  \item \textbf{Precheck: } The client submits the precheck circuit $C_{\rm precheck}$ and waits for a response. 
  \item \textbf{Client-Server Interaction Subroutine: }
  \begin{enumerate}
  \item  Start a timer. 
  \item The client submits the batch of circuits $\{C_{i \cdot 2b+k}\}_{k=1}^{2b}$ to the server. 
  \item The server responds with a batch of $2b$ bitstrings $\{x_{i \cdot 2b+k}\}_{k=1}^{2b}$.
  \item Stop the timer. Record interaction time $T_{b}$. 
  \item \textbf{Time Out Scenario: }If $T_{\rm b} > T_{b, \rm{cutoff}} $, then \textbf{discard} the batch.
  \item If the batch is not discarded, then client computes $\mathcal{M}_{\text{keep}} = \mathcal{M}_{\text{keep}} \bigcup_{k=1}^{2b} \{x_{i \cdot 2b+k}\}$ and accumulates the time $T_{\rm tot} = T_{\rm tot} + T_{\rm b}$.
  \item Client increments the counter, $i = i + 1$.
  \end{enumerate}
  \end{enumerate} 
  \item \textbf{Abort Condition 1: }If $T_{\rm tot}/|\mathcal{M}_{\text{keep}}| > t_{\rm{threshold}}$, then abort the protocol.
  \item \textbf{XEB Score Verification Subroutine: } 
  \begin{enumerate}
    \item \textbf{Test Set Construction: }
    The client samples a subset $\mathcal{V}$ of size $m$ randomly from $\mathcal{M}_{\text{keep}}$ using the seeded pseudorandom generator.
  \item Compute the score ${\rm XEB}_{\rm test} = \left((2^n/m) \cdot \sum_{j \in \mathcal{V}} |\langle x_j | C_j | 0\rangle|^2\right) - 1$.
  \item \textbf{Abort Condition 2: } If ${\rm XEB}_{\rm test}  < \chi$ then abort the protocol.
  \end{enumerate}
   \item If not abort, the client feeds the $M$ samples $x_1,\ldots , x_M$ together with the random seed $K_{\rm ext}$ to the extractor $Ext$. 
  \end{enumerate}
  Output: Conditioned on the protocol not aborting, the protocol returns $Ext(K_{\rm ext},(x_1,\ldots,x_M))$ as the final bitstring.

\subsection{Protocol Security}

Our primary theoretical contribution is the security of the implemented protocol against a restricted adversary. Our adversarial model considers realistic and near-term adversaries using best-known strategies (See SM Sec. III C for details). In brief, our adversary has a bounded classical computer and a quantum computer and uses both to generate the samples. Specifically, we make the following key assumptions about the adversary (further elaborated in SM Sec. III C):
\begin{enumerate}
    \item The server does not perform any postselection attacks; that is, the $M$ detected rounds in the protocol are a fair representation of the adversary behavior;
    \item Of the $M$ valid samples, the server \textit{a priori} selects $Q$ rounds for which it honestly returns samples by executing the challenge circuit on the quantum computer. For the remaining $M-Q$ samples, it returns deterministic samples obtained by simulating the circuits on a powerful classical computer (of power $\mathcal{A}$, measured in terms of number of floating point operations per second);
    \item For each of the $Q$ quantum rounds, it only interacts with the quantum computer once 
    (it does not attempt to oversample a circuit); 
\end{enumerate}
In practice, these assumptions are likely stronger than necessary; we leave adaptation of the formal cryptographic protocol for a relaxed set of assumptions to future work. 

To prove the security of the protocol, we prove a lower bound to the smooth min-entropy $H_{\min}^{\varepsilon_{\rm s}}(X^M|\tilde{I}_{\rm sn})$ of the bits before the extractor given this adversary, where $\tilde{I}_{\rm sn}$ is the initial snapshot of side information minus the randomness extractor seed. In order to do so, we first provide a bound on the probability that the server executing a fixed number $Q$ of quantum rounds passes the ${\rm XEB}$ test with threshold $\chi$ (see SM Sec. III D). We denote the event where the protocol does not abort as $\Omega$, the probability of not aborting as $\Pr[\Omega]$, and the upper bound on the probability as $\varepsilon_{\rm adv}(Q,\chi)$.

Now, given a target not-abort probability $\varepsilon_{\rm accept} = 4\varepsilon_{\rm s}$ (for an $\varepsilon_{\rm sou}$-sound protocol, $\varepsilon_{\rm accept} = 4\varepsilon_{\rm s} = \varepsilon_{\rm sou}$), the upper bound to $\Pr[\Omega]$ allows us to compute $Q_{\rm min}=\min\{Q\,:\,\varepsilon_{\rm adv}(Q,\chi)\geq4\varepsilon_{\rm s}\}$, which represents the minimum number of quantum rounds that the server needs to perform for the protocol to not abort with probability $4\varepsilon_{\rm s}$. Given $Q_{\rm min}$, we bound the smooth min-entropy of the samples $X^M$ given classical side information $\tilde{I}_{\rm sn}$ via the following theorem.

\vspace{1em}

\noindent\textbf{Theorem 1} \textit{Let $\Omega$ denote the event where the randomness certification protocol in Sec.~\ref{sec:protocol} does not abort and let $\sigma$ be the state over registers $X^M$ and $\tilde{I}_{\rm sn}$. Given $\varepsilon_{\rm s}\in(0,1/4)$, the protocol either aborts with probability greater than $1-4\varepsilon_{\rm s}$ or}
\begin{align}
    H_{\min}^{\varepsilon_{\rm s}}(X^M|\tilde{I}_{\rm sn})\geq Q_{\rm min}(n-1) + \log \varepsilon_{\rm s},
\end{align}
\textit{where $Q_{\rm min}=\rm{arg}\: \min_Q\{\varepsilon_{\rm adv}(Q,\chi)\geq 4\varepsilon_{\rm s}\}$ and $\varepsilon_{\rm adv}(Q,\chi)$ is the upper bound to $\Pr(\Omega)$.}

\end{document}


\title{Supplemental material for:\\ Certified Randomness Using a Trapped-Ion Quantum Processor}

\maketitle

\tableofcontents

\newpage

\section{Motivation: Complexity-Theoretic Security}\label{sec:complexity-theoretic-security}

A certified randomness expansion protocol must generate randomness that is secure against potential adversaries. The adversary receiving the quantum circuits could be using strategies that provide less entropy than the client desires (or no entropy at all). Ref.~\cite{aaronson2023certified} proves that if a device generates outputs that pass the \XEB{} test with high enough probability, it must generate $\Omega(n)$ bits of entropy. However, Ref.~\cite{aaronson2023certified} considers \XEB{} tests for $k$ samples generated from a fixed circuit, whereas our setting considers \XEB{} tests for $k$ samples generated from $k$ distinct circuits because we request one sample per circuit. In the remainder of this section we extend the security analysis of Ref.~\cite{aaronson2023certified} to our setting. In some cases we reference theorems stated in the extended arXiv version of Ref.~\cite{aaronson2023certified}, which is Ref.~\cite{aaronson2023certifiedArxiv}. The notation in this section aims to be consistent with \cite{aaronson2023certified,aaronson2023certifiedArxiv} and differs from that used in subsequent sections. Readers who are uninterested in complexity-theoretic motivations can skip to other sections without issue.

This argument builds on the LLHA (Long-List Hardness Assumption) conjecture, which was stated and proven for a random oracle model in Ref.~\cite{aaronson2023certified}. To state the conjecture, we first define the following problem: 

\begin{definition}[Long List Quantum Supremacy Verification LLQSV$(\D)$, restated from Problem 2 of \cite{aaronson2023certified}] We are given oracle access to $M = O(2^{3n})$ quantum circuits $C_1, \ldots, C_M$, each on $n$ qubits, which are promised to be drawn independently from the distribution $\U$. We are also given oracle access to $M$ strings $s_1, \ldots, s_M \in \{0,1\}^{n}$. The LLQSV$(\D)$ task is to distinguish the following two cases:
\begin{enumerate}
    \item \textbf{No-Case}: Each $s_i$ is sampled uniformly from $\{0, 1\}^{n}$.
    \item \textbf{Yes-Case}: Each $s_i$ is sampled from $p_{C_i}$, the output distribution of $C_i$.
\end{enumerate}
We denote the yes-case condition as $\vec s\sim p_{\vec C}$ and the no-case condition as $\vec s\sim\U^M$.
\label{def:llqsv}
\end{definition}

The LLHA conjecture is defined with respect to the above problem and a parameter $B$:
\begin{definition}[Long List Hardness Assumption $\text{LLHA}_{B}(\D)$, restated from Eq.~3 of \cite{aaronson2023certified} and Assumption 5.1 of \cite{aaronson2023certifiedArxiv}] For an $n$-qubit quantum circuit distribution $\D$ and some parameter $B<n$, denoted as ${\rm LLHA}_B(\D)$, it holds that 
\begin{equation}
        {\rm LLQSV}(\D) \notin \textup{\textsf{QCAMTIME}}(2^Bn^{O(1)})/q(2^Bn^{O(1)}),
    \end{equation}
    \label{def:llha}
    where $q$ denotes quantum advice.
\end{definition}
Here, $\textsf{QCAM}$ (Quantum Classical Arthur--Merlin) is a class of problems that admit an Arthur-Merlin protocol with classical communication and a quantum verifier.  $\QCAMTIME(T)/q(A)$ is the generalization of $\textsf{QCAM}$, where the verifier can use running time $T$ and receives $A$ bits of quantum advice that depend only on $n$. To justify this conjecture, Ref.~\cite{aaronson2023certified} shows that LLHA holds in the random oracle model. This serves as evidence that ${\rm LLHA}$ might hold for random circuit sampling. %

The LLHA conjecture can be used to show that any algorithm %
that achieves a high cross-entropy benchmark score with a sufficiently high probability must necessarily generate entropy. In particular, we can define what we mean by ``achieving a high cross-entropy benchmarking score." 
\begin{definition}[Linear Cross-Entropy Benchmarking $\text{LXEB}_{b,k}(\D)$, restated from Problem 1 of \cite{aaronson2023certified}] Let $\D$ be a probability distribution over $n$-qubit quantum circuits. The ${\rm LXEB}_{b,k}(\D)$ problem is to perform the following task \cite{Arute2019}: given $C\sim\D$, output a set of $k$ distinct bitstrings $\{z_1,\dots,z_k\}$ such that
    \begin{equation}
        \sum_{i=1}^k p_C(z_i) \geq \frac{bk}{N},
    \end{equation}
    where $p_C(z_i)$ is the probability that the output of quantum circuit $C$ is $z_i$, and $N=2^n$.
\label{def:lxeb}
\end{definition}
In other words, the $\text{LXEB}_{b,k}$ problem amounts to drawing samples that achieve a cross-entropy benchmarking score, over $k$ samples of a \textit{single} circuit $C$, higher than some threshold $b$. We remark that here we add the requirement that the bitstrings must be distinct to avoid the situation where an adversary repeats a single large probability bitstring to get a high score. The presence of entropy in such samples is guaranteed by the following theorem.
\begin{theorem}[$\text{LXEB}_{b,k}$ ensures von Neumann entropy, restated from Theorem 5.10 of \cite{aaronson2023certifiedArxiv}]\label{thm:lxeb_entropy} For integer $n$, assume that ${\rm LLHA}_{B}(\D)$ holds for distribution $\D$ over circuits acting on $n$ qubits. Then for any device which on input of
a circuit $C\sim \D$ outputs a classical state $Z$ over $\{0,1\}^{nk}$ ($k$ bitstrings, each of length $n$) solving ${\rm LXEB}_{b,k}$ with probability $q$, it holds that
    \begin{equation}
        H(Z\vert C)\geq\frac{B}{2}\left(\frac{bq-1}{b-1}-n^{-\omega(1)}\right).
    \end{equation}
\end{theorem}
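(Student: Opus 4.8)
The plan is to prove the contrapositive through a reduction to LLQSV: assuming the device solves $\LXEB_{b,k}(\D)$ with probability $q$ while producing output with conditional entropy below the stated bound, I would construct a $\QCAMTIME(2^B n^{O(1)})/q(2^B n^{O(1)})$ algorithm for $\LLQSV(\D)$, contradicting $\mathrm{LLHA}_B(\D)$. The conceptual engine is that a quantum Arthur--Merlin verifier cannot itself estimate the probabilities $p_{C}(s)$ quickly enough to solve LLQSV, but a device solving LXEB supplies exactly this sampling power; if that device has low conditional entropy, its behavior compresses into a short quantum advice state, and that is what breaks the assumed hardness.

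First I would isolate the statistical signal that LXEB provides for LLQSV. For a uniform string one has $\Exp[N\, p_C(s)] = 1$, whereas for $s\sim p_C$ the Porter--Thomas statistics give $\Exp[N\, p_C(s)] \approx 2$; thus strings with large normalized probability discriminate the Yes-case from the No-case. A device that outputs $k$ distinct strings of average normalized probability at least $b$ with probability $q$ therefore yields an accept/reject test whose bias over the trivial baseline, after subtracting the background score of $1$ and rescaling by the available range $b-1$, is proportional to $\tfrac{bq-1}{b-1}$. This is the origin of that ratio: it is precisely the usable distinguishing advantage extractable from an LXEB-passing output.

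Next I would convert the entropy assumption into an advice budget. The conditional entropy $H(Z\mid C)$ controls the description length of the device's output on a typical circuit, so by a quantum typical-subspace compression I would encode the device's predictive distribution into a quantum advice state of dimension $\approx 2^{H(Z\mid C)}$, with the QCAM verifier running in comparable time: Merlin asserts the device's output for a circuit sampled from the LLQSV list, and Arthur checks it against the advice and against a quantum evaluation of $p_{C}$, inheriting the bias from the previous step and amplifying it over the $M=O(2^{3n})$ circuits. If $H(Z\mid C) < \tfrac{B}{2}\big(\tfrac{bq-1}{b-1} - n^{-\omega(1)}\big)$, both the advice length and the running time fall below $2^B n^{O(1)}$, so $\LLQSV(\D)$ lands inside the forbidden class and $\mathrm{LLHA}_B(\D)$ is violated; the $n^{-\omega(1)}$ slack absorbs the Porter--Thomas approximation error and the typicality failure probability.

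The hard part will be this last conversion. The entropy $H(Z\mid C)$ is an average-case quantity, but the advice must be a single worst-case quantum state, and the factor of $\tfrac{1}{2}$ in the coefficient $\tfrac{B}{2}$ has to be extracted from the simultaneous cost of \emph{storing} the compressed sampler and \emph{using} it within the QCAM verification. I expect the delicate balancing to lie in amplifying the per-circuit advantage across the $2^{3n}$ instances so that the global success probability concentrates, without letting the advice dimension or the verifier's running time exceed the $2^B n^{O(1)}$ budget; tracking these competing requirements, rather than the statistical estimate of the previous paragraph, is where the real work concentrates.
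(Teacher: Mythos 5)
Your high-level frame---prove the contrapositive by turning a low-entropy $\LXEB_{b,k}$-solver into a QCAM protocol for LLQSV, contradicting ${\rm LLHA}_B(\D)$---is exactly the paper's strategy (its \Cref{theorem:lxeb_solves_llqsv}). But both mechanisms you propose for the reduction fail, and the machinery that actually makes it work is absent. Your distinguishing signal asks Arthur to compare $N\,p_C(s)$ against the baselines $1$ (No-case) and $\approx 2$ (Yes-case), which requires ``a quantum evaluation of $p_C$''. Arthur cannot do this: estimating $p_C(s)$ to the required additive precision $O(1/N)$ takes on the order of $N=2^n$ executions of $C$, while the verifier's budget is $2^B n^{O(1)}$ with $B<n$ (indeed $B\le n/2$). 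Note also that the device takes only $C$ as input, so its outputs are identically distributed in the Yes- and No-cases; any distinguisher must couple the device's outputs to the listed strings $s$, and the only coupling Arthur can afford is string equality. That is precisely the paper's signal: the \emph{collision} event $s\in\vec z$ between the listed string and the device's $k$ outputs, whose probability conditioned on passing LXEB is boosted by roughly a factor $b$ in the Yes-case (the LXEB analogue of \Cref{lemma:collision}, feeding the gap bound of \Cref{lemma:original_gap}). No output probability of $C$ is ever computed anywhere in the protocol.

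Second, converting the entropy hypothesis into a quantum advice state of dimension $\approx 2^{H(Z|C)}$ is not how the entropy enters, and it does not land in the right class: the target is $\QCAMTIME(2^Bn^{O(1)})/O(n)$ with $O(n)$ bits of \emph{classical} advice (the protocol thresholds $\tau,T,t,R$), and Arthur needs no compressed copy of the device because the device runs in quantum polynomial time and Arthur simply simulates it. The entropy bound is used instead via Markov's inequality (\Cref{lemma:heavy_hitter_prob}): $H(Z|C)<\frac{B}{2}\bigl(\frac{bq-1-\varepsilon}{b-1}\bigr)$ forces the device to output ``heavy hitters''---strings with device-output probability at least $\tau/2^{B/2}$---with probability exceeding $\frac{b(1-q)+\varepsilon}{b-1}$. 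That threshold $2^{-B/2}$ is the sole origin of the factor $B/2$ in the theorem, and it is what makes heavy hitters certifiable by drawing $2^Bn^{O(1)}$ samples from the device. Finally, you gesture at ``amplifying the per-circuit advantage over the $2^{3n}$ instances'' but give no mechanism. Since the per-instance quantities $\mu_0,\mu_1$ are themselves exponentially small ($\Theta(k/N)$), the amplification cannot be done instance by instance; the paper closes this gap with the Goldwasser--Sipser approximate-counting protocol, a hash-based Arthur--Merlin exchange in which Merlin exhibits an index in the accepting set, so that the multiplicative $(1+\Omega(\varepsilon))$ gap in accepting-set sizes between the two cases becomes, after $(1/\varepsilon)^{O(1)}$-fold parallel repetition, a constant gap. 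Without the hashing step there is no Arthur--Merlin protocol at all, and without the collision signal and the Markov/heavy-hitter step there is nothing for it to count.
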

While the theorem above guarantees von Neumann entropy $H(Z\vert C)$ in samples that pass the $\text{LXEB}_{b,k}$ problem, note that the statistic we use in the main text is slightly different from the statistic in \Cref{def:lxeb}. In particular, whereas $\text{LXEB}_{b,k}$ is defined with respect to many samples corresponding to the same circuit, the statistic we use is defined over a set of circuits. We define the modified problem in the same style as $\text{LXEB}_{b,k}$:
\begin{definition}[Mixed Linear Cross-Entropy Benchmarking $\text{MLXEB}_{b,k}(\D)$] Let $\D$ be a probability distribution over quantum circuits on $n$ qubits. Then the $\text{MLXEB}_{b,k}(\D)$ problem is as follows: given $\vec{C}$ with $|\vec{C}|=k$ and with each $C_i$ drawn from $\D$, 
output samples $z_1, \ldots, z_k \in \{0, 1\}^{n}$ such that
\begin{eqnarray}
    \sum_{i=1}^k p_{C_i}(z_i) \geq \frac{bk}{N}.
    \label{eq:mlxeb}
\end{eqnarray}
\label{def:mlxeb}
\end{definition}
Our main result for this section is a proof ensuring entropy for an algorithm that solves the $\text{MLXEB}_{b,k}(\D)$ problem. 
\begin{theorem}[${\rm MLXEB}_{b,k}$ ensures von Neumann entropy.]\label{theorem:mlxeb_entropy}
    For integer $n$, assume that ${\rm LLHA}_{B}(\D)$ holds for distribution $\D$ over circuits acting on $n$ qubits. Then for any device 
    that on input of a set of $k$ independently sampled circuits $\vec C$ with $\vec C\sim \D^{k}$ outputs a classical state $Z$ over $\{0,1\}^{nk}$ ($k$ bitstrings, each of length $n$) solving ${\rm MLXEB}_{b,k}$ with probability $q$, it holds that
    \begin{equation}
        H(Z\vert \vec C)\geq\frac{B}{2}\left(\frac{bq-1}{b-1}-n^{-\omega(1)}\right).
        \label{eq:entropy-mlxeb}
    \end{equation}
\end{theorem}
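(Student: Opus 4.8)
The plan is to mirror the proof of \Cref{thm:lxeb_entropy} (Theorem~5.10 of \cite{aaronson2023certifiedArxiv}), exploiting the fact that both entropy bounds are identical in form and both are ultimately consequences of the hardness of LLQSV. The key structural observation is that LLQSV (\Cref{def:llqsv}) is phrased in terms of a \emph{list} of independently drawn circuits $C_1,\dots,C_M$ together with one string $s_i$ per circuit, which matches the list structure of MLXEB (\Cref{def:mlxeb}) rather than the single-circuit/many-sample structure of LXEB. I therefore expect the reduction to LLQSV to be a more direct instance of the same argument: the MLXEB device already consumes a circuit list $\vec C$, so it can be fed the LLQSV circuits essentially verbatim, whereas the proof of \Cref{thm:lxeb_entropy} must first bridge a single circuit to the list structure of LLQSV.

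I would argue by contradiction. Suppose a device solves ${\rm MLXEB}_{b,k}$ with probability $q$ yet produces output $Z$ violating \eq{entropy-mlxeb}, so that $H(Z\vert\vec C)$ is strictly below the claimed bound. Low conditional (von Neumann) entropy means that, for typical $\vec C\sim\D^{k}$, the output $(z_1,\dots,z_k)$ is concentrated, and I would use this to compress the device's output behaviour into quantum advice of length $2^{B}n^{O(1)}$ depending only on $n$. I would then build a $\QCAMTIME(2^{B}n^{O(1)})/q(2^{B}n^{O(1)})$ distinguisher for LLQSV: partition the $M=O(2^{3n})$ LLQSV circuits into $M/k$ blocks of $k$, run the compressed device on each block to obtain a predicted heavy output $z_i$ for every circuit $C_i$, and count the per-coordinate coincidences $s_i=z_i$. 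In the Yes-case $s_i\sim p_{C_i}$, so the match probability for coordinate $i$ equals $p_{C_i}(z_i)$, which the MLXEB guarantee \eq{mlxeb} forces to average at least $b/N$ over a solved block; in the No-case $s_i$ is uniform and the match probability is exactly $1/N$. Since $M/N=\Theta(2^{2n})$ is large, the aggregated coincidence count concentrates, and the multiplicative gap of order $b$ yields a distinguishing advantage, contradicting ${\rm LLHA}_{B}(\D)$.

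To recover the precise coefficient $\tfrac{B}{2}\big(\tfrac{bq-1}{b-1}-n^{-\omega(1)}\big)$ I would track the advantage as a function of $q$ and $b$ exactly as in \Cref{thm:lxeb_entropy}: the factor $\tfrac{bq-1}{b-1}$ records the normalized fraction of useful heavy coincidences (vanishing precisely at $bq=1$, where no entropy is certified), while the $B/2$ reflects the advice and time budget permitted by ${\rm LLHA}_{B}$. Because $\vec C\sim\D^{k}$ makes the coordinates independent and the MLXEB score is an additive sum over the list, the averaging and union-bound steps that generate these constants should carry over coordinate-by-coordinate, with the $n^{-\omega(1)}$ slack absorbing the anticoncentration error and the failure probability of the compression.

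The step I expect to be the main obstacle is this compression: the advice may depend only on $n$, yet it must enable \emph{circuit-specific} prediction of the heavy output $z_i$, and it must be controlled by a bound on the \emph{von Neumann} entropy $H(Z\vert\vec C)$ rather than a min-entropy. This is exactly the delicate point in the proof of \Cref{thm:lxeb_entropy}, and I would resolve it by reusing that theorem's compression lemma with the single circuit $C$ replaced throughout by the list $\vec C$, verifying only that (i) the list structure does not inflate the advice beyond $2^{B}n^{O(1)}$ and (ii) the coincidence test remains computable by a quantum verifier in time $2^{B}n^{O(1)}$. Given the structural match between MLXEB and LLQSV, I expect these two checks to be routine, so that the substantive content of the bound is inherited almost directly from \Cref{thm:lxeb_entropy}.
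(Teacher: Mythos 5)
Your high-level skeleton coincides with the paper's: assume the entropy bound fails, turn the device into a distinguisher for ${\rm LLQSV}(\D)$, and contradict ${\rm LLHA}_B(\D)$ (this is the paper's \Cref{theorem:mlxeb_solves_llqsv}). But the distinguisher you describe cannot be implemented within the resources that LLHA allows, and this is not a technicality — it is where all the real work lies. You propose to run the device on every one of the $M/k=\Omega(2^{3n}/k)$ blocks and tally coincidences; with $k=n^{O(1)}$ this costs $\Omega(2^{3n}/n^{O(1)})$ device executions, while Arthur's budget is $2^Bn^{O(1)}$ with $B<n$ (and $B\le n/2$ in the meaningful regime). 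Subsampling blocks does not rescue it: the per-block coincidence probabilities are $\Theta(k/N)$ in both the yes- and no-cases, so resolving a constant multiplicative gap by direct estimation requires $\Omega(N/k)$ device runs, again over budget. The paper never has Arthur compute the count at all: Merlin certifies a set size via the Goldwasser--Sipser hashing protocol, and Arthur verifies only the single index Merlin returns (\Cref{lemma:concentration} together with Lemma 5.2 of \cite{aaronson2023certifiedArxiv}). A useful sanity check that your sketch must be missing something: your distinguishing statistic never uses the low-entropy hypothesis (you invoke it only for an auxiliary ``compression''), and an honest RCS sampler is itself a polynomial-time quantum device whose yes/no match rates are separated by a constant factor; if your protocol fit the time budget, it would refute ${\rm LLHA}$ unconditionally rather than prove the theorem.

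The second missing idea is what makes single-index verification possible, and it is exactly the point the paper flags as broken in the original Aaronson--Hung argument. For Goldwasser--Sipser, the counted set must be a \emph{deterministic} predicate of $(\vec C_i,\vec s_i)$ that Arthur and Merlin agree on; ``the device happened to output a matching $\vec z$'' is a random event, not a set. The paper's fix is to define $Y_\tau(\vec C_i,\vec s_i)$, the probability that the device outputs a \emph{heavy hitter} (output probability at least $\tau/2^{B/2}$) containing a coordinate match, and to count $\{i: Y_\tau(\vec C_i,\vec s_i)>t\}$ for advice thresholds $\tau,t$. Two facts make this work, both absent from your proposal: Markov's inequality applied to the von Neumann entropy bound shows heavy hitters are output with probability $p>\frac{b(1-q)+\varepsilon}{b-1}$ (\Cref{lemma:heavy_hitter_prob}) — this is precisely where the factor $B/2$ enters — and the $\tau/2^{B/2}$ floor means Arthur can decide membership with $2^Bn^{O(1)}$ calls to the device via empirical frequency estimation (\Cref{lemma:approx-p,lemma:low-error}). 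Because the flaw in the original proof sits exactly here (an ill-defined set depending on the algorithm's random samples), your fallback of ``reusing that theorem's compression lemma'' inherits a broken step rather than a safe one. Finally, the compression itself is both unnecessary and unworkable: advice of any bounded length depending only on $n$ cannot encode circuit-specific heavy outputs for exponentially many circuit lists, and none is needed — Arthur simply runs the polynomial-time device as a subroutine. The only advice in the paper's protocol is the $O(n)$-bit tuple of thresholds $(\tau,T,t,R)$.
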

\begin{proof}
    The proof is given in \Cref{sec:mlxeb_proof}.
\end{proof}
\medskip 

Complexity-theoretic evidence that a device solving the $\text{MLXEB}_{b,k}$ problem must generate entropy gives justification that our protocol based on the statistics defined in Eq.~\ref{eq:xeb} can be used for randomness certification and expansion. For notational convenience, for the rest of the paper we use the term ``cross-entropy benchmarking score" to refer to $\text{MLXEB}_{b,k}$. Furthermore, our primary statistic, the $\XEB$ score (defined in Eq.~\ref{eq:xeb}) is the same as Eq.~\ref{eq:mlxeb} in the definition of $\text{MLXEB}_{b,k}$ up to a normalization factor and a constant offset. 

\subsection{Proof of \Cref{theorem:mlxeb_entropy}}\label{sec:mlxeb_proof}

To give a complexity-theoretic guarantee of entropy generation, we follow Ref.~\cite{aaronson2023certifiedArxiv} in proving an intermediate result showing that a polynomial time quantum algorithm that solves LXEB while having low von Neumann entropy can be used to solve LLQSV, thereby violating LLHA. We begin by stating the LXEB version of the result (Theorem~\ref{theorem:lxeb_solves_llqsv}) and clarifying the difference from the result in \cite{aaronson2023certifiedArxiv}. Then we state the MLXEB version that applies to our experiment (Theorem~\ref{theorem:mlxeb_solves_llqsv}) and provide a series of lemmas that are needed to prove both. Finally, we provide proofs for both theorems.

\begin{theorem}[Low-entropy algorithm solving LXEB also solves LLQSV, modified from Theorem 5.8 of \cite{aaronson2023certifiedArxiv}]\label{theorem:lxeb_solves_llqsv}
    Consider a device $\A$ which runs in quantum polynomial time and satisfies the following condition:
    \begin{eqnarray}
        H(Z|C)_\A < \frac{B}{2}\left(\frac{bq-1-\varepsilon}{b-1}\right),\\
        q=\underset{C\sim \D,\vec z\sim\A(C)}{\Pr}\left[\sum_i p_C(z_i) \geq \frac{bk}{N}\right],
    \end{eqnarray}
    where $\varepsilon=n^{-O(1)}$ and $q$ is the probability of $\A$ passing ${\rm LXEB}_{b,k}$. If such an $\A$ exists, then there is a quantum-classical Arthur-Merlin protocol which on input of an $O(n)$-bit advice string, solves ${\rm LLQSV}_B(\D)$ in time $2^Bn^{O(1)}$. In other words, ${\rm LLQSV}_B(\D)\in\textup{\textsf{QCAMTIME}}(2^Bn^{O(1)})/O(n)$. 
\end{theorem}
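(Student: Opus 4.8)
The plan is to construct the claimed Arthur--Merlin protocol directly from the hypothetical device $\A$, exploiting the fact that the two cases of $\LLQSV(\D)$ differ precisely in the \emph{heaviness} of the supplied strings: in the Yes-case $s_i\sim p_{C_i}$, so each $s_i$ is a heavy output of $C_i$, whereas in the No-case $s_i$ is uniform and hence typically light. The distinguishing signal is an agreement statistic between the strings supplied by $\LLQSV$ and the heavy strings that $\A$ predicts. Running $\A$ on $C_i$ yields $k$ distinct candidate outputs; in the Yes-case $s_i$ falls into this predicted set at a rate exceeding the uniform baseline by the cross-entropy factor (governed by $b$), while in the No-case it falls in at exactly the uniform baseline. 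The obstacle that forces the use of Merlin, advice, and $2^Bn^{O(1)}$ time is that these rates are both of order $1/N=2^{-n}$, and there are $M=O(2^{3n})$ instances; since $B<n$ we have $2^B\ll 2^n\le M$, so Arthur alone can examine only a vanishing fraction of instances and will see too few agreements to resolve the two rates. Merlin must therefore point Arthur to (and certify the heaviness of) a small set of informative instances, and it is the low entropy of $\A$ that controls how short and how checkable this help can be.

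The first key step converts the entropy hypothesis into compression. Because $H(Z\mid C)_\A<\tfrac{B}{2}\bigl(\tfrac{bq-1-\varepsilon}{b-1}\bigr)\le B/2$, the output $Z=(z_1,\dots,z_k)$ of $\A$ is highly concentrated given $C$. Using the entropy and estimation lemmas developed below, I would pass from this von Neumann entropy bound to an operational guessing statement: there is a set of candidate outputs, describable with roughly $H(Z\mid C)$ bits, that contains $\A$'s true output with good probability. This is what places the protocol inside $\QCAMTIME(2^Bn^{O(1)})/O(n)$: Merlin sends a short witness (a pointer into this typical set, i.e.\ the predicted heavy strings), and the description and hyperparameters of $\A$, which depend only on $n$, are folded into the $O(n)$-bit classical advice.

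The second key step is quantum verification of heaviness within the resource budget. Arthur, with oracle access to $C_1,\dots,C_M$ and the strings $s_1,\dots,s_M$, cannot merely test equality against Merlin's claims (Merlin could echo the given $s_i$); he must confirm that each claimed heavy string really has large $p_{C_i}(s_i)$ for the \emph{actual} circuit. Since $p_{C_i}(s_i)=\Theta(1/N)$ is exponentially small, Arthur estimates it by preparing $C_i\lvert 0\rangle$ and applying amplitude estimation, spending $2^Bn^{O(1)}$ time to reach the additive precision needed on the subset of instances Merlin flags. He then forms the average agreement statistic over the certified instances and accepts iff it clears a threshold placed strictly between the No-value (the uniform baseline, XEB $\approx 1$) and the honest Yes-value.

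Completeness and soundness follow from the two cases. For completeness, in the Yes-case an honest Merlin reveals the heavy strings $\A$ would produce; since $s_i\sim p_{C_i}$ these match $\A$'s predictions at the elevated rate, and Arthur's estimate clears the threshold with probability at least $q$ minus the typicality and amplitude-estimation errors --- this is where the $-\varepsilon=-n^{-O(1)}$ slack and the $n^{-\omega(1)}$ term are absorbed. For soundness, in the No-case the $s_i$ are uniform and independent of the circuits, so no Merlin strategy can make a small predicted set overlap them beyond the baseline: genuine heaviness is checked against the real circuit through Arthur's quantum estimate rather than through any self-reported probability, so fabricated claims are rejected. I expect the main obstacle to be the quantitative entropy accounting of the first step: showing that once $H(Z\mid C)$ drops below the stated threshold, the witness becomes short enough, the certified subset large enough, and the Yes/No gap wide enough that the contradiction with $\LLHA_B(\D)$ closes \emph{simultaneously} under only $O(n)$ bits of advice and $2^Bn^{O(1)}$ time --- in particular, ensuring that tying every claimed heavy string back to its circuit via the quantum check leaves Merlin no room to cheat.
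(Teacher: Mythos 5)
Your proposal has two gaps, each fatal on its own. The first is soundness. Your protocol lets Merlin \emph{choose} which instances to present, with Arthur then checking a per-instance predicate (agreement of $s_i$ with $\A$'s predicted strings, plus heaviness of $p_{C_i}(s_i)$) and a threshold on the average over the chosen instances. But the list contains $M=O(2^{3n})$ instances, and in the no-case any such predicate is satisfied by exponentially many of them purely by chance: a uniform $s_i$ lands in $\A$'s $k$-element predicted set with probability $k/N$, so on the order of $k\cdot 2^{2n}$ no-instances exhibit ``agreement,'' and since $\A$ solves ${\rm LXEB}_{b,k}$ those coinciding strings are also heavy with respect to $p_{C_i}$, so adding your quantum heaviness check does not screen them out. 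An unbounded Merlin simply points Arthur to these coincidences, and Arthur accepts in the no-case. This is precisely why the paper never lets Merlin select instances against a threshold test: it instead reduces the yes/no distinction to approximate \emph{counting} --- the number of indices $i$ satisfying a fixed, Merlin-independent predicate $Z_{\tau,t}(\vec C_i,\vec s_i)$ differs by a factor $1+\Omega(\varepsilon)$ between the two cases (\Cref{lemma:heavy_hitter_prob}, \Cref{lemma:collision}, \Cref{lemma:original_gap}, \Cref{corollary:gap}, \Cref{lemma:concentration}) --- and certifies that count with the Goldwasser--Sipser set-size lower bound: Arthur sends a random hash $h:[M]\to[R]$ and a random image $y$, and Merlin must exhibit an accepted index in a bucket he does not control. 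This hashing machinery is the heart of the reduction and is entirely absent from your proposal; without it, or some substitute that prevents cherry-picking among exponentially many instances, there is no soundness argument to repair.

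The second gap is the resource accounting for ``certifying heaviness.'' You have Arthur estimate $p_{C_i}(s_i)=\Theta(1/N)$ by amplitude estimation; resolving a quantity at scale $1/N$ to constant relative precision requires $\Omega(\sqrt{N})=\Omega(2^{n/2})$ coherent queries, which exceeds the allowed budget $2^{B}n^{O(1)}$ whenever $B<n/2$ --- and the theorem must hold for exactly this regime, since $B\le n/2$ always (Grover upper bound, as noted in \Cref{sec:limitations-of-asymptotic-analysis}). The paper's verifier never estimates circuit amplitudes at all: the heavy hitters that matter are heavy with respect to $\A$'s \emph{own output distribution} at scale $\tau/2^{B/2}$, a property Arthur tests empirically by drawing $K=2^{B}n^{O(1)}$ samples from $\A(\vec C_i)$ and applying Hoeffding bounds (the sampling process $\mathcal{Z}$, \Cref{lemma:approx-p}, \Cref{lemma:low-error}); the circuit probabilities $p_{C_i}$ enter only the analysis (\Cref{lemma:collision}), never the protocol. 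Relatedly, the low-entropy hypothesis is used not through a compression/typical-set encoding sent by Merlin, as you sketch, but through Markov's inequality applied to $-\log\Pr[\A(\vec C)=\vec z]$, which forces $\A$ to emit $2^{-B/2}$-heavy outputs with probability at least $\frac{b(1-q)+\varepsilon}{b-1}$ (\Cref{lemma:heavy_hitter_prob}); this is what creates the factor-$b$ gap in the counted set and ties the running time $2^{B}n^{O(1)}$ to the entropy parameter $B$.
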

This theorem is modified from \cite{aaronson2023certifiedArxiv} since the original theorem requires $\A$'s min-entropy to be bounded, whereas our theorem requires $\A$'s von Neumann entropy to be bounded. It turns out that the original theorem is not technically possible. To see this, note that the original proof provides a lower bound on the single round min-entropy, which scales linearly in $n$. This is impossible: one can honestly perform quantum sampling 99\% of the time and return the zero bitstring 1\% of the time, which is an algorithm with min-entropy $H_{\rm min}=-\log_2{0.01}$ regardless of $n$ and passes ${\rm LXEB}$ with high probability. We fix this by bounding the von Neumann entropy for the single-round analysis. The technical issue with the analysis in \cite{aaronson2023certified,aaronson2023certifiedArxiv} that leads to this impossible conclusion is that the set used in the approximate counting protocol is ill-defined since it depends on random samples from the quantum algorithm, which Arthur and Merlin cannot agree on. Our analysis below is free from this issue.

If ${\rm LLHA}$ is true, \Cref{theorem:lxeb_solves_llqsv} implies that the output of the algorithm that solves ${\rm LXEB}$ must contain entropy. However, our experimental protocol does not check if the algorithm solves ${\rm LXEB}_{b,k}$ since we request one sample per circuit. Therefore, we wish to prove an equivalent theorem for the $\MLXEB$ case.

\begin{theorem}[Low-entropy algorithm solving MLXEB also solves LLQSV]\label{theorem:mlxeb_solves_llqsv}
    Consider a device $\A$ which runs in quantum polynomial time and satisfies the following condition:
    \begin{eqnarray}
        H(Z|\vec C) < \frac{B}{2}\left(\frac{bq-1-\varepsilon}{b-1}\right),\label{eq:low_vn_entropy}\\
        q=\underset{\vec C\sim \D^k,\vec z\sim\A(\vec C)}{\Pr}\left[\sum_i p_{C_i}(z_i) \geq \frac{bk}{N}\right],
    \end{eqnarray}
    where $\varepsilon=n^{-O(1)}$. If such an $\A$ exists, ${\rm LLQSV}_B(\D)\in\textup{\textsf{QCAMTIME}}(2^Bn^{O(1)})/O(n)$.
\end{theorem}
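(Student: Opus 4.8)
The plan is to mirror the proof of \Cref{theorem:lxeb_solves_llqsv} and reuse the single-round lemmas shared between the two theorems, exploiting the fact that the structure of $\MLXEB$ (distinct circuits, one sample each) matches $\LLQSV$ even more directly than $\LXEB$ does. Assuming the low-entropy solver $\A$ exists, I would construct a quantum-classical Arthur--Merlin protocol distinguishing the Yes-case $\vec s\sim p_{\vec C}$ from the No-case $\vec s\sim\U^M$ and argue it runs in time $2^Bn^{O(1)}$ with $O(n)$ bits of advice. The central object is a heavy set defined \emph{deterministically} from each circuit, $S_{C}=\{z:p_{C}(z)\ge\theta\}$, with the threshold $\theta$ (depending only on $n$ and $\D$) carried in the advice. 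Using this deterministic $S_C$, rather than a set built from random samples of $\A$, is precisely the fix for the issue flagged above, since Arthur and Merlin can agree on $S_C$ given $C$.

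The first step is to reduce the aggregate $\MLXEB$ guarantee to a single-round guarantee. For the score, non-negativity of $N p_{C_i}(z_i)$ and the definition of $q$ give $\mathbb{E}\!\left[\tfrac1k\sum_i N p_{C_i}(z_i)\right]\ge bq$, so by averaging there is a round $i^{*}$ with $\mathbb{E}[N p_{C_{i^{*}}}(z_{i^{*}})]\ge bq$; its index costs $O(\log k)=O(n)$ advice bits (or is removed by symmetrizing the batch order). For the entropy, since $Z$ is classical the chain rule yields $H(z_{i^{*}}\mid\vec C)\le H(z_1,\dots,z_k\mid\vec C)=H(Z\mid\vec C)$, so the assumed bound on the joint entropy transfers verbatim to the marginal. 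Hence the representative round has per-round score at least $bq$ and per-round entropy strictly below $\tfrac B2\,\tfrac{bq-1-\varepsilon}{b-1}$, which is exactly the hypothesis consumed by the shared single-round lemma.

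The second step instantiates that lemma to get a single-position distinguisher. To apply it to an $\LLQSV$ circuit $C$, Arthur draws $k-1$ fresh circuits from $\D$, embeds $C$ in slot $i^{*}$, runs $\A$, and reads off the designated output. The lemma --- whose proof is the shared technical core --- converts low per-round entropy and high per-round score into the statement that $S_C$ has small size (controlled by the entropy bound, of order $2^{B/2}$) while capturing large mass $p_C(S_C)$. Merlin then certifies, through the $\mathsf{AM}$ approximate-counting protocol, an estimate of $|S_C|$ and the membership test $p_C(s)\ge\theta$, which Arthur cannot evaluate directly because $p_C(s)$ is $\#\mathsf P$-hard; the cost of this step, together with the amplification below, sets the $2^Bn^{O(1)}$ running time. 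This produces the gap $\Pr[s\in S_C]=p_C(S_C)$ in the Yes-case against $|S_C|/N$ in the No-case, and the theorem's entropy bound is exactly what forces $p_C(S_C)$ to exceed $|S_C|/N$ by a noticeable margin. Running the test across the $M=O(2^{3n})$ circuits of $\LLQSV$ and taking a majority/Chernoff vote amplifies the per-position bias into a decision with the required soundness, placing $\LLQSV_B(\D)$ in $\QCAMTIME(2^Bn^{O(1)})/O(n)$.

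I expect the single-round lemma to be the main obstacle: quantitatively turning the von Neumann entropy bound into both an upper bound on $|S_C|$ and a lower bound on $p_C(S_C)$ so that the Yes/No gap is real, and doing so with the deterministic set $S_C$ rather than the sample-dependent set that led \cite{aaronson2023certified,aaronson2023certifiedArxiv} to the impossibility discussed above. Within the $\MLXEB$-specific reduction itself, the remaining delicate points are the constant bookkeeping needed to reproduce exactly the threshold $\tfrac B2\,\tfrac{bq-1-\varepsilon}{b-1}$ from the score inequality ($\ge bq$) and the marginal-entropy inequality, and verifying that embedding $C$ among $k-1$ fresh circuits faithfully reproduces the input distribution $\D^k$ that $\A$ expects, so that its promised score and entropy guarantees remain valid.
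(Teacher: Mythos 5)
Your central construction breaks the proof. You define the heavy set $S_C=\{z:p_C(z)\ge\theta\}$ with respect to the \emph{circuit's} ideal distribution, but the entropy hypothesis of the theorem only controls heaviness with respect to the \emph{algorithm's} output distribution. In the paper's proof the heavy hitters are whole output tuples $\vec z$ with $\Pr[\A(\vec C)=\vec z]\ge\tau/2^{B/2}$: Markov's inequality applied to $H(Z|\vec C)$ (\Cref{lemma:heavy_hitter_prob}) guarantees such outputs occur with noticeable probability, and—crucially—they can be recognized by drawing $2^Bn^{O(1)}$ samples from $\A$ (\Cref{lemma:approx-p,lemma:low-error}), which is exactly what keeps the verifier inside the $2^Bn^{O(1)}$ time budget. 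Your $S_C$, by contrast, has size and mass fixed by Porter--Thomas statistics alone ($|S_C|\approx Ne^{-N\theta}$ and $p_C(S_C)\approx(1+N\theta)e^{-N\theta}$); these quantities do not depend on $\A$ in any way, so no lemma can make them ``controlled by the entropy bound'' of order $2^{B/2}$. Worse, membership $p_C(s)\ge\theta$ is \textsf{\#P}-hard, and Goldwasser--Sipser approximate counting requires membership that Arthur can verify: Merlin cannot simply ``certify the membership test'' in a constant-round protocol within time $2^Bn^{O(1)}$. A decisive sanity check: since your final distinguisher ($s\in S_C$ versus $|S_C|/N$) never invokes $\A$, if it worked it would place ${\rm LLQSV}_B(\D)$ in $\QCAMTIME(2^Bn^{O(1)})/O(n)$ \emph{unconditionally}, contradicting the random-oracle proof of LLHA; any correct proof must use $\A$ as an essential subroutine, as the paper's sampling process $\mathcal{Z}$ does. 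Note also that the paper's fix for the ill-defined-set issue in \cite{aaronson2023certifiedArxiv} is not determinism of the set: it is that the counted set $\{i\in[M']:\mathcal{Z}(\vec C_i,\vec s_i)=1\}$ has membership Arthur can check himself by running $\mathcal{Z}$.

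The reduction scaffolding also does not match what is needed. The paper never reduces to a single round: it groups the $M=O(2^{3n})$ LLQSV instances into $M/k$ size-$k$ tuples, runs $\A$ on each whole tuple, uses the collision event $\exists i: z_i=s_i$ (\Cref{lemma:collision}, via inclusion--exclusion) to separate the yes- and no-cases, and then distinguishes them by Goldwasser--Sipser counting of accepted tuples. This counting step is unavoidable because the per-tuple acceptance probabilities are of order $k/N$—exponentially small with only a \emph{multiplicative} gap of $1+\varepsilon$ between the cases (\Cref{lemma:original_gap})—so your proposed ``majority/Chernoff vote'' over instances that Arthur runs himself cannot detect the bias within time $2^Bn^{O(1)}$. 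Finally, your single-round extraction yields an expectation bound $\Exp[Np_{C_{i^*}}(z_{i^*})]\ge bq$, whereas the gap machinery consumes the threshold-passing probability $q$ together with the heavy-hitter probability $p$ (yielding $\mu_1/\mu_0\ge b(p+q-1)/p$); these are not interchangeable, so even the bookkeeping you flag as delicate does not line up with the lemmas you intended to reuse.
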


We provide the analysis for the ${\rm MLXEB}$ case. Adaptation of the proofs to the ${\rm LXEB}$ case is straightforward, and we remark on the necessary changes at the end of this section. We first prove that an algorithm satisfying the requirements of \Cref{theorem:mlxeb_solves_llqsv} must output heavy hitters (outputs with probability greater than $\tau/2^{B/2}$ for some $\tau\in(0,1]$) often.

\begin{lemma}\label{lemma:heavy_hitter_prob}
An algorithm satisfying the requirements of \Cref{theorem:mlxeb_solves_llqsv} must output heavy hitters with probability at least $p$ satisfying $p>\frac{b(1-q)+\varepsilon}{b-1}$, where
\begin{equation}
    p(\tau)=\underset{\vec C\sim \D^k,\vec z\sim\A(\vec C)}{\Pr}\left[\Pr[\A(\vec C)=\vec z]\geq \frac{\tau}{2^{B/2}}\right],
\end{equation}
and $p\equiv p(1)$.
\end{lemma}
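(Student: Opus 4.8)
The plan is to reduce the entire statement to a single application of Markov's inequality to the \emph{surprisal} (information content) of the algorithm's output. The starting point is that $Z$ is a classical register, so its conditional von Neumann entropy equals the classical conditional Shannon entropy, which in turn is the expectation of the surprisal over exactly the joint distribution $(\vec C,\vec z)$ that appears in the definition of $p(\tau)$. Concretely, I would first record the identity
\begin{equation}
    H(Z|\vec C) = \underset{\vec C\sim\D^k,\ \vec z\sim\A(\vec C)}{\Exp}\left[-\log_2 \Pr[\A(\vec C)=\vec z]\right],
\end{equation}
which follows by writing $H(Z|\vec C)=\sum_{\vec c}\Pr[\vec c]\,H(Z|\vec C=\vec c)$ and expanding each conditional entropy. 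Writing $X=-\log_2\Pr[\A(\vec C)=\vec z]$ for this random variable, note that $X\geq 0$ since every conditional output probability is at most $1$, so Markov's inequality will be available.

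Next I would make the key identification: the event that the sampled $\vec z$ is a heavy hitter at $\tau=1$, namely $\Pr[\A(\vec C)=\vec z]\geq 2^{-B/2}$, is exactly the event $\{X\leq B/2\}$. Hence $p\equiv p(1)=\Pr[X\leq B/2]$ while $H(Z|\vec C)=\Exp[X]$. Applying Markov's inequality to $X\geq 0$ at threshold $B/2$ gives
\begin{equation}
    \Pr\!\left[X \geq \tfrac{B}{2}\right] \leq \frac{\Exp[X]}{B/2} = \frac{2\,H(Z|\vec C)}{B},
\end{equation}
and since the complement of the heavy-hitter event satisfies $\{X>B/2\}\subseteq\{X\geq B/2\}$, this yields the intermediate bound $p = 1-\Pr[X>B/2]\geq 1-\frac{2H(Z|\vec C)}{B}$.

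Finally I would substitute the low-entropy hypothesis $H(Z|\vec C)<\frac{B}{2}\!\left(\frac{bq-1-\varepsilon}{b-1}\right)$ from \Cref{theorem:mlxeb_solves_llqsv}, which cancels the factor $B/2$ cleanly and gives
\begin{equation}
    p > 1 - \frac{bq-1-\varepsilon}{b-1} = \frac{b(1-q)+\varepsilon}{b-1},
\end{equation}
after elementary simplification of the fraction; the strict inequality is inherited from the strict entropy hypothesis. This is precisely the claimed bound.

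I do not expect a genuine obstacle here: the proof is essentially Markov's inequality once the correct objects are in place. The only points that require care are (i) the identification of the conditional von Neumann entropy of a \emph{classical} output register with the expected surprisal, and (ii) ensuring the expectation defining $H(Z|\vec C)$ is taken over the same joint distribution $\vec C\sim\D^k,\ \vec z\sim\A(\vec C)$ used to define $p(\tau)$, so that $p(1)$ and $H(Z|\vec C)$ are statistics of one and the same random variable $X$. The distinction between $\{X>B/2\}$ and $\{X\geq B/2\}$ only loosens the bound in the favorable direction and so does not threaten the strict inequality.
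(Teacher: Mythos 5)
Your proof is correct and takes essentially the same approach as the paper: the paper's proof is a one-line application of Markov's inequality to the surprisal $\log\frac{1}{\Pr[\A(\vec C)=\vec z]}$, whose expectation is $H(Z|\vec C)$, followed by substitution of the low-entropy hypothesis, exactly as you do. Your additional care --- identifying the conditional von Neumann entropy of the classical register with the expected surprisal, and handling the strict-versus-nonstrict inequality at the threshold $B/2$ --- only makes explicit steps the paper leaves implicit.
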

\begin{proof}
With Markov's inequality, Eq.~\ref{eq:low_vn_entropy} implies
\begin{equation}
    \underset{\vec C\sim\D^k,\vec z\sim\A(\vec C)}{\Pr}\left[\log\frac{1}{\Pr[\A(\vec C)=\vec z]}\geq \frac{B}{2}\right]
    \leq \frac{H(Z|\vec C)}{B/2}
    < \frac{bq-1-\varepsilon}{b-1}
    \implies p>1-\frac{bq-1-\varepsilon}{b-1}=\frac{b(1-q)+\varepsilon}{b-1}.
\end{equation}

\end{proof}

We now examine the expectation value of a random variable $Y_\tau(\vec C,\vec s)$:
\begin{equation}
    Y_\tau(\vec C,\vec s) := \underset{\vec z\sim\A(\vec C)}{\Pr}\left[\Pr[\A(\vec C)=\vec z]\geq \frac{\tau}{2^{B/2}}\wedge \exists i:z_i=s_i \right].
\end{equation}
We see that $Y_\tau$ is the probability that the output of $\A$ consists of heavy hitters with probability satisfying the conditions of Lemma~\ref{lemma:heavy_hitter_prob}, with at least one matching bitstring. Define its expectation value in the yes-case and no-case of LLQSV defined in Def.~\ref{def:llqsv} as $\mu_1$ and $\mu_0$:
\begin{equation}
    \mu_1(\tau)=\underset{\vec C\sim\D^k,\vec s \sim p_{\vec C}}{\Exp}[Y_\tau(\vec C, \vec s)], \quad \mu_0(\tau)=\underset{\vec C\sim\D^k,\vec s \sim \U^k}{\Exp}[Y_\tau(\vec C, \vec s)].
\end{equation}

The gap between $\mu_1$ and $\mu_2$ helps distinguish between the yes- and no-cases. Showing the gap between $\mu_1$ and $\mu_0$ requires the following lemma.

\begin{lemma}\label{lemma:collision}
Denoting the condition that the output $\vec z$ passes ${\rm MLXEB}_{b,k}$ as $V(\vec C, \vec z)$,
\begin{equation}
    \underset{\vec s\sim p_{\vec C}}{\Pr}\left[\exists i,z_i=s_i\big\vert V(\vec C, \vec z)\right]\geq \frac{bk}{N}-O\left(\frac{1}{N^2}\right).
\end{equation}
\end{lemma}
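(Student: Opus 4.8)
The plan is to fix the circuits $\vec C$ and the output $\vec z$, so that the only randomness is over $\vec s$; conditioning on $V(\vec C,\vec z)$ then simply means we may assume $\sum_i p_{C_i}(z_i)\geq bk/N$. Because each $s_i$ is drawn independently from $p_{C_i}$, the matching events $\{s_i=z_i\}$ are independent with $\Pr[s_i=z_i]=p_{C_i}(z_i)$, so I would pass to the complementary ``no collision'' event and write
\begin{equation}
    \Pr_{\vec s\sim p_{\vec C}}[\exists i:z_i=s_i]=1-\prod_{i=1}^k\bigl(1-p_{C_i}(z_i)\bigr).
\end{equation}

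The key step is to bound this from below using $1-x\leq e^{-x}$, which gives $\prod_i(1-p_{C_i}(z_i))\leq e^{-S}$ with $S:=\sum_i p_{C_i}(z_i)$, hence $\Pr[\exists i:z_i=s_i]\geq 1-e^{-S}$. The advantage of this form over a raw second-order inclusion--exclusion (Bonferroni) truncation is that $1-e^{-S}$ is monotonically increasing for \emph{all} $S\geq 0$, so the constraint $V$ (i.e.\ $S\geq bk/N$) can be plugged in directly without a case split on the magnitude of $S$: $1-e^{-S}\geq 1-e^{-bk/N}$. I would then apply the elementary inequality $1-e^{-x}\geq x-\tfrac12 x^2$ (valid for $x\geq 0$) to obtain
\begin{equation}
    \Pr_{\vec s\sim p_{\vec C}}[\exists i:z_i=s_i]\geq \frac{bk}{N}-\frac{1}{2}\left(\frac{bk}{N}\right)^2=\frac{bk}{N}-O\!\left(\frac{1}{N^2}\right).
\end{equation}
Since this bound holds for every fixed $(\vec C,\vec z)$ satisfying $V$, it is preserved under conditioning on $V$: averaging a pointwise inequality over the conditional law of $(\vec C,\vec z)$ leaves it intact, which is exactly the claim.

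I expect the only real subtlety to be the bookkeeping of the correction term. The ``$O(1/N^2)$'' actually carries a factor $\tfrac12 b^2k^2$ coming from the quadratic Taylor remainder, so the statement should be read with $b$ and $k$ held fixed (kept explicitly, the error is more honestly $O(b^2k^2/N^2)$). The one thing to verify is therefore that the regime of interest keeps $bk/N$ small enough that this second-order term is genuinely dominated by the leading collision probability $bk/N$; the monotonicity of $1-e^{-S}$ is what removes any need to control atypically large values of $S$ under the conditioning, and is the reason the $1-e^{-S}$ bound is preferable to a direct Bonferroni estimate.
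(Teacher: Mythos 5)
Your proof is correct, but it takes a genuinely different route from the paper's. The paper truncates inclusion--exclusion at second order (Bonferroni): with $S:=\sum_i p_{C_i}(z_i)$ it obtains $\Pr[\exists i: z_i=s_i]\geq S-S^2$, and because $S(1-S)$ is not monotone it must split into cases --- for $S\leq 1/2$ it plugs in $S\geq bk/N$ by monotonicity on that interval, while for $S>1/2$ it argues separately that the union probability already exceeds $bk/N$ (a step that invokes $k=O(n^2)$ so that $1/(2k)>bk/N$). You instead use the exact complement identity $1-\prod_i\bigl(1-p_{C_i}(z_i)\bigr)$ together with $1-x\leq e^{-x}$, and the resulting bound $1-e^{-S}$ is monotone for \emph{all} $S\geq 0$, so the constraint $S\geq bk/N$ plugs in directly with no case split and no assumption relating $k$ to $N$; the Taylor bound $1-e^{-x}\geq x-\tfrac12 x^2$ then yields $bk/N-\tfrac12 b^2k^2/N^2$, marginally better than the paper's $bk/N-b^2k^2/N^2$. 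What the paper's Bonferroni route buys is that it only requires pairwise independence of the events $\{z_i=s_i\}$, whereas your product formula needs full mutual independence --- but full independence does hold here since the $s_i$ are sampled independently, so nothing is lost. Both arguments share the same pointwise-then-average handling of the conditioning on $V$ (implicit in the paper, explicit in your write-up), and both carry the same mild abuse of notation in writing $O(1/N^2)$ for what is really $O(b^2k^2/N^2)$.
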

\begin{proof}
Given any $\vec z$, whether $z_i=s_i$ is independent for different $i$ since the $s_i$ are independently sampled. Using the inclusion-exclusion principle to the second order,
\begin{align}
\underset{\vec s\sim p_{\vec C}}{\Pr}[\exists i,z_i=s_i]\geq &\sum_i \Pr[z_i=s_i]-\sum_{i<j}\Pr[z_i=s_i]\Pr[z_j=s_j]\\
\geq & \sum_i \Pr[z_i=s_i]-\sum_i \Pr[z_i=s_i]\sum_i \Pr[z_i=s_i]\\
=&\sum_{i} \Pr\left[z_{i}=s_{i}\right] - \left(\sum_{i} \Pr\left[z_{i}=s_{i}\right] \right)^{2}
\\
= & \sum_i p_{C_i}(z_i)\left(1-\sum_i p_{C_i}(z_i)\right)\label{eq:bound_prob}.
\end{align}
The value of Eq.~\ref{eq:bound_prob} increases until $\sum_i p_{C_i}=1/2$. For $\sum_i p_{C_i}>1/2$, $\Pr[\exists i,z_i=s_i]>\frac{1}{2k}>bk/N$ since $k=O(n^2)$. Conditioned on passing ${\rm MLXEB}_{b,k}$, we have $\sum_i p_{C_i}(z_i)\geq bk/N$. Therefore,
\begin{equation}
    \underset{\vec s\sim p_{\vec C}}{\Pr}\left[\exists i,z_i=s_i\big\vert V(\vec C, \vec z)\right]\geq \frac{bk}{N}-\frac{b^2 k^2}{N^2}.
\end{equation}
\end{proof}

Now we show the gap.

\begin{lemma}\label{lemma:original_gap}
    For $\tau\in[0,1]$,
    \begin{equation}
        \frac{\mu_1(\tau)}{\mu_0(\tau)}\geq b\cdot \frac{p(\tau)+q-1}{p(\tau)}.
    \end{equation}
\end{lemma}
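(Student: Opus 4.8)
The plan is to sandwich the ratio by separately lower-bounding $\mu_1(\tau)$ and upper-bounding $\mu_0(\tau)$. The starting observation is that both quantities are a single probability over the joint draw of $(\vec C,\vec z,\vec s)$, where $\vec C\sim\D^k$, $\vec z\sim\A(\vec C)$, and $\vec s$ is drawn from $p_{\vec C}$ (yes-case) or $\U^k$ (no-case). Crucially, conditioned on $\vec C$ the string $\vec s$ is sampled independently of the algorithm's output $\vec z$ in both cases. This independence is what lets the matching probability $\Pr[\exists i\colon z_i=s_i]$ factor cleanly out of the expectation once $(\vec C,\vec z)$ is fixed.

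For the lower bound on $\mu_1(\tau)$, I would first intersect the heavy-hitter event with the event $V(\vec C,\vec z)$ that $\vec z$ passes ${\rm MLXEB}_{b,k}$; since this only shrinks the event, $\mu_1(\tau)\geq\Exp_{\vec C,\vec z}\big[\mathbbm{1}[\text{heavy}\wedge V]\cdot\Pr_{\vec s\sim p_{\vec C}}[\exists i\colon z_i=s_i]\big]$. For every fixed $(\vec C,\vec z)$ on which $V$ holds, \Cref{lemma:collision} bounds the inner matching probability below by $bk/N-O(1/N^2)$, so it pulls out of the expectation and leaves $\mu_1(\tau)\geq(bk/N-O(1/N^2))\cdot\Pr[\text{heavy}\wedge V]$. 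Finally, since the heavy-hitter event has probability $p(\tau)$ and $V$ has probability $q$, inclusion-exclusion (the bound $\Pr[A\wedge B]\geq\Pr[A]+\Pr[B]-1$) gives $\Pr[\text{heavy}\wedge V]\geq p(\tau)+q-1$.

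For the upper bound on $\mu_0(\tau)$, I use that in the no-case each $s_i$ is uniform on $\{0,1\}^n$, so for any fixed $\vec z$ a union bound gives $\Pr_{\vec s\sim\U^k}[\exists i\colon z_i=s_i]\leq k/N$. Factoring this out of the expectation over $(\vec C,\vec z)$ restricted to the heavy-hitter event yields $\mu_0(\tau)\leq(k/N)\,p(\tau)$. Dividing the two bounds, the factors of $k/N$ cancel and one obtains $\mu_1(\tau)/\mu_0(\tau)\geq(b-O(1/(kN)))\cdot(p(\tau)+q-1)/p(\tau)$; because the $O(1/N^2)$ correction from \Cref{lemma:collision} is exponentially smaller than $bk/N$ (as $N=2^n$ while $k=O(n^2)$), the multiplicative correction to $b$ is negligible relative to the $\varepsilon=n^{-O(1)}$ used later, and the clean factor $b$ stated in the lemma follows.

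The main obstacle I expect is the bookkeeping around the two different conditioning events in the $\mu_1$ bound: invoking \Cref{lemma:collision} forces conditioning on $V(\vec C,\vec z)$, whereas the quantity we ultimately want to control is the heavy-hitter probability $p(\tau)$, not the pass probability $q$. Reconciling these is exactly what the inclusion-exclusion step accomplishes, at the cost of the $p(\tau)+q-1$ factor; one must check that ``heavy'' and $V$ live in the same probability space over $(\vec C,\vec z)$ and that shrinking $\mu_1$ to the $V$-event before applying the collision bound does not discard more than is claimed. The one genuinely delicate point is justifying that $\vec s$ is independent of $\vec z$ given $\vec C$ in \emph{both} cases, since this is the hypothesis that makes the factorization of the matching probability legitimate.
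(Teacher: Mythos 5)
Your proposal is correct and follows essentially the same route as the paper's proof: intersect the heavy-hitter event with the pass event $V$, factor out the matching probability using the independence of $\vec s$ from $\vec z$ given $\vec C$, invoke \Cref{lemma:collision} for the yes-case, bound $\Pr[\text{heavy}\wedge V]\geq p(\tau)+q-1$ by the union-bound/inclusion-exclusion step, and compute $\mu_0(\tau)$ via the uniform matching probability. The only (cosmetic) differences are that you upper-bound $\mu_0(\tau)\leq (k/N)\,p(\tau)$ by a union bound where the paper asserts it as an equality, and you explicitly note the absorption of the $O(1/N^2)$ correction into the clean factor $b$, which the paper does silently.
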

\begin{proof}
    For the yes-case,
    \begin{align}
        \mu_1(\tau)=&\underset{\vec C\sim \D^k,\vec s\sim p_{\vec C}}{\Exp}[Y_\tau(\vec C,\vec s)]=\underset{\vec C\sim \D^k,\vec s\sim p_{\vec C},\vec z\sim\A(\vec C)}{\Pr}\left[\Pr[\A(\vec C)=\vec z]\geq \frac{\tau}{2^{B/2}}\wedge \exists i:z_i=s_i \right]\\
        \geq & \underset{\vec C\sim \D^k,\vec s\sim p_{\vec C},\vec z\sim\A(\vec C)}{\Pr}\left[\Pr[\A(\vec C)=\vec z]\geq \frac{\tau}{2^{B/2}}\wedge \exists i:z_i=s_i \wedge V(\vec C,\vec z)\right]\\
        = & \underset{\vec s\sim p_{\vec C}}{\Pr}\left[\exists i:z_i=s_i \big\vert \Pr[\A(\vec C)=\vec z]\geq \frac{\tau}{2^{B/2}} \wedge V(\vec C,\vec z)\right] \underset{\vec C\sim \D^k,\vec z\sim\A(\vec C)}{\Pr}\left[\Pr[\A(\vec C)=\vec z]\geq \frac{\tau}{2^{B/2}} \wedge V(\vec C,\vec z)\right]\\
        \geq & \left(\frac{bk}{N}-O\left(\frac{1}{N^2}\right)\right)\cdot \underset{\vec C\sim \D^k,\vec z\sim\A(\vec C)}{\Pr}\left[\Pr[\A(\vec C)=\vec z]\geq \frac{\tau}{2^{B/2}} \wedge V(\vec C,\vec z)\right]\\
        \geq & \frac{bk}{N}\cdot (p(\tau) + q - 1)-O\left(\frac{1}{N^2}\right).\label{eq:mu_1}
    \end{align}
    The fourth line holds by \Cref{lemma:collision}, and the last line holds by union bound.
    
    For the no-case,
    \begin{equation}
        \mu_0(\tau)=\underset{\vec C\sim \D^k,\vec s\sim \U^k}{\Exp}[Y_\tau(\vec C,\vec s)]=\underset{\vec C\sim \D^k,\vec s\sim \U^k,\vec z\sim\A(\vec C)}{\Pr}\left[\Pr[\A(\vec C)=\vec z]
        \geq \frac{\tau}{2^{B/2}}\wedge \exists i:z_i=s_i \right]
        = \frac{k}{N}\cdot p(\tau).\label{eq:mu_0}
    \end{equation}
    By Eq.~\ref{eq:mu_1} and \ref{eq:mu_0}, we conclude the proof.
\end{proof}

\Cref{lemma:heavy_hitter_prob,lemma:original_gap}
of this work and Lemma 5.4 of \cite{aaronson2023certifiedArxiv} imply that for $\mu_1$, $\mu_0$ defined for an algorithm satisfying the requirement of  \Cref{theorem:mlxeb_solves_llqsv},
\begin{equation}
    \frac{\mu_1(\tau)}{\mu_0(\tau)} \geq \frac{\mu_1(1)}{\mu_0(1)}\geq b\cdot\frac{p+q-1}{p}\geq 1+\varepsilon.
\end{equation}
Further, by the same argument used in Lemma 5.5 of \cite{aaronson2023certifiedArxiv} (modulo some typos), we have the following result. This result is necessary due to the need for Arthur to verify Merlin's claim in the Goldwasser-Sipser protocol. The manner in which this arises in our work can be seen in Eq.~\ref{eq:fraction_estimation}. For Ref.~\cite{aaronson2023certifiedArxiv}, it is discussed before Lemma 5.5 and can be seen in Eq.~112. For both cases, it is due to the fact that Arthur cannot calculate some fraction with infinite precision.
\begin{lemma}
    For $T\geq\frac{16}{\varepsilon}\log\left(\frac{N}{\varepsilon}\right)$, there exists $j\in[T/2]$ such that
    \begin{equation}
        \mu_{1}\left(1/2 + j/T\right) \geq \left(1 + \varepsilon/2\right)\mu_{0}\left(1/2 + (j-1)/T\right).\label{eq:contra-4}
    \end{equation}
\end{lemma}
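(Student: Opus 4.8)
The plan is to argue by contradiction, exploiting the pointwise multiplicative gap $\mu_1(\tau)\geq(1+\varepsilon)\mu_0(\tau)$ (valid for all $\tau\in[0,1]$ by the chain of inequalities preceding the lemma) together with a telescoping product over the grid points $1/2+j/T$. Suppose, for contradiction, that for every $j\in[T/2]$ the claimed inequality fails, i.e.\ $\mu_1(1/2+j/T)<(1+\varepsilon/2)\,\mu_0(1/2+(j-1)/T)$.

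First I would convert the $\mu_0$ on the right-hand side into a $\mu_1$ at the \emph{same} argument. Applying the global gap at $\tau=1/2+(j-1)/T\in[0,1]$ gives $\mu_0(1/2+(j-1)/T)\leq \mu_1(1/2+(j-1)/T)/(1+\varepsilon)$, so the negated hypothesis becomes the one-step contraction
\[
\mu_1(1/2+j/T)< r\,\mu_1(1/2+(j-1)/T),\qquad r:=\frac{1+\varepsilon/2}{1+\varepsilon}<1.
\]
Multiplying these bounds for $j=1,\dots,T/2$ telescopes to $\mu_1(1)< r^{T/2}\mu_1(1/2)$, since $1/2+(T/2)/T=1$ and $1/2+0=1/2$.

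Next I would bound the two sides. Using $\ln(1-x)\leq -x$ with $r=1-\frac{\varepsilon/2}{1+\varepsilon}$ gives $r^{T/2}\leq \exp(-\tfrac{\varepsilon T}{4(1+\varepsilon)})$, and the hypothesis $T\geq \frac{16}{\varepsilon}\log(N/\varepsilon)$ (with $\varepsilon\leq 1$, up to the usual base-of-log constant) forces $r^{T/2}\leq (\varepsilon/N)^2$. Since $\mu_1(1/2)\leq 1$ because it is an expectation of a probability, this yields $\mu_1(1)<\varepsilon^2/N^2$. On the other hand, the lower bound on $\mu_1(1)$ already assembled above gives $\mu_1(1)\geq(1+\varepsilon)\mu_0(1)=(1+\varepsilon)\frac{k}{N}p$, and Lemma~\ref{lemma:heavy_hitter_prob} yields $p>\frac{b(1-q)+\varepsilon}{b-1}\geq\frac{\varepsilon}{b-1}$, so $\mu_1(1)>\frac{k\varepsilon}{N(b-1)}$. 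The two bounds are incompatible because $\frac{k\varepsilon}{N(b-1)}>\frac{\varepsilon^2}{N^2}$ holds whenever $kN/(b-1)>\varepsilon$, which is immediate since $N=2^n$ is exponentially large while $\varepsilon=n^{-O(1)}$. This contradiction establishes the lemma.

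The main obstacle is the mismatch of arguments---$\mu_1$ is evaluated at $1/2+j/T$ but $\mu_0$ at the strictly smaller $1/2+(j-1)/T$---which at first sight works against us, since both $\mu_0$ and $\mu_1$ are decreasing in $\tau$. The resolution, and the crux of the argument, is to spend the pointwise gap at the lower argument $1/2+(j-1)/T$ to rewrite $\mu_0$ as $\mu_1/(1+\varepsilon)$, turning the failure hypothesis into a genuine geometric contraction of $\mu_1$ along the grid; the quantitative role of $T$ is then only to guarantee that the accumulated decay factor $r^{T/2}$ beats the polynomially small ratio $\mu_1(1)/\mu_1(1/2)$, which is precisely where the $\log(N/\varepsilon)$ scaling enters.
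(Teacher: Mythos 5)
Your proof is correct and takes essentially the same route as the paper's, which simply defers to Lemma 5.5 of the Aaronson--Hung arXiv version while correcting its typos (the index range $j\in[T/2]$ with the product running over $j$ from $0$ to $T/2-1$, and the constant $\frac{16}{\varepsilon}\log\frac{N}{\varepsilon}$): namely, negate the conclusion, use the pointwise gap $\mu_1(\tau)\geq(1+\varepsilon)\mu_0(\tau)$ to turn each negated inequality into a one-step contraction, telescope along the grid to force $\mu_1(1)$ below $(\varepsilon/N)^2$, and contradict the lower bound $\mu_1(1)=\Omega(k\varepsilon/N)$ coming from the heavy-hitter probability. Your reconstruction fills in the quantitative details the paper leaves implicit, and they all check out.
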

\begin{proof}
We modify the proof as follows. First, $j\in[T/2]$, and for the product, $j$ goes from $0$ to $T/2-1$. Second, $\frac{8}{\varepsilon}\log\frac{n}{\varepsilon}$ should be $\frac{16}{\varepsilon}\log\frac{N}{\varepsilon}$.
\end{proof}

With this gap, a quantum-classical Arthur--Merlin protocol can be used to distinguish between the two cases. This is accomplished by first defining a set whose size depends on $\Exp[Y_\tau(\vec C, \vec s)]$, such that there is a gap in the set size between the two cases. This allows us to use the Goldwasser--Sipser protocol for approximate counting \cite{goldwasser1986private} to differentiate between the two cases. To show that there exists a set with a gap in the size, we use the following lemma.

\begin{lemma}\label{lemma:gap}
    Let $\varepsilon\in[0,1],\delta,a_0,a_1\in(0,1]$ be real numbers satisfying $a_1=a_0+\delta\leq (1+\varepsilon/8) a_0$. \ 
    Let $X_0,X_1$ be random variables in $[0,1]$ such that $X_b>0$ implies $X_b\geq a_b$ for $b\in\bit$ and $\Exp[X_1]\geq(1+\varepsilon)\Exp[X_0]$.
    Then there exists a rational number $1\geq t\geq a_0$ such that 
    \begin{align}
        \Pr[X_1>t+\delta/2] \geq \left(1+\varepsilon/4\right) \left(\Pr[X_0>t] + \frac{\varepsilon}{8}\Exp[X_1]\right). 
    \end{align}
\end{lemma}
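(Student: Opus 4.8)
The plan is to prove the lemma by an averaging (layer-cake) argument over the threshold $t$, organized as a proof by contradiction. I would assume the conclusion fails, i.e. that for every rational $t$ with $a_0\le t\le 1$ the strict reverse inequality
\[
\Pr[X_1 > t + \delta/2] < \left(1+\varepsilon/4\right)\left(\Pr[X_0 > t] + \frac{\varepsilon}{8}\Exp[X_1]\right)
\]
holds. Since $t\mapsto\Pr[X_0>t]$ and $t\mapsto\Pr[X_1>t+\delta/2]$ are right-continuous and non-increasing, taking rationals $t_n\downarrow t_0$ upgrades this to the non-strict version for every \emph{real} $t_0\in[a_0,1]$. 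I would then integrate in $t$ and show that the result contradicts the hypothesis $\Exp[X_1]\ge(1+\varepsilon)\Exp[X_0]$, forcing a good rational $t$ to exist.

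The computational core rests on the layer-cake identity $\Exp[X_b]=\int_0^1\Pr[X_b>t]\,dt$ together with the gap structure $X_b>0\Rightarrow X_b\ge a_b$. The latter makes the survival function $t\mapsto\Pr[X_1>t]$ constant on $[0,a_1)$, equal to $\Pr[X_1>0]$, which lets me evaluate the shifted integral exactly as
\[
\int_0^1\Pr[X_1 > t+\delta/2]\,dt = \Exp[X_1]-\frac{\delta}{2}\,\Pr[X_1>0],
\]
using $\delta/2<a_1$. To control the shift loss I would bound $\Pr[X_1>0]\le\Exp[X_1]/a_1$ (again from the gap structure) and invoke $\delta\le(\varepsilon/8)a_0<(\varepsilon/8)a_1$, giving $\tfrac{\delta}{2}\Pr[X_1>0]\le\tfrac{\varepsilon}{16}\Exp[X_1]$, so the shifted integral is at least $(1-\varepsilon/16)\Exp[X_1]$.

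Before integrating I must ensure the assumed inequality covers all of $[0,1]$, not merely $[a_0,1]$, which is where the gap structure is used a second time. On $[0,a_0)$ both survival functions are constant ($\Pr[X_1>t+\delta/2]=\Pr[X_1>0]$ because $a_0+\delta/2<a_1$, and $\Pr[X_0>t]=\Pr[X_0>0]\ge\Pr[X_0>a_0]$), so the real-$t$ inequality at $t=a_0$ is the strongest and automatically implies the inequality throughout $[0,a_0)$. Integrating over $[0,1]$ then gives
\[
(1-\varepsilon/16)\Exp[X_1]\le\left(1+\varepsilon/4\right)\left(\Exp[X_0]+\frac{\varepsilon}{8}\Exp[X_1]\right),
\]
and substituting $\Exp[X_0]\le\Exp[X_1]/(1+\varepsilon)$ reduces everything (after dividing by $\Exp[X_1]>0$; the case $\Exp[X_1]=0$ is trivial) to the scalar inequality $1-\varepsilon/16\le(1+\varepsilon/4)\big(\tfrac{1}{1+\varepsilon}+\tfrac{\varepsilon}{8}\big)$, which I would verify fails for every $\varepsilon\in(0,1]$, delivering the contradiction.

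I expect the main obstacle to be the bookkeeping around the $\delta/2$ shift: making the shifted layer-cake integral exact requires the gap structure ($\delta/2<a_1$ plus constancy of the survival function below $a_1$), and the quantitative hypothesis $\delta\le(\varepsilon/8)a_0$ is precisely what keeps the shift loss inside the $\varepsilon/16$ budget. A secondary subtlety is ensuring the witnessing threshold is rational and lies in $[a_0,1]$ rather than below $a_0$; the constancy argument on $[0,a_0)$ handles the location, and the right-continuity limit handles the passage from rational to real $t$. The closing scalar inequality in $\varepsilon$ is a routine one-variable check.
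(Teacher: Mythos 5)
Your proposal is correct, but it takes a genuinely different route from the paper's proof. The paper argues directly: it decomposes $\Exp[X_b]=a_b\Pr[X_b>a_b]+\int_{a_b}^1\Pr[X_b>x]\,\d x$ via integration by parts, notes that the hypothesis $\Exp[X_1]\geq(1+\varepsilon)\Exp[X_0]$ forces at least one of the two pieces (boundary term or tail integral) to carry a $(1+\varepsilon/2)$ gap, and splits into cases: in the boundary case $t=a_0$ works immediately from $a_1\leq(1+\varepsilon/8)a_0$, while in the integral case it partitions $(a_0,1]$ into width-$\delta/2$ subintervals and uses a pigeonhole argument plus monotonicity of survival functions to locate an explicit grid point $t=a_0+\ell\delta/2$. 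You instead run a global contradiction: assume failure at every rational $t\in[a_0,1]$, upgrade to all real $t$ by right-continuity, extend to $[0,a_0)$ via constancy of the survival functions below $a_1$ (respectively monotonicity for $X_0$), and integrate once over $[0,1]$, using the exact shifted layer-cake identity $\int_0^1\Pr[X_1>t+\delta/2]\,\d t=\Exp[X_1]-\tfrac{\delta}{2}\Pr[X_1>0]$ together with $\Pr[X_1>0]\leq\Exp[X_1]/a_1$ and $\delta\leq(\varepsilon/8)a_0$ to cap the shift loss at $\tfrac{\varepsilon}{16}\Exp[X_1]$; the lemma then collapses to the scalar claim that $1-\varepsilon/16>(1+\varepsilon/4)\bigl(\tfrac{1}{1+\varepsilon}+\tfrac{\varepsilon}{8}\bigr)$ on $(0,1]$, which checks out (multiplying the difference by $(1+\varepsilon)/\varepsilon$ reduces it to $3/4>3(1+\varepsilon)/16+\varepsilon(1+\varepsilon)/32$, true on $(0,1]$; the case $\varepsilon=0$ is vacuous since it forces $\delta\leq 0$, and $\Exp[X_1]=0$ is trivial as you note). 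The trade-offs: your route avoids both the case split and the discrete partition, compressing everything into one integration and a one-variable inequality, at the cost of needing the right-continuity limit to secure a rational witness; the paper's pigeonhole is constructive, produces the witness on an explicit rational grid with no limiting argument, and in its integral case even yields the stronger constant $(1+\varepsilon/2)$.
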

\begin{proof}
    For $b\in\bit$, let $f_b$ and $F_b$ be the PDF and the CDF of $X_b$, respectively. \ 
    Recall that by integration by parts, 
    \begin{align}
        \Exp[X_b] &= \int_{a_b}^1  x f_b(x)\d x = x F_b(x)\vert_{a_b}^1 - \int_{a_b}^1 F_b(x)\d x = 1 - a_b\Pr[X_b\leq a_b] - \int_{a_b}^1 F_b(x)\d x\\
        &= a_b\Pr[X_b>a_b] + \int_{a_b}^1 \Pr[X_b>x]\d x.
    \end{align}
    The condition that $\Exp[X_1]\geq (1+\varepsilon)\Exp[X_0]$ implies that for $\varepsilon\leq 1$, 
    \begin{align}
    \Exp[X_1]
    &\geq \left(1-\frac{\varepsilon}{4}\right)(1+\varepsilon)\Exp[X_0]+\frac{\varepsilon}{4}\cdot \Exp[X_1] \geq \left(1+\frac{\varepsilon}{2}\right)\Exp[X_0]+\frac{\varepsilon}{4}\cdot \Exp[X_1] \\
    &\geq \left(1+\frac{\varepsilon}{2}\right)\left(\Exp[X_0]+\frac{\varepsilon}{8}\cdot \Exp[X_1]\right).
    \end{align} 
    This further implies one of the following two cases:
    \begin{eqnarray}
    a_1\Pr[X_1>a_1]
    \geq a_0\left(1+\frac{\varepsilon}{2}\right)\left(\Pr[X_0>a_0] +\frac{\varepsilon}{8}\cdot\Exp[X_1]\right),\label{eq:case_1}
    \\
        \int_{a_1}^1 \Pr[X_1>x]\d x\geq (1+\varepsilon/2)\int_{a_0}^1 \left(\Pr[X_0>x] + \frac{\varepsilon}{8}\cdot\Exp[X_1]\right)\d x.\label{eq:case_2}
    \end{eqnarray}
    In the former case of Eq.~\ref{eq:case_1}, since $a_1\leq (1+\varepsilon/8)a_0$, 
    \begin{align}
        \Pr[X_1>a_1] 
        &\geq \frac{1+\varepsilon/2}{1+\varepsilon/8} \left(
        \Pr[X_0>a_0] + \frac{\varepsilon}{8} \Exp[X_1]
        \right) \\
        &\geq (1+\varepsilon/4) \left(
        \Pr[X_0>a_0] + \frac{\varepsilon}{8} \Exp[X_1]
        \right). 
    \end{align}
    In the latter case of Eq.~\ref{eq:case_2}, we partition the interval $(a_0,1]$ into $m=2/\delta$ subintervals of equal length; that is, let $\alpha_\ell=a_0+\delta\ell/2$ (hence $\alpha_0=a_0$, $\alpha_2=a_1$ and $\alpha_m=1$) be the grid points and the subintervals be $\{[\alpha_{\ell-1},\alpha_\ell]: \ell\in[m]\}$. \
    Also define 
    \begin{equation}
        A_\ell := \int_{\alpha_{\ell-1}}^{\alpha_{\ell}} \Pr[X_1>x]\d x, \quad B_\ell := \int_{\alpha_{\ell-1}}^{\alpha_{\ell}} \left(\Pr[X_0>x] + \frac{\varepsilon}{8}\cdot\Exp[X_1]\right)\d x. 
    \end{equation}
    Now by Eq.~\ref{eq:case_2}, we have 
    \begin{align}
        \sum_{\ell=3}^m A_\ell \geq (1+\varepsilon/2) \sum_{\ell=1}^m B_\ell \geq (1+\varepsilon/2) \sum_{\ell=1}^{m-2} B_\ell. 
    \end{align}
    This implies that there exists $\ell\in[m-2]$ such that $A_{\ell+2}\geq (1+\varepsilon/2) B_\ell$. \ 
    Since $\Pr[X>x]$ is monotonically nonincreasing, 
    we have $\Pr[X_1>\alpha_{\ell+1}]\geq A_{\ell+2}$ and $B_\ell\geq \Pr[X_0>\alpha_\ell]$, and therefore 
    \begin{align}
        \Pr[X_1> t+\delta/2] \geq (1+\varepsilon/2)\left(\Pr[X_0> t] + \frac{\varepsilon}{8}\Exp[X_1]\right)
    \end{align}
    for $t=\alpha_\ell=a_0+\ell\delta/2$. 
\end{proof}

We now apply this lemma to our setting. Define $Z_{\tau,t}(\vec C,\vec s):=\Id[Y_\tau(\vec C,\vec s)>t]$.
\begin{corollary}\label{corollary:gap}
Assume that there is $1\geq \varepsilon=n^{-O(1)}$ such that for $T\geq\frac{16}{\varepsilon}\log(\frac{N}{\varepsilon}))$ there is $\tau\in[1/2,1]$ such that the inequality \eq{contra-4} holds.
Then for $\delta=\frac{1}{2^{B/2}T}$, there exists $t\geq \frac{1}{2^{B/2}}(\tau-1/T)$ such that 
\begin{enumerate}
    \item $\nu_1=\underset{\vec C\sim D^k,\vec s\sim p_{\vec C}}{\Pr}\left[Z_{\tau,t+\delta/2}\right]\geq \Omega(\varepsilon^2 k/N)$, and 
    \item $\nu_0=\underset{\vec C\sim D^k,\vec s\sim U^k}{\Pr}[Z_{\tau-1/T,t}]\leq (1+\varepsilon/8)^{-1}\nu_1$. 
\end{enumerate}
\end{corollary}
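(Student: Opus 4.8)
The plan is to instantiate the abstract tail-gap statement of \Cref{lemma:gap} with the two random variables arising from $Y_\tau$ in the yes- and no-cases, and then read off the two conclusions by keeping or discarding the nonnegative terms in the resulting inequality. Concretely, I would set $X_1 := Y_\tau(\vec C,\vec s)$ with $\vec C\sim\D^k,\vec s\sim p_{\vec C}$ (so that $\Exp[X_1]=\mu_1(\tau)$) and $X_0 := Y_{\tau-1/T}(\vec C,\vec s)$ with $\vec C\sim\D^k,\vec s\sim\U^k$ (so that $\Exp[X_0]=\mu_0(\tau-1/T)$); both take values in $[0,1]$. The first substantive point is to pin down the atom sizes required by the lemma: for fixed $(\vec C,\vec s)$, $Y_\tau$ is a sum over outputs $\vec z$ of $\Pr[\A(\vec C)=\vec z]$ restricted to an event that only includes $\vec z$ with $\Pr[\A(\vec C)=\vec z]\geq\tau/2^{B/2}$, so $Y_\tau>0$ forces $Y_\tau\geq\tau/2^{B/2}$. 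This gives $a_1=\tau/2^{B/2}$ and $a_0=(\tau-1/T)/2^{B/2}$, hence $\delta=a_1-a_0=\frac{1}{2^{B/2}T}$ as claimed, and the threshold the lemma produces automatically satisfies $t\geq a_0=\frac{1}{2^{B/2}}(\tau-1/T)$.

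Next I would verify the two hypotheses of \Cref{lemma:gap}, applied with the rescaled parameter $\varepsilon'=\varepsilon/2$. The ratio condition $a_1\leq(1+\varepsilon'/8)a_0$ becomes $\tau/(\tau-1/T)\leq 1+\varepsilon/16$, which follows from $\tau\geq 1/2$ together with $T\geq\frac{16}{\varepsilon}\log(N/\varepsilon)$, forcing $2/T$ below the level needed. The expectation condition $\Exp[X_1]\geq(1+\varepsilon')\Exp[X_0]$ is exactly $\mu_1(\tau)\geq(1+\varepsilon/2)\mu_0(\tau-1/T)$, i.e.\ the hypothesis \eq{contra-4} with $\tau=1/2+j/T$. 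Substituting $\varepsilon'=\varepsilon/2$ into the conclusion of \Cref{lemma:gap} (so that $\varepsilon'/4=\varepsilon/8$ and $\varepsilon'/8=\varepsilon/16$) yields
\begin{equation}
\nu_1=\Pr[X_1>t+\delta/2]\geq(1+\varepsilon/8)\Big(\Pr[X_0>t]+\tfrac{\varepsilon}{16}\Exp[X_1]\Big)=(1+\varepsilon/8)\Big(\nu_0+\tfrac{\varepsilon}{16}\mu_1(\tau)\Big),
\end{equation}
where I have identified $\Pr[X_1>t+\delta/2]$ and $\Pr[X_0>t]$ with $\nu_1$ and $\nu_0$ via the definitions of $Z_{\tau,t+\delta/2}$ and $Z_{\tau-1/T,t}$.

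The two conclusions then follow by splitting this single inequality. Discarding the nonnegative term $\tfrac{\varepsilon}{16}\mu_1(\tau)$ gives $\nu_1\geq(1+\varepsilon/8)\nu_0$, which is conclusion 2. Keeping instead the expectation term gives $\nu_1\geq\frac{\varepsilon}{16}\mu_1(\tau)$, so conclusion 1 reduces to showing $\mu_1(\tau)=\Omega(\varepsilon k/N)$. For this I would combine the lower bound $\mu_1(\tau)\geq\frac{bk}{N}(p(\tau)+q-1)-O(1/N^2)$ from the derivation of \eq{mu_1} with the heavy-hitter bound $p>\frac{b(1-q)+\varepsilon}{b-1}$ of \Cref{lemma:heavy_hitter_prob} and the monotonicity $p(\tau)\geq p(1)=p$ (the threshold $\tau/2^{B/2}$ being nonincreasing in $\tau$); a short manipulation gives $p(\tau)+q-1>\varepsilon/(b-1)$, whence $\mu_1(\tau)=\Omega(\varepsilon k/N)$ and therefore $\nu_1=\Omega(\varepsilon^2 k/N)$.

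The main obstacle is the bookkeeping around the $\varepsilon$-rescaling: the hypothesis supplies a $(1+\varepsilon/2)$ multiplicative gap while \Cref{lemma:gap} is phrased with a generic $(1+\varepsilon')$, and one must choose $\varepsilon'=\varepsilon/2$ precisely so that the lemma's output factor collapses to $(1+\varepsilon/8)$ and the surviving additive term carries weight $\frac{\varepsilon}{16}\mu_1(\tau)$, which is exactly what produces the $\varepsilon^2$ scaling in conclusion 1. The remaining ingredients—reading off the atom sizes $a_0,a_1$ from the structure of $Y_\tau$ and confirming the ratio condition from the lower bound on $T$—are routine once the correct shifted random variables $X_0=Y_{\tau-1/T}$ and $X_1=Y_\tau$ have been identified.
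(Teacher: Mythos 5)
Your proof is correct and follows essentially the same route as the paper's: the same choice of $X_1=Y_\tau$ (yes-case) and $X_0=Y_{\tau-1/T}$ (no-case), the same atom sizes $a_0,a_1$ and application of \Cref{lemma:gap}, and the same split of the resulting inequality into the two conclusions, with conclusion 1 closed via $\mu_1(\tau)\geq bk(p(\tau)+q-1)/N\geq\Omega(\varepsilon k/N)$. Your explicit rescaling $\varepsilon'=\varepsilon/2$ (which the paper leaves implicit) and the spelled-out monotonicity argument $p(\tau)\geq p(1)$ are just more careful renderings of steps the paper states tersely.
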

\begin{proof}
    We choose $X_1$ to be $Y_\tau(\vec C,\vec s)$ for $\vec s\sim p_{\vec C}$ and $\vec C\sim \D^k$. \ 
    Similarly, we choose $X_0$ to be $Y_{\tau-1/T}(\vec C,\vec s)$ for $\vec C\sim \D^k$ and $\vec s\sim \U^k$. \ Then, by Eq.~\ref{eq:contra-4}, we have $\Exp[X_1]\geq(1+\varepsilon/2)\Exp[X_0]$.

    By definition, $Y_\tau(\vec C,\vec s)>0$ implies that there exists a $\vec z$ that is output by $\A(\vec C)$ with probability $\tau/2^{B/2}$ and there is an index $i\in[k]$ such that $s_i=z_i$. \ 
    The probability of sampling such a $\vec z$ is at least $\frac{\tau}{2^{B/2}}$. \ 
    Thus $Y_\tau(\vec C,\vec s)>0$ implies that $Y_\tau(C,s)>\frac{\tau}{2^{B/2}}$. \ 
    To apply \Cref{lemma:gap}, we set $a_0=\frac{\tau-1/T}{2^{B/2}}$, $\delta=\frac{1}{T2^{B/2}}$, and $a_1=a_0+\delta$. \ 
    The ratio $\frac{a_1}{a_0}=\frac{\tau}{\tau-1/T}\leq 1+\frac{2}{\tau T}=1+o(\varepsilon)$ for $T\geq \Omega(\frac{1}{\varepsilon}\log(N/\varepsilon))$. \ 
    Now applying \Cref{lemma:gap}, there exists $t$ such that
    \begin{align}
        \underset{\vec C,\vec s\sim p_{\vec C}}{\Pr}[Y_\tau(\vec C,\vec s) > t + \delta/2] 
        \geq \left(1+\frac{\varepsilon}{8}\right)\left(\underset{\vec C\sim \D^k,\vec s\sim U^k}{\Pr}[Y_{\tau-1/T}(\vec C,\vec s)>t] + \frac{\varepsilon}{16}\Exp_{\vec C\sim \D^k,\vec s\sim p_{\vec C}}[Y_\tau(\vec C,\vec s)]\right).\label{eq:gap}
    \end{align}
    This immediately implies the second conclusion. Further, since $\Exp_{\vec C\sim \D^k,\vec s\sim p_{\vec C}}[Y_\tau(\vec C,\vec s)]\geq bk(p(\tau)+q-1)/N \geq \varepsilon k/N$, plugging it into Eq.~\ref{eq:gap} implies the first conclusion. \
\end{proof}

The first conclusion in \Cref{corollary:gap} implies that the event of acceptance cannot occur with zero probability, and thus there is always a suitable choice of the range $R$ of the hash function in the QCAM protocol. \ 
The second implies that for a sufficiently long list of size $M\geq N^3$, the number of 1's in the yes case must be at least $(1+O(\varepsilon))$ times that in the no case with overwhelming probability. \ 
We will formally show these in the following proofs. \ 

We now explain how to approximately compute $Z_{\tau,t}(\vec C,\vec s)$ in quantum time $2^Bn^{O(1)}$ given $\vec C,\vec s$: \  
Consider the following sampling process $\mathcal{Z}(\vec C,\vec s)$ to confidence level $1-\eta$ for $\eta=2^{-n^{O(1)}}$:
\begin{enumerate}
    \item For $\delta:=\frac{1}{2^{B/2}T}$, take $K=\frac{2}{\delta^2}((nk+2)\ln 2+\ln(1/\eta))=2^Bn^{O(1)}$ samples $\vec z_1,\ldots,\vec z_K\sim\A(\vec C)$ to obtain an approximation of $\tilde p_{\A(\vec C)}$ of the distribution of $\A(\vec C)$. \ 
    \item Take $L=\frac{32}{\delta^2}\ln(2/\eta)=2^Bn^{O(1)}$ samples $\vec z_1,\ldots,\vec z_L\sim\A(\vec C)$, and count the fraction of samples 
    \begin{align}
        \tilde y = \frac{1}{L}\cdot \#\left\{j\in[L]: \tilde p_{\A(\vec C)}(\vec z_j)\geq \frac{\tau-1/{2T}}{2^{B/2}}\wedge \exists i: z_{j,i}=s_i\right\}\label{eq:fraction_estimation},
    \end{align}
    where $z_{j,i}$ is the $i$th entry of $\vec z_j$.
    \item If $\tilde y\geq t+\delta/4$, output $1$; otherwise output $0$. 
\end{enumerate}

To prove that the sampling process $\mathcal{Z}$ produces the correct output with high probability (\Cref{lemma:low-error}), we prove that each step of $\mathcal{Z}(\vec C,\vec s)$ has low errors. 
\begin{lemma}\label{lemma:approx-p}
    Step 1 outputs an approximation $\tilde p_{\A(\vec C)}$ that is close to $\A(\vec C)$'s distribution $p_{\A(\vec C)}(\vec z)=\Pr[\A(C)=\vec z]$ in $\ell_\infty$-distance at most $\delta/2$ with probability at least $1-\eta/2$. \ 
\end{lemma}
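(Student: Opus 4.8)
The plan is to identify $\tilde p_{\A(\vec C)}$ with the empirical frequency distribution built from the $K$ samples, namely $\tilde p_{\A(\vec C)}(\vec z) = \frac{1}{K}\#\{j\in[K]: \vec z_j = \vec z\}$, and to control its $\ell_\infty$ deviation from the true distribution $p_{\A(\vec C)}$ by a concentration-plus-union-bound argument. For each fixed string $\vec z \in \bit^{nk}$, the indicators $\Id[\vec z_j = \vec z]$ are i.i.d.\ Bernoulli random variables with mean $p_{\A(\vec C)}(\vec z)$, so $\tilde p_{\A(\vec C)}(\vec z)$ is an average of $K$ independent $[0,1]$-valued variables whose expectation is exactly $p_{\A(\vec C)}(\vec z)$.

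First I would apply Hoeffding's inequality coordinate-by-coordinate: for any fixed $\vec z$,
\begin{equation}
    \Pr\left[\left|\tilde p_{\A(\vec C)}(\vec z) - p_{\A(\vec C)}(\vec z)\right| \geq \frac{\delta}{2}\right] \leq 2\exp\left(-\frac{K\delta^2}{2}\right).
\end{equation}
Next I would take a union bound over all $2^{nk}$ strings $\vec z\in\bit^{nk}$, which bounds the probability that the $\ell_\infty$ error exceeds $\delta/2$ by $2^{nk+1}\exp(-K\delta^2/2)$.

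Finally, I would substitute the prescribed sample size $K = \frac{2}{\delta^2}\big((nk+2)\ln 2 + \ln(1/\eta)\big)$ and verify that this renders the bound at most $\eta/2$. Indeed, this choice gives $K\delta^2/2 = (nk+2)\ln 2 + \ln(1/\eta)$, so that $\exp(-K\delta^2/2) = 2^{-(nk+2)}\eta$ and hence $2^{nk+1}\exp(-K\delta^2/2) = \eta/2$, which is exactly the claimed failure probability. This establishes the success probability $1-\eta/2$.

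There is no genuine obstacle here, as this is a standard $\ell_\infty$ empirical-distribution estimate; the only point worth flagging is that the union bound runs over the exponentially large domain $\bit^{nk}$, and the value of $K$ in Step~1 was evidently chosen precisely so that the $(nk+2)\ln 2$ term absorbs this union-bound cost while keeping $K = 2^B n^{O(1)}$. I would also note in passing that the argument is uniform in $\vec C$, so the bound holds conditioned on any fixed input circuit tuple, which is what the subsequent steps of $\mathcal{Z}(\vec C,\vec s)$ require.
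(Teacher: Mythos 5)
Your proof is correct and is essentially identical to the paper's: both apply Hoeffding's inequality to each fixed $\vec z\in\bit^{nk}$ (yielding per-string failure probability $\eta/2^{nk+1}$) and then union-bound over all $2^{nk}$ strings to get total failure probability $\eta/2$. Your write-up is in fact slightly more careful than the paper's, since you make the empirical-frequency definition of $\tilde p_{\A(\vec C)}$ explicit and verify the constant in $K$ exactly, including the factor of $2$ from the two-sided Hoeffding bound.
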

\begin{proof}
    By Hoeffding's inequality, for every $\vec z\in\bit^{nk}$, with probability at least $1-\eta/2^{nk+1}$, 
    \begin{align}\label{eq:31}
        \left| p_{\A(\vec C)}(\vec z) - \tilde p_{\A(\vec C)}(\vec z)\right| \leq \frac{\delta}{2}. \ 
    \end{align}
    By a union bound, with probability at most $\eta/2$, every $O$ satisfies \eq{31}, implying that their $\ell_\infty$-distance is at most $\delta/2$. \
\end{proof}
\begin{lemma}\label{lemma:low-error}
    The sampling process $\mathcal{Z}$ solves the promise problem of distinguishing (yes) $Z_{\tau,t+\delta/2}(\vec C,\vec s)=1$ and (no) $Z_{\tau-1/T,t}(\vec C,\vec s)=0$ to within error $\eta$. That is, 
    \begin{enumerate}
        \item If $Z_{\tau,t+\delta/2}(\vec C,\vec s)=1$, then $\mathcal{Z}(\vec C,\vec s)=1$ with probability at least $1-\eta$.  
        \item If $Z_{\tau-1/T,t}(\vec C,\vec s)=0$, then $\mathcal{Z}(\vec C,\vec s)=0$ with probability at least $1-\eta$.
    \end{enumerate}
    Furthermore, $\mathcal{Z}$ runs in time $2^Bn^{O(1)}$ for $\eta=2^{-n^{O(1)}}$. \ 
\end{lemma}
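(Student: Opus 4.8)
The plan is to pass through an intermediate ``ideal'' acceptance fraction computed from the Step 1 estimate and to sandwich it between the two true quantities $Y_\tau$ and $Y_{\tau-1/T}$ that define the promise. Writing $p,\tilde p$ for $p_{\A(\vec C)},\tilde p_{\A(\vec C)}$, I define
\[
    \hat y := \underset{\vec z\sim\A(\vec C)}{\Pr}\left[\tilde p(\vec z)\geq \tfrac{\tau-1/2T}{2^{B/2}}\ \wedge\ \exists i: z_i=s_i\right],
\]
so that the empirical fraction $\tilde y$ of Eq.~\ref{eq:fraction_estimation} is exactly the average of $L$ i.i.d.\ indicators, each of mean $\hat y$ (conditioned on the Step 1 output and on $\vec C,\vec s$). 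The process has two independent sources of randomness, the histogram of Step 1 and the fresh samples of Step 2, so I allocate failure probability $\eta/2$ to each and finish with a union bound.

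The key step is the sandwich. Condition on the good event of \Cref{lemma:approx-p}, i.e.\ $\lVert \tilde p - p\rVert_\infty\leq\delta/2$, which holds with probability at least $1-\eta/2$. The threshold offset is chosen precisely so that $\delta/2=\tfrac{1/2T}{2^{B/2}}$ absorbs this $\ell_\infty$ error in both directions. If $\vec z$ is counted by $Y_\tau$ then $p(\vec z)\geq \tau/2^{B/2}$ forces $\tilde p(\vec z)\geq(\tau-1/2T)/2^{B/2}$, so $\vec z$ is counted by $\hat y$; conversely, if $\vec z$ is counted by $\hat y$ then $\tilde p(\vec z)\geq(\tau-1/2T)/2^{B/2}$ forces $p(\vec z)\geq(\tau-1/T)/2^{B/2}$, so $\vec z$ is counted by $Y_{\tau-1/T}$. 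Since the matching clause $\exists i:z_i=s_i$ is identical throughout, this yields $Y_\tau\leq \hat y\leq Y_{\tau-1/T}$.

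It then remains to control the Step 2 sampling error. By Hoeffding's inequality applied to the $L=\tfrac{32}{\delta^2}\ln(2/\eta)$ indicators, $|\tilde y-\hat y|<\delta/4$ except with probability at most $\eta/2$. In the yes-case, $Y_\tau>t+\delta/2$ together with the sandwich gives $\tilde y>\hat y-\delta/4\geq Y_\tau-\delta/4>t+\delta/4$, so $\mathcal{Z}$ outputs $1$; in the no-case, $Y_{\tau-1/T}\leq t$ gives $\tilde y<\hat y+\delta/4\leq Y_{\tau-1/T}+\delta/4\leq t+\delta/4$, so $\mathcal{Z}$ outputs $0$. Here the decision margin $\delta/4$ against the promised level gap $\delta/2$ is exactly what lets the Hoeffding error be absorbed, and a union bound over the two failure events bounds the total error by $\eta$.

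For the runtime, each of the $K+L$ draws from $\A(\vec C)$ costs quantum polynomial time, building and querying the histogram costs $\poly(K,nk)$, and each matching check costs $O(nk)$. Since $\log(N/\varepsilon)=O(n)$ and $\varepsilon^{-1}=n^{O(1)}$ give $T=n^{O(1)}$, while $k=O(n^2)$ and $\ln(1/\eta)=n^{O(1)}$ for $\eta=2^{-n^{O(1)}}$, we get $1/\delta^2=2^Bn^{O(1)}$ and hence $K,L=2^Bn^{O(1)}$, for a total runtime of $2^Bn^{O(1)}$. The one delicate point, and the real crux, is verifying that the threshold offset $1/2T$ and the decision margin $\delta/4$ are jointly consistent with the two promise gaps ($\tau$ versus $\tau-1/T$, and $t+\delta/2$ versus $t$); once that bookkeeping is set up correctly, everything else is routine concentration.
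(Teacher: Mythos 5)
Your proof is correct and follows essentially the same route as the paper's: both pass through the intermediate quantity $\hat y$ (the probability that the Step-1 estimate $\tilde p$ clears the offset threshold $(\tau-1/2T)/2^{B/2}$ with a match), sandwich it between $Y_\tau$ and $Y_{\tau-1/T}$ via the $\ell_\infty$ bound of \Cref{lemma:approx-p}, and control $|\tilde y-\hat y|$ by Hoeffding with a union bound over the two failure events. The only difference is bookkeeping: you use a two-sided Hoeffding margin of $\delta/4$ where the paper uses one-sided margins of $\delta/8$ (yielding $t+3\delta/8$ versus $t+\delta/8$ around the decision threshold $t+\delta/4$), and both choices are consistent with $L=\tfrac{32}{\delta^2}\ln(2/\eta)$.
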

\begin{proof}
    
    For $\vec C,\vec s$ satisfying $Z_{\tau,t+\delta/2}(\vec C,\vec s)=1$, by Hoeffding's inequality, 
    with probability at least $1-e^{-L\delta^2/32}=1-\eta/2$, 
    \begin{equation}
        \tilde y
        \geq \Pr_{\vec z\sim\A(\vec C)}\left[\tilde p_{\A(\vec C)}(\vec z)\geq \frac{\tau}{2^{B/2}}-\frac{\delta}{2} \wedge \exists i:z_i=s_i\right] - \frac{\delta}{8}.
    \end{equation}
    Then, assume that the approximation $\tilde p_{\A(\vec C)}(\vec z)$ obtained from Step 1 is close to $p_{\A(\vec C)}(\vec z)$ in $\ell_\infty$-distance $\delta/2$, which happens with probability at least $1-\eta/2$ by \Cref{lemma:approx-p},
    \begin{equation}
        \tilde y \geq \Pr_{\vec z\sim\A(\vec C)}\left[\Pr[\A(\vec C)=\vec z]\geq \frac{\tau}{2^{B/2}} \wedge \exists i:z_i=s_i\right] - \frac{\delta}{8} = Y_\tau(\vec C,\vec s) - \frac{\delta}{8} > t +\frac{3\delta}{8}.  
    \end{equation}
    The last inequality follows from the assumption that $Y_\tau(\vec C,\vec s)>t+\delta/2$. \ 
    
    For $(\vec C,\vec s)$ satisfying $Z_{\tau,t}(\vec C,\vec s)=0$, again by Hoeffding's inequality, with probability at least $1-\eta/2$, 
    \begin{equation}
        \tilde y
        \leq \Pr_{\vec z\sim\A(\vec C)}\left[\tilde p_{\A(\vec C)}(\vec z)\geq \frac{\tau}{2^{B/2}}-\frac{\delta}{2}\wedge \exists i:z_i=s_i\right] + \frac{\delta}{8}.
    \end{equation}
    Again, with probability at least $1-\eta/2$, $\tilde p_{\A(\vec C)}(\vec z)$ is $\delta/2$ close to $p_{\A(\vec C)}(\vec z)$,
    \begin{equation}
        \tilde y \leq \Pr_{\vec z\sim\A(\vec C)}\left[\Pr[\A(\vec C)=\vec z]\geq \frac{\tau}{2^{B/2}}-\delta\wedge \exists i:z_i=s_i\right] + \frac{\delta}{8}
        = Y_{\tau-1/T}(\vec C,\vec s) + \frac{\delta}{8}\leq t + \frac{\delta}{8}. 
    \end{equation}
    The last inequality follows from the assumption that $Y_{\tau-1/T}(\vec C,\vec s)\leq t$. Combined with the observation that $\mathcal{Z}$ runs in time $2^Bn^{O(1)}$ for $\eta=2^{-n^{O(1)}}$ (since $K=2^Bn^{O(1)}$ samples are obtained by calling $\A(\vec{C})$ in step 1), this completes the proof. \ 
\end{proof}

Now we describe the QCAM protocol as follows. 
\begin{enumerate}
    \item Both Arthur and Merlin are given access to classical advice $\tau,T,t,R$ ($\delta$ can be computed from them).

    \item Arthur sends a hash function $h$ that maps $[M]\to[R]$ and a random image $y\in[R]$. \ 

    \item Merlin sends an index $i$. \ 

    \item Arthur accepts if $h(i)=y$ and $\mathcal{Z}(\vec C_i,\vec s_i)=1$, and rejects otherwise. 
\end{enumerate}

\begin{lemma}\label{lemma:concentration}
    For $M'\geq N^3/k$ size-$k$ sets of circuits and bitstrings, there exist integers $\kappa\in[M']$ such that with probability at least $1-e^{-\Omega(N\varepsilon^6)}$, Arthur accepts between $\kappa$ and $(1+O(\varepsilon))\kappa$ tuples in a yes instance and accepts at most $\kappa'=(1-\varepsilon/32)\kappa$ tuples in a no instance.
\end{lemma}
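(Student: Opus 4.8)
The plan is to reduce the claim to a concentration statement for a sum of independent Bernoulli random variables, after first converting the per-tuple gap between the yes- and no-cases established in \Cref{corollary:gap} into a gap between the \emph{expected} numbers of accepted tuples. First I would fix the random variable to control. For each $i\in[M']$ let $B_i$ be the indicator of the event $\mathcal{Z}(\vec C_i,\vec s_i)=1$, where the randomness is over the draw of the tuple $(\vec C_i,\vec s_i)$ and the internal coins of $\mathcal{Z}$. Since the circuits are drawn i.i.d.\ from $\D$, the strings are drawn independently (from $p_{\vec C_i}$ in the yes-case and from $\U$ in the no-case), and $\mathcal{Z}$ uses fresh coins per tuple, the $B_i$ are i.i.d.\ Bernoulli. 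Hence the count of accepted tuples $X=\sum_{i=1}^{M'}B_i$ is a binomial variable with mean $M'\bar\nu_1$ in the yes-case and $M'\bar\nu_0$ in the no-case, where $\bar\nu_1:=\Pr[\mathcal{Z}=1]$ under $\vec s\sim p_{\vec C}$ and $\bar\nu_0:=\Pr[\mathcal{Z}=1]$ under $\vec s\sim\U$.

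Next I would transfer the gap. By \Cref{lemma:low-error}, $Z_{\tau,t+\delta/2}=1$ forces $\mathcal{Z}=1$ with probability $\geq 1-\eta$, and $Z_{\tau-1/T,t}=0$ forces $\mathcal{Z}=0$ with probability $\geq 1-\eta$, so $\bar\nu_1\geq(1-\eta)\nu_1$ and $\bar\nu_0\leq \nu_0+\eta$. Choosing the confidence parameter of $\mathcal{Z}$ so that $\eta=2^{-n^{O(1)}}$ is negligible against the lower bound $\nu_1\geq\Omega(\varepsilon^2 k/N)$ of \Cref{corollary:gap}, the multiplicative gap $\nu_1\geq(1+\varepsilon/8)\nu_0$ survives as $\bar\nu_1\geq(1+\Omega(\varepsilon))\bar\nu_0$ together with $\bar\nu_1\geq\Omega(\varepsilon^2 k/N)$; in the degenerate regime where $\nu_0$ is itself super-polynomially small the ratio only improves. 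Consequently the means obey $\mu_1:=M'\bar\nu_1\geq(1+\varepsilon/8)\mu_0$, and using $M'\geq N^3/k$ one gets $\mu_1\geq M'\cdot\Omega(\varepsilon^2 k/N)=\Omega(N^2\varepsilon^2)$.

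Finally I would apply a multiplicative Chernoff bound with relative window $\rho=\Theta(\varepsilon)$ chosen small enough that $(1+\rho)/(1+\varepsilon/8)\leq(1-\varepsilon/32)(1-\rho)$, e.g.\ $\rho=\varepsilon/64$, and set $\kappa$ to be the integer nearest $(1-\rho)\mu_1$. In the yes-case Chernoff gives $X\in[(1-\rho)\mu_1,(1+\rho)\mu_1]\subseteq[\kappa,(1+O(\varepsilon))\kappa]$; in the no-case it gives $X\leq(1+\rho)\mu_0\leq(1-\varepsilon/32)\kappa$ by the choice of $\rho$. Each tail has probability $\leq e^{-\Omega(\rho^2\mu_1)}=e^{-\Omega(N^2\varepsilon^4)}\leq e^{-\Omega(N\varepsilon^6)}$, the last inequality holding since $N=2^n\gg\varepsilon^{-2}$; the no-case upper tail is only stronger, as its threshold $\Omega(N^2\varepsilon^2)$ far exceeds $\mu_0$. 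A union bound over the $O(1)$ tail events then yields the stated failure probability.

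The hard part is the second step: preserving the \emph{multiplicative} gap through the approximation error of $\mathcal{Z}$ and through the mismatched thresholds ($\tau$ versus $\tau-1/T$, and $t+\delta/2$ versus $t$) that distinguish the yes- and no-cases. One must verify that $\eta$ can be pushed below $\nu_1$ without violating the $2^Bn^{O(1)}$ runtime (only a polynomial-in-$n$ blowup of $K,L$), and must handle separately the case where $\nu_0$ is negligible so that $\bar\nu_1/\bar\nu_0\geq 1+\Omega(\varepsilon)$ rather than merely $\bar\nu_1-\bar\nu_0\geq\Omega(\varepsilon)\bar\nu_0$. Everything else reduces to a routine Chernoff calculation, the only care being to pick the window $\rho$ small relative to the gap $\varepsilon/8$ yet compatible with the target separation $1-\varepsilon/32$.
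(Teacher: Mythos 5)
Your proposal is correct and follows essentially the same route as the paper's proof: both transfer the per-tuple gap of \Cref{corollary:gap} through the error $\eta$ of $\mathcal{Z}$ via \Cref{lemma:low-error}, treat the $M'$ tuples as i.i.d., and apply a concentration bound to the number of accepted tuples, placing $\kappa$ at the lower end of the yes-case window. The only substantive difference is the concentration tool: the paper uses Hoeffding's additive inequality with deviation $\xi(\nu_1-\eta)$, $\xi=\varepsilon/32$ (exponent $\Omega(N\varepsilon^6 k)$), whereas you use a multiplicative Chernoff bound with $\rho=\varepsilon/64$ (exponent $\Omega(N^2\varepsilon^4)$, which is even stronger, and your explicit handling of the degenerate small-$\mu_0$ regime is a point the paper's additive bound sidesteps automatically); both meet the stated $1-e^{-\Omega(N\varepsilon^6)}$.
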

\begin{proof}

    In the yes-case, by \Cref{corollary:gap}, $\nu_1=\Omega(\varepsilon^2 k/N)$. The probability that Arthur accepts $\mathcal{Z}(\vec C,\vec s)$ is at least $\nu_1-\eta$ for $\eta=2^{-n^{O(1)}}$ by \Cref{lemma:low-error}. For $M'=N^3/k$, by Hoeffding's inequality, with probability at least $1-2e^{-M'(\nu_1-\eta)^2\xi^2}\geq 1-2e^{-\Omega(N\varepsilon^4 k\xi^2)}$, the number of accepted tuples is between $[(1-\xi)M'(\nu_1-\eta),(1+\xi)M'(\nu_1-\eta)]$ for some parameter $\xi\leq\varepsilon$ to be determined later. 
    We may choose $\kappa=(1-\xi)M'(\nu_1-\eta)$. \ 

    In the no-case, the probability that Arthur accepts $\mathcal{Z}(\vec C,\vec s)$ is at most $\nu_0+\eta$ by \Cref{lemma:low-error}. Let $\tilde \nu_0$ denote the fraction of tuples accepted by $Z_{\tau-1/T,\delta}$ in a no instance. \ 
    By Hoeffding's inequality, for $\varepsilon\leq 1$, 
    \begin{align}
        \Pr[\tilde \nu_0 \geq (1-\xi)^2 (\nu_1-\eta)]
        &\leq \Pr[\tilde \nu_0 - (\nu_0 + \eta)\geq (1-2\xi)(\nu_1-\eta)- (\nu_0 + \eta)] \\
        &\leq \Pr[\tilde \nu_0 -  (\nu_0 + \eta) \geq (1-2\xi)(\nu_1-\eta)-\left((1-\varepsilon/8)\nu_1+\eta\right)] \\
        &\leq e^{-M'(\varepsilon/8-2\xi)^2 \nu_1^2+O(M'\eta)}
        = e^{-\Omega(\varepsilon^6 k N)}
    \end{align}
    for $\xi=\varepsilon/32\leq 1/32$. We may choose $\kappa'=(1-\xi^2)M'(\nu_1-\eta)$.

    Therefore, the ratio is
    \begin{align}
        \frac{\kappa'}{\kappa}\leq \frac{(1-\xi)^2 (\nu_1-\eta)}{(1-\xi) (\nu_1-\eta)}=1-\xi=1-\varepsilon/32.
    \end{align}
\end{proof}

We are now ready to prove our main result.
\begin{proof}[Proof of Theorem~\ref{theorem:mlxeb_solves_llqsv}]
    Given $M\geq N^3$ circuits and bitstrings, we can group them into $M'=(M/k)$ size-$k$ sets of circuits and bitstrings. By \Cref{lemma:concentration} of this work and Lemma 5.2 of \cite{aaronson2023certifiedArxiv}, for $R=64\alpha\kappa/\varepsilon$, since $\alpha=1+O(\varepsilon)$, the gap is $\Omega(\varepsilon^2)$. Thus running an $(1/\varepsilon)^{O(1)}$-fold parallel repetition of the above protocol yields a constant gap. Since $1/\varepsilon=n^{O(1)},T=n^{O(1)},R\in[M]$, the advice $\tau,T,t,R$ all require relative precision at most $1/N^{O(1)}$, and hence the length is $O(n)$.

\end{proof}

\begin{proof}[Proof of Theorem~\ref{theorem:mlxeb_entropy}]
    Assuming LLHA holds, \Cref{theorem:mlxeb_solves_llqsv} implies that the output of the algorithm that solves ${\rm MLXEB}$ must contain entropy, completing the proof of \Cref{theorem:mlxeb_entropy}.
\end{proof}

\begin{proof}[Proof of Theorem~\ref{theorem:lxeb_solves_llqsv}]
The proof for the LXEB case follows trivially by first changing  $\vec C,\D^k,\vec s,\U^k$ into $C,\D,s,\U$ and the condition $\exists i:z_i=s_i$ into $s\in \vec z$. Furthermore, instead of \Cref{lemma:collision}, we use Eq.~103 of \cite{aaronson2023certifiedArxiv}. The rest of the analysis goes through unchanged.
\end{proof}

\rev{
\subsection{Multi-round Analysis}
The single-round result of \Cref{thm:lxeb_entropy} (or Theorem 5.10 of \cite{aaronson2023certifiedArxiv}) and \Cref{theorem:mlxeb_entropy} does not result in a complete and sound single-round protocol. The first reason is that even for a device without entropy, the theorems only guarantee $bq<1$. However, since $b$ cannot be larger than 2 for an honest quantum computer to pass with overwhelming probability, we can only say a device without entropy must fail with probability greater than or equal to $1/2$. Instead, we need it to fail with overwhelming probability.

As such, a multi-round protocol is necessary, which is presented in Figure 1 of \cite{aaronson2023certifiedArxiv}. Theorem 5.11 of \cite{aaronson2023certifiedArxiv} shows that the multi-round protocol has a lower bound on the smooth min-entropy linear in the number of rounds and the number of qubits, and this result is obtained using the entropy accumulation theorem (EAT). We note that the EAT used here is a modified version in Section 4 of \cite{aaronson2023certifiedArxiv} instead of the original one in \cite{dupuis2020entropy}. The application of EAT in \cite{aaronson2023certifiedArxiv} uses the single round von Neumann entropy bound provided by Theorem 5.10 of \cite{aaronson2023certifiedArxiv}. This is not affected by the fact that Corollary 5.9 of \cite{aaronson2023certifiedArxiv} is incorrect since Theorem 5.10 of \cite{aaronson2023certifiedArxiv} is still correct (use our \Cref{theorem:lxeb_solves_llqsv} instead of Theorem 5.8 of \cite{aaronson2023certifiedArxiv}).

We can define a similar multi-round protocol, where the only difference is instead of sending one circuit per round, we send $k$ circuits. In \oldtwo{exactly} the same manner as \cite{aaronson2023certifiedArxiv}, we \oldtwo{can}\revtwo{should be able to} apply EAT using the single round von Neumann entropy bound provided by \Cref{theorem:mlxeb_entropy}, which allows us to obtain \oldtwo{the same}\revtwo{a} lower bound on the smooth min-entropy for the modified multi-round protocol. We do not reproduce the multi-round analysis here.\oldtwo{since the application of EAT is trivial.}}

\subsection{Limitations of Asymptotic Guarantees}
\label{sec:limitations-of-asymptotic-analysis}
Three main challenges limit the applicability of the asymptotic result in Theorem~\ref{theorem:mlxeb_entropy} to our finite-sized experiment: our experiment is not in the asymptotic regime, the values of constants in Eq.~\ref{eq:entropy-mlxeb} are not known for our experiment, and there are differences between the protocol we implement experimentally and the protocol required by the asymptotic analysis.

First, the statement of Theorem~\ref{theorem:mlxeb_entropy} applies in the asymptotic limit of large $n$ and extending it to finite-sized experiments would require further refinement of the analysis in Sec.~\ref{sec:complexity-theoretic-security}.

Second, the analysis hinges on the correctness of ${\rm LLHA}_{B}(\D)$. Although an upper bound of $B\leq n/2$ is known due to Grover's algorithm, there is no lower bound on $B$ for any distribution $\D$. 

Third, the analysis guarantees single-round entropy for an output of $k$ bitstrings. To lower bound the entropy across multiple rounds, Ref.~\cite{aaronson2023certified} defines a protocol and proves entropy accumulation over multiple rounds, but the protocol is very different from our experiment. In particular, the authors perform multiple rounds of \XEB{} tests to estimate the probability of passing the test, which allows them to lower bound the entropy. If each round is over hundreds of circuits and we have hundreds of test rounds, the cost of computing the probabilities would be prohibitive.

For these reasons, we cannot determine the soundness of our protocol solely on the basis of complexity-theoretic analysis that may or may not apply for experiments of our scale. Consequently, we instead focus on finite-sized adversaries performing realistic attacks with numerically bounded computational power.

\section{Overview of the Protocol}\label{sec:protocol-summary}

To address the limitations of the asymptotic guarantees presented above, we implement a slightly modified protocol in our experiment, the security of which we analyze against a finite-sized adversary in Sec.~\ref{sec:security-analysis-of-protocol}. The protocol has been presented in Methods and Fig.~1 of the main text, and we briefly summarise it below.

\subsection{Protocol Details}

The protocol consists of four main steps: (1) challenge circuit generation, (2) client-server interaction, (3) \XEB{} score verification, and (4) randomness extraction.

\begin{enumerate}
    \item \textbf{Challenge Circuit Generation}: The client generates a large number of $n$-qubit challenge circuits $C_i$ using a pseudorandom number generator with a $r$-bit random seed. The circuits are chosen to balance the following two considerations: (1) no classical adversary should be able to simulate the circuits within the time that a quantum computer takes to run them, and (2) a classical supercomputer should be able to validate them. Detailed discussion on circuit selection is deferred to Sec.~\ref{sec:protocol_circuit_selection}. These circuits are kept secret from the client.

    \item \textbf{Client-Server Interaction}: The client sends a batch of circuits to the quantum server and requests one sample from the output distribution of each circuit. The circuits are submitted in batches of $b$ jobs, where each job consists of two circuits stitched via a layer of mid-circuit measurement and reset. Such ``batching'' and ``stitching'' allow us to amortize network latencies and execution overheads across $2b$ samples (Note: the symbol  $b$ used here to denote batch size is not related to the symbol $b$ used to parameterize LLHA in \ref{sec:complexity-theoretic-security}). Each batch is preceded by an agreed-upon ``precheck'' circuit, which announces the client's intention to submit a batch, checks for the server's readiness to accept jobs, and triggers calibrations if necessary. For each batch, the quantum server executes the circuits on the quantum computer and returns the samples $\{x_1, \ldots, x_{2b}\}$, with $x_i \in\{0,1\}^n$. On the Quantinuum H2-1 hardware, detectable faults such as loss of an ion from the trap may prevent a batch from completing execution. As a result, if the entire batch is not returned within a time of $\batchcutoff$, all outstanding jobs are cancelled, and any samples collected from this batch are discarded. The client-server interaction continues until we receive $M$ valid samples. If the average time per sample of the retained rounds exceeds a threshold $\Tqcthreshold$, the protocol aborts.

    \item \textbf{$\XEB{}$ Score Verification}:
    Having collected $M$ samples, $\{x_i\}_{i \in [M]}$, corresponding to $M$ circuits, the client randomly selects a subset of circuit-bitstring pairs defined by a set of $m$ indices $\V \subset [M]$  and computes the linear cross-entropy benchmarking score, referred to as $\XEBTest$ throughout, which we define as
    \begin{equation}\label{eq:xeb}
        \XEBTest = \frac{N}{m}\sum_{i \in \V} p_{C_i}(x_i) - 1,
    \end{equation}
    where $N=2^n$ and $p_{C_i}(x_i)$ is the probability of measuring the bitstring $x_i$ after executing circuit $C_i$ on an ideal quantum computer. These probabilities are computed classically on a large supercomputer by simulating $\{C_i\}_{i \in \V}$. If $\XEBTest$ is smaller than the protocol parameter $\chi$, the client aborts the protocol.

    \item \textbf{Randomness Extraction}: If the protocol does not abort, the client uses a seeded randomness extractor with seed $\ExtSeed$ and input $X^M := \{x_1,\ldots, x_M\}$ to extract the final random bits in register $K$.
\end{enumerate}

\subsection{Necessary Conditions for Protocol Success}\label{sec:conditions}

Let $\A$ denote the power of the adversary's supercomputer (e.g., measured by the number of floating-point operations per second, or FLOPS), and let $\budget$ denote the cost to simulate a given circuit exactly (e.g., measured by the number of floating-point operations, or FLOP count).
Let $\Tqcthreshold$ denote the average allowed time that the quantum computer can take to return one sample. Since the circuit simulation cost scales linearly in the target fidelity of simulation when using exact tensor network contraction, the adversary can simulate each circuit to a fidelity of $\A\cdot \Tqcthreshold/\budget$ in time $\Tqcthreshold$. Then, the protocol is secure only if an honest quantum server's fidelity on challenge circuits, $\phi$, is much larger than the fidelity a classical adversary may obtain. That is,
\begin{equation}
   \phi \gg \A \cdot \Tqcthreshold/\budget.
\end{equation}

Loosely speaking, the protocol guarantees true entropy resulting from the measurement of a quantum state only if three conditions are met. First, the quantum computer must be able to execute the challenge circuits with a high fidelity $\phi$. Second, the quantum computer must be able to return bitstrings within a short time $\Tqcthreshold$. Third, the verifier should be able to validate circuits with high simulation cost $\budget$ using supercomputers. In our implementation, all three conditions are satisfied.

\subsection{Difference from Existing Protocols}
In the protocols presented and analyzed in Refs.~\cite{aaronson2023certified,bassirian2021certified}, many samples from the same circuit $C$ are demanded in each round, and the cross-entropy benchmarking score is computed over those samples for each circuit. Specifically, in Refs.~\cite{aaronson2023certified,bassirian2021certified} the pertinent score is $\XEB = N\langle p_\text{C}(x_i)\rangle_{i \in \V} - 1$. In contrast, in our protocol we demand only a single bitstring from each circuit, and we compute the $\XEB$ score as an average over multiple circuits $C_i$: $\XEB = N\langle p_{C_i}(x_i)\rangle_{i \in \V} - 1$. This difference is motivated by the fact that, for an adversarial server that may be trying to pass off classically simulated samples as true measurements of a quantum circuit, classically sampling many shots of a single circuit is much easier than sampling a single shots for many circuits. In other words, classical simulation cost is strongly sublinear in the number of shots. For example, this is shown in Figure 2 of Ref.~\cite{Liu2024}. At the same time, for the trapped-ion quantum processor that we use in this experiment, the device runtime
scales nearly linearly with the number of shots.

\section{Security Analysis of the Protocol}
\label{sec:security-analysis-of-protocol}

Our experiment is motivated by complexity-theoretic security guarantees discussed in Sec.~\ref{sec:complexity-theoretic-security}. However, these guarantees do not directly apply to our finite-sized experiment (Sec.~\ref{sec:limitations-of-asymptotic-analysis}).
In this section we analyze the protocol as implemented and provide a proof that it is secure against a \rev{restricted} \old{realistic} class of adversaries operating under reasonable assumptions.

\rev{We note that the adversary we consider here is chosen such that it implements state-of-the-art techniques on both classical and quantum computers that can be realized in the near term.
Therefore, it can only execute a very limited set of actions, and the results in this section do not apply if the actual adversary strategy deviates from our prescription. We discuss the details of the adversary model itself and the limitations of the model in detail in \Cref{sec:finite-size-assumption}. Nevertheless, the analysis provides insight into the relationship between security, entropy, quantum computer fidelity and speed, and classical computational power, and provides order-of-magnitude estimates relevant for near-term implementations.

When we determine the amount of entropy safe to extract, we must make an assumption on the capabilities of the adversary. A way to numerically estimate a confidence on security is to an assume unreasonably powerful adversary (say 10 or 100 times more powerful than expected). 
We do explicitly acknowledge that if the actual adversary is fundamentally different from the analyzed adversary (e.g. a fault tolerant quantum computer executing a novel low-entropy algorithm that scores high on XEB, or a completely novel classical technique that significantly outperforms current state-of-the-art), it is not sufficient to simply adjust the power of the adversary in our model and a fundamentally different approach would be needed. }

\subsection{Security Definition}
\label{sec:security-definition}

\begin{figure}[h]
    \includegraphics[width=0.6\textwidth]{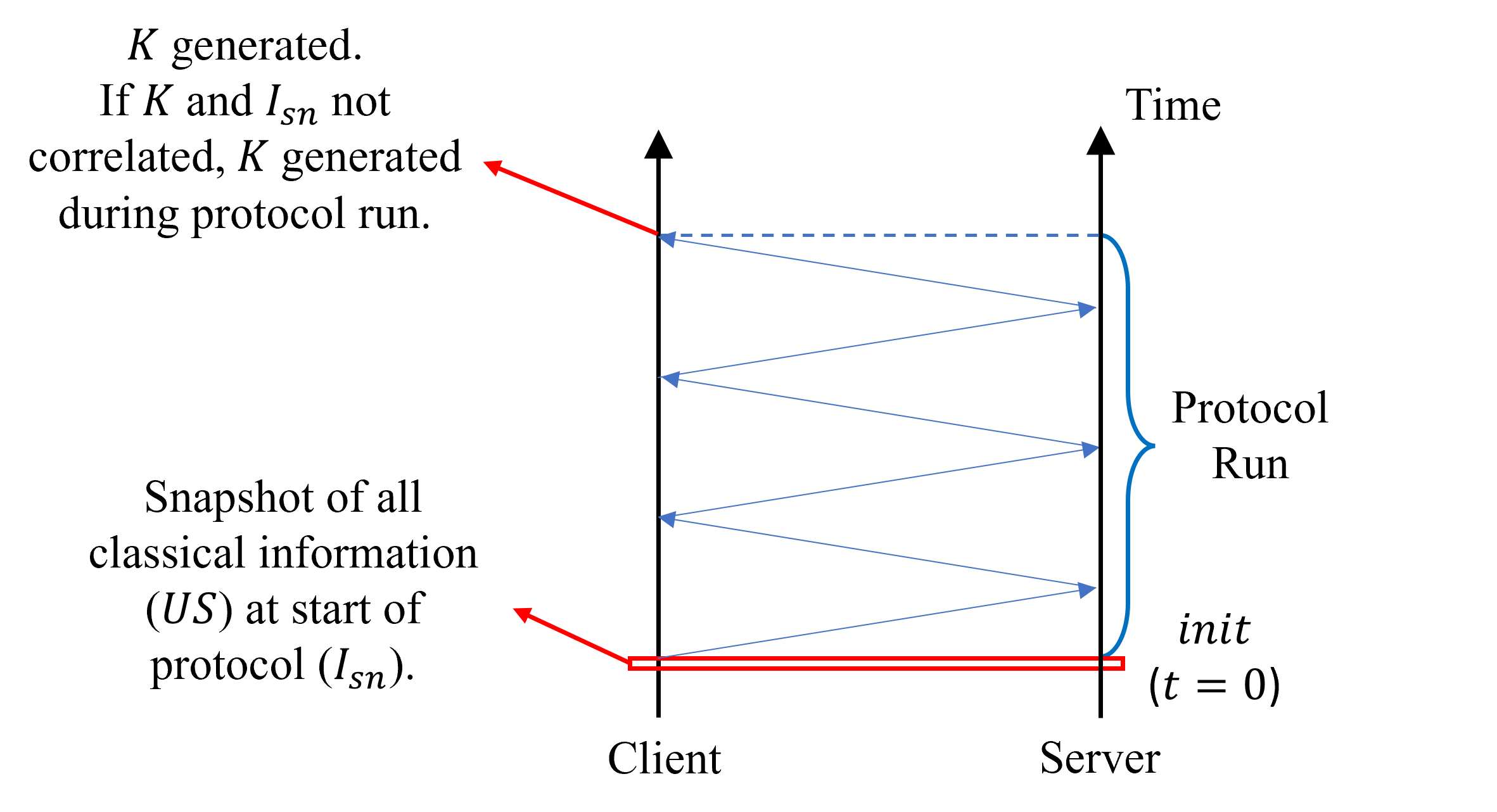}
    \caption{\label{fig:intuitive_CR}Illustration of the security definition of certified randomness. In the ideal functionality, the generated bitstring $K$ should be random and independent on any classical information available at the start of the protocol (described as a snapshot $\snapshot$ of all classical information).}
\end{figure}

In this section we formalize the definition of ``security" for certified randomness. 
In the protocol, a classical client $U$ interacts with a remote server $\tilde{S}$.  
Since we do not assume an authenticated communication channel between the client and the server, we have to account for other parties (denoted by $E$) who may have access to the communication channel. 
To simplify the analysis, we can encapsulate the remote server $\tilde{S}$ and other parties $E$ into a single entity, which we refer to as the server $S=\tilde{S}E$ in the rest of the manuscript. 
If authenticated communication is present between the client and server, we refer to the remote server as the server instead, that is, $S=\tilde{S}$.

With this setup let us consider what it means to have ``certified" randomness. The goal of ``certified" randomness is to obtain a bitstring $K$ that is ``unpredictable" and that can be certified as such. For simplicity, we {assume} that there is no initial quantum state in the server’s quantum memory at the beginning of the protocol~\footnote{If an initial quantum state is present in the server’s quantum memory, we can snapshot the classical description of the quantum state which WLOG can be stored in its classical memory since the server (adversary) prepares its own internal quantum memory.}.  We leave a proper accounting of quantum memory to future work. Furthermore, we {assume} that any classical random variables used in the protocol have been generated and stored in their respective devices at this point~\footnote{WLOG, this can always be made true. In this case, any processing of the random variables will be deterministic, and any bitstring generated from these random variables alone would be deterministic conditioned on the random variables.} (e.g., extractor seed $\ExtSeed$). Now, consider the state of the entire system $US$ before the start of the protocol at an initial moment, and take a snapshot of the system (namely, a copy of all classical registers in the system) at the initial moment ($t=0$), labeled as $\snapshot=U^{0}S^{0}$ \revtwo{(where $S$ includes both server $\tilde{S}$ and environment $E$)}.
We can then define the ``unpredictability" of bitstring $K$ as the property that $K$ is uniformly random and uncorrelated to $\snapshot$, i.e. that no classical information at the start of the protocol (captured in $\snapshot$) can influence or predict the bitstring $K$, as shown in Fig.~\ref{fig:intuitive_CR}. Intuitively, if $K$ and $\snapshot$ are uncorrelated, the source of randomness for $K$ cannot be classical, since $K$ could otherwise be deterministically computed from $\snapshot$. This leads to the conclusion that the source of randomness in $K$ must be quantum. 
As such, we can define the \underline{ideal functionality} of the protocol as generating a bitstring $K$ that is (1) uniformly distributed and (2) statistically independent of $\snapshot$ when the protocol does not abort. When the protocol aborts, no bitstring $K$ is generated, and we label $K=\,\perp$. If $\Omega$ denotes the event that the ideal protocol does not abort (its complement $\Omega^c$ is the event that the protocol aborts), then the joint probability distribution of $K\snapshot$ after an ideal protocol, represented by a density matrix, is of the form
\begin{align}
    \rho^{\rm ideal}_{K\snapshot}=\tau_K\otimes\rho_{\snapshot\wedge\Omega}+\ketbra{\perp}_K\otimes\rho_{\snapshot\wedge\Omega^c},
    \label{eq:ideal-state-security}
\end{align}
where $\tau_K$ is the maximally mixed state representing a uniform distribution of strings of length $|K|$.
In the above, we used the notation $\rho_{XA\land\Omega}=\sum_{x\in\Omega}p(x)\dyad{x}\otimes\rho_A^x$
for the subnormalized conditional states of the classical-quantum state $\rho_{XA}=\sum_{x}p(x)\dyad{x}\otimes\rho_A^x$ conditioned on the event $\Omega$.
For simplicity, we label $\land\Omega$ only at the end, e.g. we label $(\tau_K\otimes\rho_{I_{sn}})_{\land\Omega}$ as $\tau_K\otimes\rho_{I_{sn}\land\Omega}$.

In practice, the output $K$ of a protocol is  close to the ideal state in Eq.~\eqref{eq:ideal-state-security} only up to some security parameter. We refer to this distance as the soundness or the security parameter. In Eq.~\eqref{eq:ideal-state-security}, an ``abort'' event always leads to $K = \perp$. Therefore, it suffices to consider the distance from ideal in the event the protocol does not abort. We can formally state the security definition as follows.
\begin{definition}[Soundness of certified randomness protocol]\label{def:soundness}
    Let $\epssou\in(0,1]$. 
    A randomness certification protocol is $\epssou$-sound if for an honest classical client and any server, 
    \begin{align}
        \norm{\rho_{K\snapshot\wedge\Omega}-\tau_K\otimes\rho_{\snapshot\wedge \Omega}}_{\Tr}\leq \epssou,
    \end{align}
    where $\norm{\cdot}_{\Tr}$ is the trace distance measure.
\end{definition}

\subsection{XEB Preliminaries}\label{sec:preliminaries-xeb}

In our protocol, the client estimates the amount of entropy received by measuring the $\XEB$ score~\cite{Boixo2018} over a set $\V$ of $m=|\V|$ challenge circuits:
\begin{eqnarray}
    \XEBTest = \frac{N}{m} \sum_{i \in \V\subset [M]} p_{C_i}(x_i) - 1,
    \label{eq:xeb-def}
\end{eqnarray}
where $N=2^n$ and $p_{C_i}(x_i)$ is the probability amplitude $|\langle x_i|C_i|0\rangle|^2$. At a high level, if the samples $x_i$ are uniform, the $\XEB$ score is expected to be zero. On the other hand, if the server is using a perfect quantum computer to honestly sample from the corresponding quantum state, the $\XEB$ score is expected to be one. Similarly, a finite-fidelity quantum computer produces a $\XEB$ score between 0 and 1. In the following subsections we characterize the distribution of the $\XEB$ score for differently sampled bitstrings.

\subsubsection{Distribution of the XEB score of uniformly sampled bitstrings}

A random $n$-qubit quantum state $\ket{\psi}$ induces a probability distribution on bitstrings $x \in \{0, 1\}^{n}$. The probabilities $p(x)$
for a random state are assumed to follow
the Porter--Thomas distribution with a frequency density $f(p)$:
\begin{eqnarray}
    f(p) = N \cdot e^{-N\cdot p},
    \label{eq:porter-thomas-pdf}
\end{eqnarray}
where $N= 2^n$ is the dimension of the Hilbert space. 

First, consider a bitstring $x \in \{0, 1\}^{n}$ that is selected uniformly randomly (which we refer to as \textbf{uniform sampling}). Denote by $p(x)$ its measurement probability: $p(x) = \text{Tr}(\dyad{x}\rho)$, with $\rho \sim \mathcal{H}(N)$ drawn from ensemble of Haar-random states over $n$ qubits. Even though $x$ is sampled uniformly at random from $\{0, 1\}^n$, its corresponding probability, $p(x)$, follows an exponential distribution with the probability density function (PDF) given by
\begin{eqnarray}
  \text{PDF}_{x\sim \U}[p(x)]=f(p(x))=N \cdot e^{-N\cdot p(x)}.
  \label{eq:pdf-uniform-sampling}
\end{eqnarray}
The sum of $m$ independent random variables drawn from exponential distribution above is given by the Erlang distribution of ``shape" $m$, that is, $ \sum_{i \in \V\subset [M]} p_{C_i}(x_i) \sim \text{Erlang}(m, N)$. As a result, for uniform sampling, the cumulative distribution function (CDF) of the $\XEB$ score, denoted as $\XEB_{m,\U}$, is given by
\begin{eqnarray}
    \Pr\left( \sum_{i \in \V\subset [M]} p_{C_i}(x_i) \leq (\chi+1)\cdot m/N \right) = \tilde{\Gamma} (m, m \cdot (\chi+1)) \implies  \Pr\left( \XEB_{m, \U} \leq \chi \right) = \tilde{\Gamma}(m, m \cdot (\chi+1)),
\end{eqnarray}
where we use $\tilde{\Gamma}$ to denote the regularized lower-incomplete Gamma function \cite{dlmf}.

\subsubsection{Distribution of the XEB score of bitstrings perfectly sampled from the quantum state}

Second, consider a bitstring $x$ sampled from the distribution corresponding to measurement outcomes of a random quantum state 
$\rho = \dyad{\psi}$ (which we refer to as \textbf{quantum sampling}). The PDF of $p(x)$ is obtained by multiplying the frequency density $f(p(x))$ by the probability of actually measuring the bitstring $x$, $p(x)$ and by introducing an extra $N$ parameter that renormalizes the distribution:
\begin{eqnarray}
     \text{PDF}_{x \sim \Q}[p(x)]=N \cdot f(p(x)) \cdot \text{Tr}\left(\dyad{x}\rho\right)  = N^2 \cdot e^{-N \cdot p(x)} \cdot p(x).
     \label{eq:pdf-quantum-sampling}
\end{eqnarray}
The sum of $m$ independent random variables with the above PDF  given by the Erlang distribution of ``shape" $2m$,  gives us a CDF on the \XEB{} score for quantum sampling, denoted as $\XEB_{m, \Q}$:
\begin{eqnarray}
     \sum_{i \in \V\subset [M]} p(x_i) \sim \text{Erlang}(2 \cdot m, N)\implies \Pr\left( \XEB_{m, \Q} \leq \chi \right) = \tilde{\Gamma}(2\cdot m,  m \cdot (\chi+1)).
\end{eqnarray}

\subsubsection{Distribution of the XEB score of bitstrings sampled from a mixture}
If the verification set $\V$ consists of $l$ quantum samples and $m-l$ uniform samples, the sum of probabilities of the quantum samples follows the distribution $\text{Erlang}(2\cdot l, N)$, and the sum of probabilities of the uniform samples has a distribution $\text{Erlang}(m-l, N)$.  With a known property of sums of the Erlang distribution \cite{thomopoulos2017statistical}, the sum of all probabilities therefore has a distribution $\text{Erlang}(m+l, N)$. The CDF of the \XEB{} score, denoted as $\XEB_{m, l}$, is given by
\begin{eqnarray}
     \Pr\left( \XEB_{m, l} \leq \chi \right) = \tilde{\Gamma}(m + l,  m \cdot (\chi+1)).
\end{eqnarray}

\subsubsection{Distribution of the XEB score of bitstrings obtained by finite-fidelity quantum sampling}\label{sec:finite-fidelity-sampling}

The effect of a broad class of noise channels on the $\XEB$ score is the same as that of depolarizing noise \cite[Appendix A]{morvan2023phase}. Therefore, we model any finite-fidelity state as a mixture between the ideal state and the maximally mixed state: $\rho_{\phi} = \phi \cdot \dyad{\psi} + (1-\phi) \cdot \mathbb{I}/N$, where $\mathbb{I}$ is the identity matrix. Strictly speaking, $1-\phi$ is the depolarizing parameter whose corrections to fidelity, ignored in this work, are exponentially small in the number of qubits. The PDF  $p(x)$ from such a finite-fidelity state is given by
\begin{align}
     \text{PDF}_{x \sim \Q_{\phi}}[p(x)]&=N \cdot f(p(x)) \cdot \text{Tr}\left(\ket{x}\bra{x}\rho_{\phi}\right) \nonumber \\
     &= N^2 \cdot e^{-N \cdot p(x)} \cdot (\phi \cdot p(x) + (1-\phi)/N) \nonumber \\ 
     &= \phi \cdot N^2 \cdot e^{-N \cdot p(x)}  \cdot p(x) + (1-\phi) \cdot N \cdot e^{-N \cdot p(x)} \nonumber \\
     &= \phi \cdot   \text{PDF}_{x \sim \Q}[p(x)] + (1-\phi) \cdot  \text{PDF}_{x \sim \U}[p(x)]\label{eq:pdf-finite-fidelity}.
\end{align}
This suggests that the PDF of a fidelity-$\phi$ sample, which may be generated by sampling a noisy quantum computer or by performing approximate classical simulation, can be interpreted as a stochastic mixture of quantum sampling (Porter--Thomas) and uniform sampling: a bitstring is sampled from the Porter--Thomas distribution with probability $\phi$ and from the uniform distribution with probability $(1-\phi)$. The probability that $\ell$ out of $m$ bitstrings are sampled from the Porter--Thomas distribution is given by the binomial distribution. Therefore, the CDF of finite-fidelity $\XEB$, denoted below as $\XEB_{m, \phi}$, is given by
\begin{align}
\Pr\left(\XEB_{m,\phi} \leq \chi \right)
&= \sum_{l=0}^m \Pr\left(\XEB_{m,\phi} \leq \chi |\:l  \text{ PT bitstrings}\right)\cdot \Pr \left(l \text{ PT bitstrings} \right) \\
&=\sum_{l=0}^{m} \tilde{\Gamma}(m+l, m \cdot (\chi + 1)) \cdot \left[\binom{m}{\ell}\phi^{l}(1-\phi)^{m-l }\right].\label{eq:cdf_honest_server}
\end{align}

\subsection{Adversarial Model and Assumptions}\label{sec:finite-size-assumption}

\begin{figure}[h]
    \includegraphics[width=0.7\textwidth]{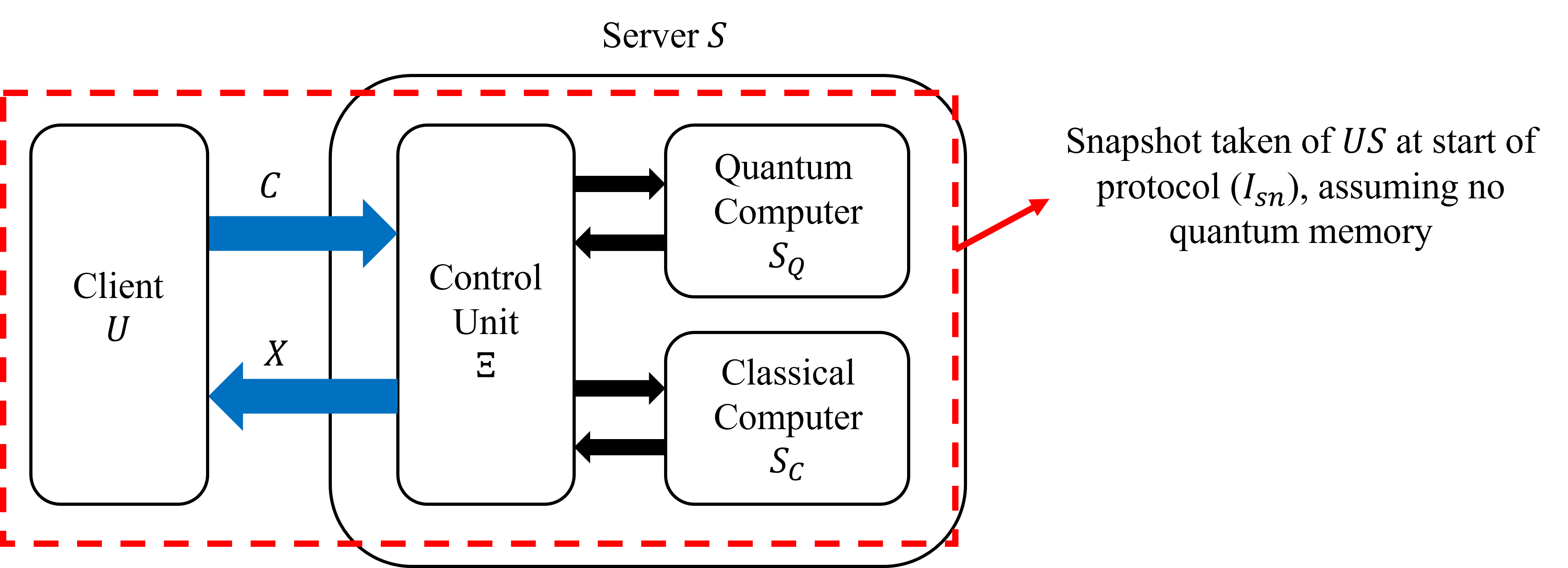}
    \caption{\label{fig:model_finite}%
    Model of classical client and malicious server in the certified randomness protocol. The server is split into a classical control unit $\Xi$, a classical computer $\SC$, and a quantum computer $\SQ$. Without assuming authenticated communication, the server $S=\tilde{S}E$ in general refers to the actual server $\tilde{S}$, along with all parties $E$ that have access to the communication channel between the client and server.}
\end{figure}

The protocol involves a classical client $U$, which interacts with a server $S$ with quantum capabilities.
The client is assumed to be honest, while the server is treated as the adversary.
Let us start by introducing a model of the server $S$, which consists of a control unit $\Xi$, classical computer $\SC$, and quantum computer $\SQ$, illustrated in Fig.~\ref{fig:model_finite}.
Without any restriction on the adversarial server, it is clear that any classical server can always pass the \XEB{} test by simulating the circuits.
However, it is widely believed that the task of random circuit sampling, commonly used for ``quantum supremacy'' demonstrations, is hard to perform classically with a reasonable computational power ~\cite{Arute2019,morvan2023phase,Zhu2022,qntm_rcs,aaronson2017complexity,Aaronson2020,Bouland2018,Bouland2022}.
Consequently, by limiting the computing power of the adversary, we can demonstrate that the protocol is sound.

\subsubsection{Circuit sampling and hardness assumptions}\label{sec:security_assumptions_rcs}
The first set of assumptions we make relate to the properties of random circuit sampling and the difficulty in performing random circuit sampling classically, which are widely used in the aforementioned ``quantum supremacy'' demonstrations.
\begin{itemize}
    \item The output probabilities of any circuit $C$ belonging to the circuit family follow the Porter--Thomas distribution given in Eq.~\ref{eq:porter-thomas-pdf}.
    \item Achieving high \XEB{} classically is as hard as computing the output probabilities of the quantum circuit for the chosen circuit family;
\end{itemize}
For the first assumption, extensive numerical evidence exists showing that the output probabilities of random quantum circuits closely follows the Porter--Thomas distribution~\cite{Arute2019,Boixo2018}. 
For the particular class of circuits considered in this work (see \Cref{sec:protocol_circuit_selection}), we show that each circuit generates a distribution close to Porter--Thomas in total variation distance (TVD) (see Fig.~\ref{fig:convergence-pt}A), with TVD exponentially decaying in $n$. Additionally, the Shannon entropy of the probability distribution of the random quantum circuits closely matches that of the Porter--Thomas distribution, with the difference vanishing exponentially as well (see Fig.~\ref{fig:convergence-pt}B).

The hardness of spoofing $\XEB$ is closely related to the hardness of estimating the output distributions of quantum circuits. It has been established that an efficient classical algorithm to estimate output amplitudes of a Haar-random circuit consisting of $m = \poly(n)$ gates to an additive precision $2^{-O(m)}$ would lead to the collapse of the polynomial hierarchy~\cite{krovi2022average, Bouland2022}. On the heels of RCS experiments, it was conjectured with the XQUATH (linear cross-entropy threshold assumption) that even estimating the output probabilities of quantum circuits better than the trivial guess of $2^{-n}$ is difficult, implying the hardness of \XEB{}. 

For noisy circuits, Ref.~\cite{gao2024limitations} pointed out the limitations of XEB as an evidence of quantum advantage. While prior work implicitly assumed XEB to be a measure of fidelity,  it was shown that there are regimes of noise where XEB is no longer a good measure of fidelity and that one can achieve a high XEB despite having a low fidelity (or high noise). Theoretical as well as experimental investigations have since \cite{ware2023sharp, morvan2023phase} revealed a sharp phase transition between the weak-noise regime where $\XEB$ tracks fidelity and the strong-noise regime with a discrepancy between the two, with the low-noise regime characterized as $\epsilon \cdot n  < \ln 3 \approx 1.1$ for our architecture \cite{ware2023sharp}. We obtain the value of $\epsilon\cdot n$ for our experiment as follows. We take the simple gate-counting model from \cite[Eq. 11]{qntm_rcs}, which estimates the fidelity as $\phi_{\rm GC}(n,d) = \bigl(1 - \varepsilon(n) \bigr)^{nd / 2} \bigl(1-p_{\text{SPAM}}\bigr)^n$ where $\varepsilon(n)$ is the effective two-qubit gate process infidelity, inclusive of memory errors (which depend on $n$) and single-qubit gate errors that may be incurred during a layer of circuit execution. Plugging in the estimated $30\%$ overall circuit fidelity in this work obtained from full-circuit mirror-benchmarking experiments, as well as the depth offset of $1.12$ and $p_{\text{SPAM}}=0.00147$ specified in Ref.~\cite{qntm_rcs}, at $n=56$ and $d=10$ we find an effective process infidelity per qubit per layer equal to $\epsilon = \varepsilon(56) / 2 \approx 1-0.99775$.\footnote{Substantial improvements made to the H2-1 processor since collection of this experiment's data have since improved the effective process fidelity per qubit per layer to $1 - \varepsilon(56) / 2= 0.9984$ as reported in Ref.~\cite{qntm_rcs}, not to be confused with the average two-qubit gate fidelity of $0.99843$ also reported in that work. However, $p_{\text{SPAM}}$ has remained approximately the same.} Since $n=56$, this gives $\epsilon \cdot n \approx 0.13$, which is well below the critical point of phase transition.

More recently, Ref.~\cite{aharonov2023polynomial} proposed a classical algorithm that refutes XQUATH for circuits composed of 2-qubit Haar-random gates and sublinear depth. The same algorithm can also produce bitstrings indistinguishable from a noisy quantum circuit using only polynomially many samples, thus weakening the reliability of \XEB{} for constant noise of any strength. However, the algorithm has a runtime that scales $O(M^{1/\varepsilon})$, where $M$ is the number of samples and $\varepsilon$ is the gate infidelity, rendering the algorithm impractical.

Importantly, asymptotic statements do not prove or disprove the difficulty of spoofing finite-sized experiments. The best-known efficient classical algorithms for spoofing XEB  either are impractical for realistic experiments or produce an XEB score much lower than that from experiments.  Adversaries therefore must resort to inefficient means such as tensor network contraction and approximate circuit simulation. Indeed, the successful spoofs of \XEB{} have come from improved tensor network contraction---with classical computing clusters obtaining an \XEB{} orders of magnitude larger than experiments on a 53-qubit computer. However, we believe that such an attack is unlikely for our experiment on a 56-qubit machine and our measured $\XEB{} \approx 0.32$ based on extensive evidence of hardness from \cite{qntm_rcs}, %
which considers the same circuit family.

\begin{figure}[!h]
\includegraphics[width=\textwidth]{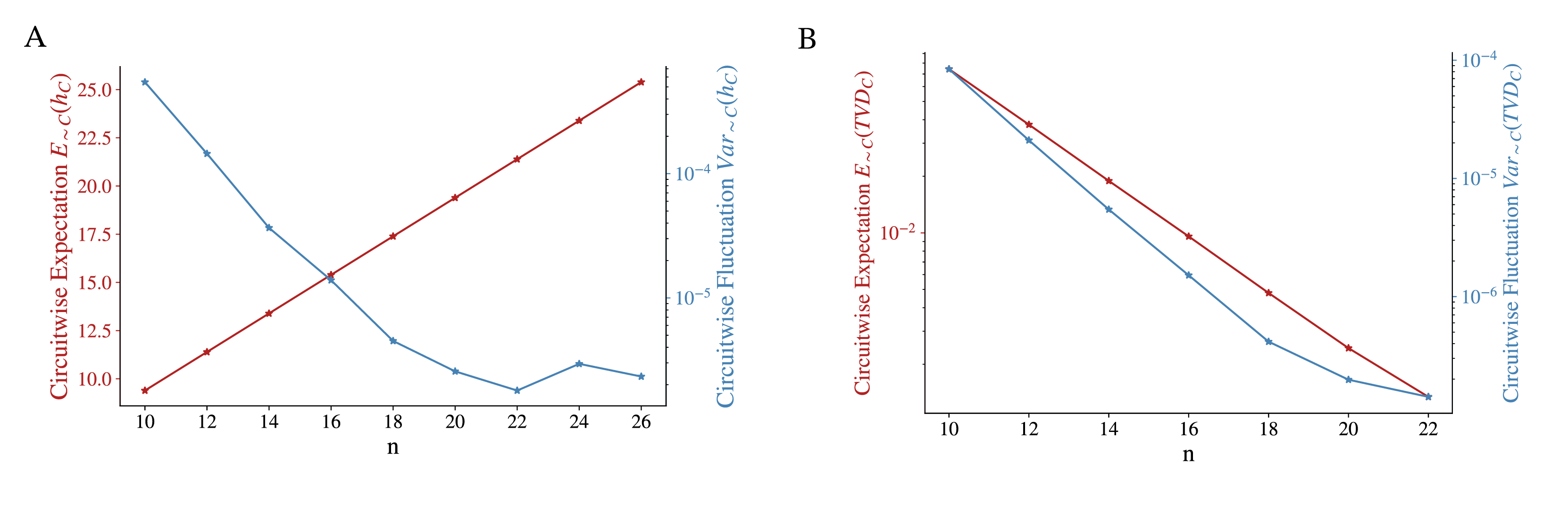}\caption{ \textbf{Numerical evidence of convergence to Porter--Thomas} (A) Circuit-wise expectation (blue) and variance (red) of Shannon entropy, defined as $-\sum_{x \in \{0, 1\}^{n}} p(x) \log p(x)$ with $p(x) = |\langle x|C|0\rangle|^2$, for distributions induced by $d=10$ circuits $C$ with different $n$. (B) Circuit-wise expectation (blue) and variance (red) of Total Variation Distance (TVD) from Porter-Thomas distributions  induced by $d=10$ circuits with different $n$. To compute the TVD, we convert the vector or probabilities $p(x)$ into a histogram, counting frequencies over discretized probability intervals. Then, we sum the difference in frequency counts so obtained with those expected from Poter-Thomas distribution. For both plots, each data point summarizes 1,000 realizations of random circuits over a fixed two-qubit topology obtained via edge coloring of an $n$-node graph. Standard error is too small to be visible on the plot.}
\label{fig:convergence-pt}
\end{figure}

Consequently we believe that the frugal rejection sampling \cite{markov2018quantum} procedure along with our tensor network contraction simulation method with finite fidelity represents the best-known method that an adversary can utilize, informing the choice of our adversarial model.

\subsubsection{Assumptions on computing devices}\label{sec:security_assumptions}

Following the model in Fig.~\ref{fig:model_finite}, we limit the power of the components in the server $S$, which we recall is composed of a quantum computer $\SQ$, a classical computer $\SC$, and a control unit $\Xi$. We remind the reader that this set of assumptions is in general imposed on the joint system of the actual server $\tilde{S}$ and any parties that can access the communication channel $E$, since we do not make the assumption of an authenticated communication channel. The hardness assumption informs our choice of assumption on the power of the classical computer $\SC$:
\begin{itemize}
    \item The classical computer is capable of $\A$ FLOPS of peak performance.
    \item The adversary possesses the same methods of tensor network contraction as the client, and that the adversary's classical methods are as equally performant (e.g., precision, numerical efficiency of tensor network contraction) as the client's.
    \item The classical computer can  perform only frugal rejection sampling, which we summarize in Fig.~\ref{fig:frugal_rejection_sampling}. More concretely, it can accept an input circuit $C_i$ and target fidelity $\phi_{\A}^{(i)}$ from $\Xi$ and run the sampling procedure at the target fidelity.
\end{itemize}
We note that any violation of the second assumption can be captured by increasing the power $\A$ in the first assumption: An adversary possessing more powerful classical methods (higher numerical efficiency given the same contraction scheme, the ability to find contraction schemes with lower costs, the ability to use lower precision, etc.) than the client may be described by absorbing its performance gain into its computational power $\A$. 

The output probabilities of the frugal rejection sampling procedure in Fig.~\ref{fig:frugal_rejection_sampling} \rev{are assumed to} follow the same distribution as \rev{quantum} finite-fidelity sampling in Sec.~\ref{sec:finite-fidelity-sampling}. \rev{We provide evidence for this assumption in \Cref{sec:frugal_pdf_assumption}.} However, recent work \cite{zhao2024leapfrogging} argues that one can boost the \XEB{} score of classically obtained samples by postselecting from a large number of low-fidelity samples, leading to a seven-fold improvement for the 2019 Sycamore circuits \cite{Arute2019} compared with previous techniques. This effect can be approximately incorporated by absorbing a factor into the adversary computational power $\A$. We did not investigate the improvement factor that could be associated with our experiment. Further, the probability distribution of the \XEB{} score of such an adversary is not understood. Therefore, we leave the analysis of such an adversary for future work.

\begin{figure}[!h]
    \hrule\vspace{.5em}
    Input: A single quantum circuit $C_i$, target classical simulation fidelity $\phi_{\A}^{(i)}$.\\ \vspace{.5em}\begin{flushleft}Algorithm:\end{flushleft}
  \begin{enumerate}
      \item Choose a sufficiently large $M'$.
      \item Sample a subset of distinct $M'$ bitstrings $\{x_j\}$ uniformly at random.
      \item Calculate all $M'$ probabilities at once by contracting a fraction $\phi_{\A}^{(i)}$ of all slices of the tensor network.
      \item For each $j$: accept $x_j$ with probability $\min\left(1, p(x_j)N/M'\right)$.
  \end{enumerate}
  \begin{flushleft}Output: The first accepted bitstring $x_j$.
  \end{flushleft}
  \hrule\vspace{1em}
  \caption{The frugal rejection sampling algorithm.}
\label{fig:frugal_rejection_sampling}
  \end{figure}

It is difficult to analyze the security of the protocol if the adversary is allowed to perform arbitrary operations on the quantum computer.
Thus, we assume a restricted quantum computer $\SQ$.
\begin{itemize}
    \item The quantum computer is only allowed to execute the circuit, obtaining state $C_i\ket{0^n}$, and measure it to obtain an $n$-bitstring $X_i$ with perfect fidelity. More concretely, it can accept the circuit $C_i$ from $\Xi$, run the circuit $C_i$, and return the output $X_i$.
    \item Every round using the quantum computer (quantum round) is i.i.d.
\end{itemize}
The second restriction is implied by the first restriction since it specifies exactly how the quantum computer must be used. We also make an assumption of the circuits that are submitted to the quantum computer:
\begin{itemize}
    \item The choice of circuit for every quantum round is i.i.d. 
\end{itemize}
This assumption is valid when the client selects the circuits in an i.i.d. fashion and  $\Xi$ is not choosing which circuits to return quantum samples or which batches to fail in a way that depends on the circuits \rev{(no post-selection). We believe assuming no postselection is reasonable for
near term adversaries. The argument for this is as follows. For tensor network contraction-based approaches considered for our adversary, the hardness is identical across circuits since the hardness only depends on the circuit topology which stays the same, and only the single-qubit gates change. Additionally, there is no known method for the adversary to say anything about the quality of the sample without running the classical simulation and examine the bitstring probability. If it runs the classical simulation, it might as well use the classical compute budget to provide a classical sample instead of determining post-selection. However, in the future, it will be beneficial to go beyond heuristic arguments and conduct a more rigorous analysis on the possibility of using postselection and understand how much impact that may have.}

We note that the quantum computer utilized in the experiment (the honest case) is imperfect and is only able to sample circuits with fidelity $\phi\approx 0.3$. %
In general, it is possible to assume $\SQ$ to be a quantum computer with ability to sample with fidelity \textit{tunable} in $\phi\in[0.3, 1]$. For the purpose of \XEB{} calculations, this corresponds to a probability $\phi$ of drawing a sample from the PT-distribution and $1-\phi$ of drawing a sample from the uniform distribution. 
Since it is straightforward to replace the uniform distribution sample with a known bitstring from its memory device, the entropy contribution from this component in the worst case, conditioned on the snapshot $\snapshot$, is 0. In contrast, the samples obtained with simulation in the classical computer $\SC$ have a higher $\phi_{\A}^{(i)}$ \XEB{} score contribution also with zero conditional entropy. Since, at a fixed \XEB{}, the conditional entropy is expected to be proportional to $\phi \cdot Q$ (where $Q$ is the number of samples executed on the quantum computer), the server's optimal strategy (for maximizing the average of \XEB{}) is to use the classical computer as much as possible together with the highest-fidelity quantum computer possible (minimum $Q$). We assume $\phi=1$ in our analysis for simplicity as a worst-case assumption.

\rev{

\subsubsection{Assumption on frugal rejection sampling}\label{sec:frugal_pdf_assumption}

For the purposes of our calculations, it suffices to show that the probabilities resulting from bitstrings obtained via partial-contraction of tensor networks in which we contract $\phi$ fraction of slices and the probabilities corresponding to a bitstring obtained by sampling a state depolarized by strength $(1-\phi)$ are equivalent. To numerically emulate the classical sampling process, we consider a tensor network representation of a $n$-qubit quantum circuit $C$, contract $k = \phi \cdot K$ slices, and use the partial amplitudes to sample bitstrings through the frugal rejection sampling process.

Sampling from a depolarized finite fidelity quantum state has the PDF given by Eq. \eqref{eq:pdf-finite-fidelity}. However, it is difficult to rigorously derive the PDF associated with samples obtained from partial tensor network contraction. Nevertheless, we numerically observe (see Fig. \ref{fig:frugal_pdf_comparison}) that such classical samples have probabilities which closely follow Eq.~\eqref{eq:pdf-finite-fidelity}. This suggests that, for the purposes of our XEB analysis, finite-fidelity classical simulation via partial contraction of tensor networks is equivalent to sampling from a noisy depolarized state.

}

\begin{figure}[!ht]
    \centering
    \includegraphics[width=0.4\textwidth]{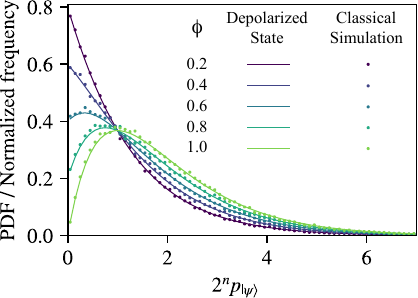}
    \caption{\rev{Distribution of the probability $p_{\vert\psi\rangle}(x)$. A random circuit with depth $8$ acts on $n=20$ qubits. The tensor network representation is sliced into  $K=1024$ total slices of which we only contract $k = \phi \cdot K$ slices. The partial amplitudes are used to sample bitstrings following the frugal rejection sampling process (Fig. \ref{fig:frugal_rejection_sampling}). For each $\phi$, we sample ten thousand samples via partial contraction (dots).} 
    }
    \label{fig:frugal_pdf_comparison}
\end{figure}

\subsubsection{Adversary strategy restriction}

We restrict the actions of the control unit $\Xi$ and thus the strategy of the adversary. We note that the same restricted adversary is commonly considered in the analysis of certified randomness protocols based on random circuit sampling~\cite{brandao2020notes,morvan2023phase}.
In particular, the assumptions are as follows.
\begin{itemize}
    \item We assume $\Xi$ is not allowed to perform any postselection attacks; in other words,  the $M$ detected rounds in the protocol are a fair representation of the adversary's behavior. Therefore, the analysis can be reduced to focus on the $M$ detected rounds;
    \item We assume that $\Xi$ performs a restricted attack: Among the $M$ detected rounds, $\Xi$ \textit{a priori} selects $Q$ rounds, sends all the circuits $C_i$ for those rounds to the quantum computer, and returns the sample $X_i$ that $\SQ$ provides---henceforth called \emph{quantum rounds}. For the remaining $M-Q$ rounds, it sends $C_i$ and $\phi_{\A}^{(i)}$ to the classical computer and returns the sample $X_i$ that $\SC$ provides----henceforth called \emph{classical rounds}. We note here that the target fidelity $\phi_{\A}^{(i)}$ may differ between circuits and is dependent on the runtime $T_i$ of each round, as decided by the control unit $\Xi$;
    \item For each of the $Q$ rounds, it interacts with the quantum computer only once; that is, it does not attempt to request multiple $X_i$ for the same $C_i$.
\end{itemize}
We note here that in general one can consider an arbitrary probabilistic choice of $Q$.
However, since we can describe such an adversary by a convex sum of adversaries with a fixed $Q$, the satisfaction of soundness for all adversaries with fixed $Q$ would imply that the more general adversary remains sound as well. As also noted in Methods, we point out that these assumptions are likely stronger than necessary and may be remedied in the future. 

These assumptions allow us to focus on the $M$ detected rounds, of which $Q$ quantum rounds are a priori selected by the server. A summary of this adversarial model is presented in Fig.~\ref{fig:model_finite} and in the main text Fig.~1D. Essentially, this adversary generates the $Q$ samples using a perfect quantum computer, with perfect-fidelity sampling of the Porter--Thomas distribution. For the rest of the $M-Q$ samples, the adversary performs finite-fidelity simulation using frugal rejection sampling. We assume the outputs of frugal rejection sampling are indistinguishable from that obtained by sampling from a finite-fidelity quantum computer as far as \XEB{} statistics are concerned \rev{which we justify in \Cref{sec:frugal_pdf_assumption}}, which is equivalent to sampling from the Porter--Thomas distribution with probability $\phi_{\A}^{(i)}$ and the uniform distribution with probability $1-\phi_{\A}^{(i)}$.\\

The assumptions listed above are necessary for simplification of the security analysis.
We leave a general analysis of protocol security to future work.
There are some limitations that such a general analysis should address, including (1) the i.i.d. assumption of circuit choice and quantum round, which is inherently at odds with the use of pseudorandom circuit selection, (2) postselection attacks available when we allow batches to be discarded, and (3) oversampling attack.
Some of these limitations, such as oversampling, have been previously analyzed in Refs.~\cite{brandao2020notes,morvan2023phase}, and one can adapt similar solutions in the general security analysis.

\subsection{Bounds on the Entropy Certified by the Protocol}
\label{sec:bound-on-q}

It is known \cite{renner2008security,tomamichel2017largely} that a lower bound on the conditional smooth min-entropy of the samples $X^M$ generated through the protocol is sufficient to prove soundness as per \Cref{def:soundness}.
In this section we present and prove our main results regarding the amount of smooth min-entropy generated by our randomness certification protocol, deferring the particulars pertaining to soundness and randomness extraction to \Cref{sec:soundness}.

The min-entropy of a classical-quantum state $\rho_{XA}=\sum_{x}p(x)\dyad{x}\otimes\rho_A^x$ is defined as $H_{\min}(X|A)_\rho=-\log p_{\rm guess}(X|A)_\rho$,
where $p_{\rm guess}(X|A)_\rho:=\sup_{\{M_x\}_x}\sum_x p(x)\Tr[\rho_A^xM_x],$ with supremum over POVMs on register $A$, is known as the guessing probability.
The min-entropy can be generalized by including a smoothing parameter $\epss\in(0,\sqrt{\Tr[\rho_{XA}]})$ such that 
\begin{align}
    H_{\min}^{\epss}(X|A)_{\rho}:=\sup_\sigma H_{\min}(X|A)_{\sigma}\,,
\end{align}
where the supremum is over states $\sigma_{XA}$ in the $\epss$-ball in purified distance centered around $\rho$, as defined in \cite{tomamichel2015quantum}.

The strategy to prove a lower bound on the smooth min-entropy of the samples generated in our protocol proceeds by first bounding the probability that the $\XEB$ test passes when the server executes $Q$ quantum rounds, for fixed $Q$.
This bound indirectly allows us to determine $\Qmin$, the minimum number of quantum rounds the server has to carry out, from which we can then bound the smooth min-entropy.

Let the client's total computational budget to verify $m$ circuit-sample pairs be $\totalbudget$ floating-point operations. The computational cost of simulating a circuit with perfect fidelity is $\budget=\totalbudget/m$ floating-point operations. 
In our protocol, the adversary returns $M$ bitstrings in the non-discarded rounds with a maximum duration of $\Tthreshold = M \cdot \Tqcthreshold$. Let $\ceff$ denote the numerical efficiency, which is by assumption equal between the client and the adversary.
If the adversary had a large enough classical computer such that $\ceff\cdot\A \cdot \Tthreshold/(M\cdot \budget) \geq 1$, the adversary can simulate all circuits to perfect fidelity and will return classically simulated samples, with confidence that the returned samples will achieve a high $\XEB$ score when verified by the client. On the other hand, if $\ceff\cdot\A \cdot \Tthreshold/(M\cdot \budget) $ is much less than one, the client cannot simulate all circuits to high fidelity and is thereby compelled to return genuine quantum samples in order to achieve a high $\XEB$ score.

The adversary has maximum time $\Tthreshold$ to return $M$ samples, which is the same time it has to simulate $M-Q$ circuits classically. Assuming a simple but realistic linear model of simulation fidelity~\cite{markov2018quantum}, we find that, in the worst case, the maximum sum of the simulation fidelities of the $M-Q$ circuits is given by the total executed FLOP count (efficiency times power times maximum time) divided by $\budget$:
\begin{eqnarray}
 \Phi_{\A}  =   \sum_{i \text{ classically simulated}} \phi_{\A}^{(i)} = \min \left( M-Q, \frac{\ceff\cdot\A \cdot \Tthreshold}{ \budget}\right)=\min \left( M-Q, \frac{\ceff\cdot\A \cdot \Tthreshold}{ \totalbudget/m}\right).\label{eq:phi_A}
\end{eqnarray}
This equation shows that at fixed $M$ and $m$, the protocol performance should remain the same as long as $\A\cdot \Tthreshold/\totalbudget$ is unchanged. For example, an increase in the adversarial power can be canceled out by a reduction in $\Tthreshold$ or an increase in $\totalbudget$. 

As we argue in \Cref{sec:preliminaries-xeb}, finite-fidelity classical bitstrings can be interpreted as being drawn from the Porter--Thomas distribution with probability $\phi_{\A}^{(i)}$ and from the uniform distribution with probability $1-\phi_{\A}^{(i)}$. The total number of Porter--Thomas bitstrings is therefore a random number. Let $Z_i$ be the indicator random variable such that among samples classically simulated, the sample $i$ is drawn from the Porter--Thomas distribution if and only if $Z_i = 1$ and from $Z_i = 0$ otherwise. The random variable $L_\text{C}$ representing the total number of Porter--Thomas bitstrings in $M-Q$ classical samples is therefore
\begin{equation}
    L_\text{C}=\sum_{i\text{ classically simulated}} Z_i,
\end{equation}
and its expected value is, in the worst case, given by
\begin{equation}
    \Exp[L_\text{C}]=\sum_{i \text{ classically simulated}} \phi_{\A}^{(i)}=\Phi_{\A}.
\end{equation}
Recalling that we assume the server is i.i.d., we use Chernoff's inequality for sum of Bernoulli random variables \cite{hagerup1990guided} to obtain an upper bound on the probability that $L_C$ exceeds $L_{\text{C},\rm max}:=(1+\delta)\cdot \Exp[L_\text{C}]$ for some $\delta>0$:
\begin{eqnarray}
    \Pr\left[L_\text{C} \geq L_{\text{C},\rm max} \right] \leq \exp \left( - \frac{\delta^2 \cdot \Exp[L_\text{C}]}{3}\right) = \varepsilon_1 \implies L_{\text{C},\rm max} =   \Phi_{\A}\left(1 + \sqrt{\frac{3}{\Phi_{\A}}\ln \frac{1}{\varepsilon_1}}\right).\label{eq:L_C_max}
\end{eqnarray}
Since all quantum samples have fidelity one and are Porter--Thomas bitstrings, we can obtain an upper bound $L_{\rm max}$ on the random variable representing the total number of Porter--Thomas bitstrings $L$ in \emph{all} $M$ samples (both quantum and classical):
\begin{eqnarray}
    L = Q+L_\text{C}\implies \Pr\left[L\geq L_{\rm max}= Q + L_{\text{C},\rm max}\right]\leq\varepsilon_1.
    \label{eq:pt-concetration-inequality}
\end{eqnarray}
Since the client performs verification only on a much smaller set of circuits $\V$ with $|\V| = m$, the distribution of the $\XEB$ score over the verification set depends on the number of Porter--Thomas samples in the verification set, not the number of Porter--Thomas samples in total. Each of the $m$ verification samples is drawn from a total of $M$ samples without replacement, and up to $L_{\rm max}$ of the $M$ samples follow the Porter--Thomas distribution. 
If $L_{\max}$ out of $M$ samples are Porter--Thomas, then the probability that $l$ out of $m$ samples are Porter--Thomas is given precisely by the hypergeometric distribution $\text{Hypergeometric}(M, L_{\rm max}, m)$ \cite{rice2007mathematical}.

We can now obtain the probability of the adversary achieving measured $\XEB$ score below threshold $\chi$ as
 \begin{align}
\Pr\left[\XEBTest \leq \chi | L =
L_{\rm max}\right]&= \sum_{l=0}^m \Pr\left[\XEBTest \leq \chi |l  \text{ PT bitstrings in }\V\right]\cdot \Pr \left[l \text{ PT bitstrings in }\V | L = L_{ \rm max} \right] \\
&=\sum_{l=0}^{m} \tilde{\Gamma}(m+l, m \cdot (\chi + 1)) \cdot \left[\binom{M}{m}^{-1} {\binom{m}{l}\binom{M-L_{\rm max}}{m-l}}\right] = 1-\varepsilon_2,    \label{eq:cdf_adversary}
\end{align}
 where $\tilde{\Gamma}$ is the regularized lower-incomplete Gamma function that, as we discuss in \Cref{sec:preliminaries-xeb}, gives the distribution of $\XEB$ with a mixture of uniform and Porter--Thomas bitstrings. 
 Here, we remark that Eq.~\eqref{eq:cdf_adversary} holds when the choice of testing subset $\V\subset[M]$ of size $m$ is random. When the choice of $\V$ is pseudorandom, then $\Pr\left[\XEBTest \leq \chi | L =
L_{\rm max}\right]$ is negligibly close to Eq.~\eqref{eq:cdf_adversary}.
 Putting it all together, the probability that an adversary sampling the quantum computer $Q$ times attains a threshold $\chi$ is 
 \begin{align}
 \Pr[\Omega]
    &=\Pr\left[\XEBTest\geq \chi\right]\\
    &=\Pr\left[L>L_{\max}\right]\Pr\left[\XEBTest\geq \chi\,|\,L>L_{\max}\right] \\
    &\quad+ \Pr\left[L\leq L_{\max}\right]\Pr\left[\XEBTest \geq \chi \,|\, L\leq L_{\max}\right]\\
    &\leq \Pr\left[L>L_{\max}\right] + \Pr\left[\XEBTest \geq \chi \,|\, L\leq L_{\max}\right]\\
    &\leq \varepsilon_1 + \varepsilon_2,
\label{omega_xeb}
\end{align}
 with $\varepsilon_1$ and $\varepsilon_2$ as in  Eq.~\eqref{eq:pt-concetration-inequality} and Eq.~\eqref{eq:cdf_adversary}, respectively. 
 Here, the last inequality follows from the fact that for all $l\leq L_{\max}$ $\Pr\left[\XEBTest \geq \chi \,|\, L= l\right]\leq \Pr\left[\XEBTest \geq \chi \,|\, L= L_{\max}\right]$, which consequently implies 
 \begin{align}
     \Pr\left[L\leq L_{\max}\right]\Pr\left[\XEBTest \geq \chi \,|\, L\leq L_{\max}\right] 
     &= \sum_{l\leq L_{\max}} \Pr\left[L=l\right]\Pr\left[\XEBTest \geq \chi \,|\, L=l\right]\\
     &\leq \sum_{l\leq L_{\max}} \Pr\left[L=l\right]\Pr\left[\XEBTest \geq \chi \,|\, L=L_{\max}\right]\\
     &=\Pr\left[\XEBTest \geq \chi \,|\, L=L_{\max}\right]. 
 \end{align}

We summarize this result in the following lemma.
\begin{lemma}\label{lem:q_bound} Let $\Omega$ be the event that an adversary, with a classical computational power of $\A$ and samples $Q$ out of $M$ samples on a perfect-fidelity quantum computer, passes the XEB test with threshold $\chi$. We have $\Pr[\Omega] \leq  \epsadv(Q, \chi) = \varepsilon_1 + \varepsilon_2$, for
\begin{equation}\label{eq:bound_on_q_summarized}
     \varepsilon_1 =  \exp \left( - \frac{\delta^2 \cdot \Phi_{\A}}{3}\right) \qquad  \varepsilon_2 = 1-\sum_{l=0}^{m} \tilde{\Gamma}(m+l, m \cdot (\chi + 1)) \cdot \left[\binom{M}{m}^{-1} {\binom{m}{l}\binom{M-(Q + \Phi_{\A}(1+\delta))}{m-l}}\right],
\end{equation}
$\Phi_{\A}$ given by Eq.~(\ref{eq:phi_A}), and $\delta \geq 0$.
\label{lem:lower-bound-on-q}
\end{lemma}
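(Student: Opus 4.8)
The plan is to assemble the ingredients already derived in Eqs.~\eqref{eq:phi_A}--\eqref{omega_xeb}, since the lemma simply packages them into a single statement parameterized by $Q$ and $\chi$. I would first fix the adversary's \textit{a priori} choice of $Q$ quantum rounds; the remaining $M-Q$ rounds are classically simulated subject to the FLOP-budget constraint of Eq.~\eqref{eq:phi_A}, which caps the total simulation fidelity at $\Phi_{\A}$. Modeling each classical sample as Porter--Thomas with probability $\phi_{\A}^{(i)}$ and uniform otherwise (Sec.~\ref{sec:finite-fidelity-sampling}), the count $L_{\text{C}}$ of Porter--Thomas bitstrings among the classical rounds is a sum of independent Bernoulli variables with mean at most $\Phi_{\A}$. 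Applying the Chernoff bound of Eq.~\eqref{eq:L_C_max} gives $\Pr[L_{\text{C}}\geq(1+\delta)\Phi_{\A}]\leq\varepsilon_1$; since the $Q$ quantum rounds are perfect-fidelity and hence Porter--Thomas, the total count is $L=Q+L_{\text{C}}$, so $\Pr[L\geq L_{\max}]\leq\varepsilon_1$ with $L_{\max}=Q+(1+\delta)\Phi_{\A}$, which is Eq.~\eqref{eq:pt-concetration-inequality}.

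Next I would condition on the total Porter--Thomas count. Given $L=L_{\max}$, the number $l$ of Porter--Thomas samples landing in the uniformly random size-$m$ verification set $\V$ (drawn without replacement from the $M$ samples) is $\text{Hypergeometric}(M,L_{\max},m)$; composing this with the mixture XEB CDF $\tilde\Gamma(m+l,m(\chi+1))$ from Sec.~\ref{sec:preliminaries-xeb} yields $\Pr[\XEBTest\leq\chi\mid L=L_{\max}]=1-\varepsilon_2$, i.e.\ Eq.~\eqref{eq:cdf_adversary}. The proof then closes with the total-probability decomposition of Eq.~\eqref{omega_xeb}: splitting on whether $L>L_{\max}$ (which has probability at most $\varepsilon_1$ and contributes a pass probability bounded trivially by $1$) or $L\leq L_{\max}$ gives $\Pr[\Omega]\leq\varepsilon_1+\varepsilon_2$.

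The step I expect to require the most care is the monotonicity used in the final inequality of Eq.~\eqref{omega_xeb}: that $\Pr[\XEBTest\geq\chi\mid L=l]$ is nondecreasing in the total Porter--Thomas count $l$, so that conditioning on the worst case $L=L_{\max}$ upper-bounds the pass probability for every $l\leq L_{\max}$. This chains two stochastic-dominance facts---the hypergeometric count of Porter--Thomas samples in $\V$ increases stochastically with the total count, and the XEB-score law $\tilde\Gamma(m+l,\cdot)$ increases stochastically with the verification-set count $l$ (each extra Porter--Thomas sample promotes a shape-$1$ Erlang contribution to a shape-$2$ one, which stochastically dominates it). A secondary subtlety is justifying that saturating the fidelity budget $\Phi_{\A}$ is genuinely the adversary's worst case, and that replacing a uniformly random $\V$ by the pseudorandom $\V$ actually used perturbs the bound only negligibly.
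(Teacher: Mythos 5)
Your proposal is correct and takes essentially the same route as the paper's own derivation: the Chernoff bound of Eq.~\eqref{eq:L_C_max} yielding $\varepsilon_1$, the hypergeometric--Erlang-mixture computation of Eq.~\eqref{eq:cdf_adversary} yielding $\varepsilon_2$, and the conditioning decomposition of Eq.~\eqref{omega_xeb} to combine them. Your explicit stochastic-dominance justification for the monotonicity of $\Pr[\XEBTest\geq\chi\mid L=l]$ in $l$ is a nice supplement, since the paper asserts that step without proof.
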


Given this lemma, an XEB threshold $\chi$, and a target $\epss\in(0,1/4)$, we can compute $\Qmin=\argmin_Q\{\epsadv(Q,\chi)\geq 4\epss\}$, which allows us to bound the smooth min-entropy of the samples $X_M$ conditioned on side-information $\partialsn=\seed S^{0}$, the initial snapshot $\snapshot$ excluding the randomness extractor seed $\ExtSeed$. The prescription of how to solve for $\Qmin$ using this lemma is provided in \Cref{sec:extraction}. Formally we have the following theorem:
\begin{theorem}\label{thm:entropy_bound}
    Let $\Omega$ denote the event where the randomness certification protocol in Figure 4 of Methods does not abort, and let $\sigma$ be the state over registers $X^M$ and $\partialsn$ prior to the randomness extraction phase of the protocol. Given $\epss\in(0,1/4)$, the protocol either aborts with probability greater than $1-4\epss$ or
    \begin{align}
        H_{\min}^{\epss}(X^M|\tilde{I}_{sn})_{\sigma_{\wedge\Omega}} \geq \Qmin(n-1) - \log \frac{1}{\epss},
    \end{align}
    where $\sigma=\sigma_{X^M|\tilde{I}_{sn}}$ is the state prior to the randomness extraction stage and $\Qmin=\argmin_Q\{\epsadv(Q,\chi)\geq 4\epss\}$ for $\epsadv(Q,\chi)$ as in \Cref{lem:q_bound}.
\end{theorem}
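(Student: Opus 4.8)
The plan is to split on the probability of the accept event $\Omega$ and then convert a lower bound on the number of genuine quantum rounds into a smooth min-entropy bound via a second-moment (collision) estimate for the Porter--Thomas distribution.

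\textbf{Case split via \Cref{lem:q_bound}.} Since the passing bound $\epsadv(Q,\chi)$ increases with the number $Q$ of perfect-fidelity quantum rounds, the quantity $\Qmin=\argmin_Q\{\epsadv(Q,\chi)\geq 4\epss\}$ is the least number of quantum rounds for which the bound is not already below $4\epss$, so every $Q<\Qmin$ gives $\epsadv(Q,\chi)<4\epss$. If the adversary uses $Q<\Qmin$ quantum rounds then \Cref{lem:q_bound} yields $\Pr[\Omega]\leq\epsadv(Q,\chi)<4\epss$, i.e.\ the protocol aborts with probability greater than $1-4\epss$; this is the first disjunct. It remains to treat the complementary case $\Pr[\Omega]\geq 4\epss$, in which the same lemma forces $Q\geq\Qmin$, and to prove the entropy bound there.

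\textbf{Reduction to the quantum block and the collision estimate.} Conditioned on $\partialsn=\seed S^0$ the challenge circuits $\{C_i\}$ are fixed (a deterministic function of $\seed$), and by the i.i.d.\ and no-postselection assumptions of \Cref{sec:finite-size-assumption} the $Q$ a priori selected quantum samples are independent of the $M-Q$ classical samples given the circuits. Hence $\Pr[X^M=x\mid\partialsn]\leq\prod_{i\in\mathrm{quantum}}p_{C_i}(x_i)$ for every outcome $x$, so it suffices to bound the mass of the quantum block. The single relevant quantity is the second moment of the Porter--Thomas distribution: under the assumption of \Cref{sec:security_assumptions_rcs}, $\Exp_{C\sim\D}\big[\sum_x p_C(x)^2\big]=2/(N+1)\leq 2^{-(n-1)}$, which is exactly the origin of the per-round factor $n-1$.

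\textbf{Typical-set/Markov argument.} Let $W=\prod_{i\in\mathrm{quantum}}p_{C_i}(X_i)$ be the conditional probability assigned to the realized quantum outcomes given the circuits. Taking the expectation over both $\seed$ and the samples and using independence across rounds, $\Exp[W]=\prod_i\Exp_{C_i}\!\big[\sum_x p_{C_i}(x)^2\big]\leq 2^{-Q(n-1)}$. Markov's inequality then gives $\Pr[W>2^{-Q(n-1)}/\epss]\leq\epss$, so on a ``good'' event of weight at least $1-\epss$ one has $\Pr[X^M=x\mid\partialsn]\leq 2^{-(Q(n-1)-\log(1/\epss))}$. Removing the complementary (weight $\leq\epss$) part of the subnormalized state $\sigma_{\wedge\Omega}$ produces a state inside the $\epss$-ball whose conditional guessing probability obeys the same bound, giving $H_{\min}^{\epss}(X^M\mid\partialsn)_{\sigma_{\wedge\Omega}}\geq Q(n-1)-\log(1/\epss)\geq\Qmin(n-1)-\log(1/\epss)$.

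\textbf{Main obstacle.} The delicate step is the last one: because the accept event $\Omega$ depends on the samples, conditioning could in principle inflate the guessing probability by $1/\Pr[\Omega]$. Working with the subnormalized conditional state $\sigma_{\wedge\Omega}$ rather than its renormalization avoids this blow-up, but one must then carefully relate the removed weight to the smoothing radius measured in purified distance; this is where the factor of $4$ in the ``$4\epss$'' threshold originates, and tracking it is the main place constants can go astray. A secondary point is controlling the circuit-to-circuit fluctuation of $\sum_x p_{C_i}(x)^2$ about $2/(N+1)$, handled either by averaging over the $\seed$ register in the conditional min-entropy or by its exponential concentration; and the reduction to the quantum block itself rests on the i.i.d.\ and no-postselection assumptions of \Cref{sec:finite-size-assumption}, whose relaxation is left to future work.
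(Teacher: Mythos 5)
Your proposal follows essentially the same skeleton as the paper's proof: use \Cref{lem:q_bound} to argue that either $\Pr[\Omega]<4\epss$ (first disjunct) or $Q\geq\Qmin$; restrict attention to the $Q$ quantum rounds; use the second moment of the Porter--Thomas distribution, $\Exp\big[\sum_x p_C(x)^2\big]\leq 2^{-(n-1)}$, i.e.\ per-round collision entropy $H_2(X_i|C_i)=n-1$; and pay a $\log(1/\epss)$ smoothing cost. The difference is that where you re-derive everything by hand, the paper invokes two black-box results: it removes the conditioning on $\Omega$ via \cite[Proposition~10]{tomamichel2017largely} (your observation that $\sigma_{\wedge\Omega}$ is dominated by $\sigma$, packaged for smooth entropies), and it converts i.i.d.\ collision entropy into smooth min-entropy via \cite[Equation~5]{Tomamichel09_aep} (your Markov/typical-set step). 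Opening these boxes is legitimate and is indeed the standard proof of those facts, so your route is not genuinely different --- it is the same argument with the citations inlined.

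Two caveats are worth recording. First, your explicit smoothing step does not quite deliver the stated constants under the paper's definition of smooth min-entropy, which uses \emph{purified} distance: cutting a set of probability mass $\epss$ out of a classical (sub)state yields a state at purified distance up to $\sqrt{2\epss-\epss^2}$, not $\epss$, so your argument as written gives $H_{\min}^{\sqrt{2\epss}}\geq Q(n-1)-\log(1/\epss)$, or equivalently $H_{\min}^{\epss}\geq Q(n-1)-2\log(1/\epss)-1$ after rescaling the bad-set mass to $\epss^2/2$. You flag this as the place ``constants can go astray,'' and it is precisely the gap that the paper's citation to the AEP-type bound papers over. Second, your attribution of the threshold $4\epss$ to the purified-distance conversion is not the paper's reason: there the factor arises because the smoothing parameter must satisfy $\epss<\sqrt{\Tr[\sigma_{\wedge\Omega}]}$ for a subnormalized state (automatic once $\Pr[\Omega]\geq 4\epss$), and because the soundness corollary later sets $\epssou=4\epss$. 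Neither caveat breaks the structure of your argument, but as submitted it proves a statement with a slightly weaker smoothing penalty than the theorem claims.
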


\begin{proof}
    Denoting the initial snapshot as $\tilde{I}_{\rm sn}$, consider the smooth-min-entropy $H_{\min}^{\epss}(X^M|\partialsn)_{\sigma_{\wedge\Omega}}$ conditioned on the snapshot.
    Note that the choice of smoothing parameter $\epss$ is valid since $\Tr[\sigma_{X^M\partialsn\wedge\Omega}]\geq 4\epss$ and $0<\epss<2\sqrt{\epss}$ which implies $H_{\min}^{\epss}(X^M|\partialsn)_{\sigma_{\wedge\Omega}}\leq H_{\min}^{2\sqrt{\epss}}(X^M|\partialsn)_{\sigma_{\wedge\Omega}}$.
    The adversary's strategy is to return quantum samples for $Q$ out of $M$ total samples.  For any choice of $Q$, we have 
    \begin{align}
        H_{\min}^{\epss}(X^M | \partialsn)_{\sigma_{\wedge\Omega}}\geq H_{\min}^{\epss}(X_Q | \partialsn)_{\sigma_{\wedge\Omega}} \geq H_{\min}^{\epss}(X_Q | \partialsn)_{\sigma},
    \end{align}
    where the first inequality follows since the registers $X^M$ are classical and the second from \cite[Proposition 10]{tomamichel2017largely}.
    Now, since all of the registers in the last quantity are classical and produced in an i.i.d. manner (once the global event $\Omega$ is removed), we can use \cite[Equation 5]{Tomamichel09_aep} to obtain
    \begin{align}
        H_{\min}^{\epss}(X_Q | \partialsn)_{\sigma} \geq Q \cdot H_2(X_i | C_i)_{\sigma}-\log\frac{1}{\epss} = Q \cdot (n-1) - \log\frac{1}{\epss},
    \end{align}
    where $i$ is an arbitrary choice of index in $[Q]$ (since the $Q$ rounds are i.i.d.) and we have noted  that the only available side information in quantum rounds is the choice of circuit $C_i$ (computable from $\partialsn$).
    Further, $H_2(X):=-\log\sum_{x}p(x)^2$ is the $2$-Renyi entropy (or collision entropy), which for the Porter--Thomas distribution is
    \begin{equation}
        H_2(X_i|C_i)_{\sigma}=-\log_2\left[\int_0^{\infty} N \cdot f(p) \cdot p^2 \cdot \d p\right] = n-1.
    \end{equation}
    {Given $\Pr[\Omega]_{\sigma}\geq 4\epss$, the minimum $Q$ that any adversary strategy can have is precisely}
    \begin{align}
        \Qmin = \min\{Q\,:\, \epsadv(Q, \chi)\geq 4\epss\},
    \end{align}
    with $\epsadv(Q, \chi)$ defined in Lemma~\ref{lem:lower-bound-on-q}. This immediately gives 
    \begin{align}
        H_{\min}^{\epss}(X^M | \partialsn)_{\sigma} \geq \Qmin (n-1) - \log\frac{1}{\epss},
    \end{align}
    as required.  
\end{proof}

\subsection{Proof of Protocol Soundness}\label{sec:soundness}
Recalling Definition~\ref{def:soundness}, to prove $\epssou$-soundness for some $\epssou\in(0,1]$, we want to show that
\begin{align}
    \norm{\rho_{K\snapshot\wedge\Omega}-\tau_K\otimes \rho_{\snapshot\wedge\Omega}}_{\Tr}\leq \epssou\,,
\end{align}
where $\Omega$ is the event that the protocol does not abort, $K$ is the register containing the $\ell$-bitstring output by the seeded extractor, and $\snapshot$ is the snapshot, composed of the client's (uniform and secret) extractor seed $K_{\rm ext}$, random bitstring $\seed$, and the snapshot of the server's memory $S^{0}$ \revtwo{(which, by definition, includes both server and the environment)}. Further, $\rho_{K\snapshot\wedge\Omega}$ is the state at termination of the randomness certification protocol and $\tau_K\otimes \rho_{\snapshot\wedge\Omega}$ is the ideal functionality of the protocol, which replaces $\rho_K$ with the maximally mixed state $\tau_K=\frac{1}{2^\ell}\sum_{i\in[2^\ell]}\dyad{i}$.

We begin by defining the class of randomness extractors used in our randomness certification analysis: quantum-proof strong extractors \cite{renner2008security, konig2011sampling,de2012trevisan, tomamichel2017largely, foreman2024cryptomite}.

\begin{definition}[Quantum-proof strong extractor \cite{renner2008security, konig2011sampling,de2012trevisan}]
\label{sup_def:ext}
A function $\Ext:\{0,1\}^{n} \times \{0,1\}^s \rightarrow \{0,1\}^{\ell}$ is a quantum-proof strong $(\kappa,\epsext)$-extractor if, for any classical quantum state $\rho_{SE}$ where $S$ is the classical register with dimension $2^{n}$ for which $H_{\min}^{\epss}(S|E)_{\rho} \geq \kappa$, it holds that
\begin{equation}
\label{supp_eq:ext}
\norm{\Ext(\rho_{SE} \otimes \tau_D) - \tau_{K} \otimes \tau_{D} \otimes \rho_E }_{\Tr} \leq \frac{1}{2}\epsext + 2\epss,
\end{equation}
where $\tau_D$, and $\tau_K$ are maximally mixed state with dimension $2^s$ and $2^{\ell}$ respectively and $\epss\in\big[0,\sqrt{\Tr[\rho_{SE}]}\big]$. The map $\Ext$ acts on the classical systems $S$ and $D$. The input on system $D$ is called the seed of the extractor.
\end{definition}

In our protocol, the $S$-register is precisely the $M$ classical registers $X_1, \ldots , X_M$, and the $D$ register is $K_{\rm ext}$ and contains the $s$-bit random seed of the extractor that is private to the client. The output length of the extractor depends on the amount of entropy of the input as well as $\varepsilon_{\text{ext}}$ and the seed length. For example, if we use a $2$-universal strong extractor, then we get the following bound on $\ell$.

\begin{lemma}[$2$-Universal Strong Extractor \cite{renner2008security}]
\label{sup_thm:toep}
There exist a quantum-proof strong $(\kappa,\epsext)$-extractor $\Ext:\{0,1\}^{nM}\times\{0,1\}^s\rightarrow\{0,1\}^\ell$ with seed length $s = nM$ and $\ell\leq\kappa - 2\log(\frac{1}{\epsext})$.
\end{lemma}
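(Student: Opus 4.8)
The plan is to prove this existence claim constructively: I would exhibit one concrete two-universal family of hash functions with the stated seed length and then invoke the (smooth) quantum leftover hash lemma of \cite{renner2008security} to verify that it meets the requirements of Definition~\ref{sup_def:ext}. Write $N=nM$ for the input length. Recall that a family $\{h_d:\{0,1\}^{N}\to\{0,1\}^\ell\}_{d\in\{0,1\}^s}$ is \emph{two-universal} if for every pair of distinct inputs $x\neq x'$ we have $\Pr_d[h_d(x)=h_d(x')]\leq 2^{-\ell}$ with $d$ uniform over the seed space. First I would pin down a construction with seed length exactly $s=N=nM$: identify $x,d\in\{0,1\}^{N}$ with elements of the field $\mathbb{F}_{2^{N}}$ and let $h_d(x)$ be the first $\ell$ bits of the product $d\cdot x\in\mathbb{F}_{2^{N}}$. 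Two-universality is elementary: for fixed $y=x-x'\neq 0$, multiplication by $y$ is a bijection of $\mathbb{F}_{2^{N}}$, so $d\cdot y$ is uniform, and the preimage of the all-zero $\ell$-bit prefix has size $2^{N-\ell}$, giving collision probability exactly $2^{-\ell}$.

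Next I would apply the smooth quantum leftover hash lemma (see \cite{renner2008security,tomamichel2017largely}). For any two-universal family and any classical-quantum state $\rho_{SE}$ with $S$ supported on $\{0,1\}^{N}$ and $H_{\min}^{\epss}(S|E)_\rho\geq\kappa$, extracting with a uniformly random seed $D$ and outputting the seed as well (the \emph{strong} property) yields
\begin{align}
\norm{\Ext(\rho_{SE}\otimes\tau_D)-\tau_K\otimes\tau_D\otimes\rho_E}_{\Tr}\leq \frac{1}{2}\,2^{-\frac{1}{2}\left(H_{\min}^{\epss}(S|E)_\rho-\ell\right)}+2\epss\leq \frac{1}{2}\,2^{-\frac{1}{2}(\kappa-\ell)}+2\epss.
\end{align}
The appearance of $\tau_D$ on both sides is exactly the left-hand side demanded by Eq.~\eqref{supp_eq:ext}, and the additive $2\epss$ is the smoothing penalty carried through from $H_{\min}^{\epss}$.

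Finally I would choose the output length to saturate the target soundness. Taking any $\ell\leq\kappa-2\log(1/\epsext)$ gives $\kappa-\ell\geq 2\log(1/\epsext)$, hence $2^{-\frac{1}{2}(\kappa-\ell)}\leq 2^{-\log(1/\epsext)}=\epsext$, so the bound above collapses to $\tfrac{1}{2}\epsext+2\epss$, which is precisely the condition of Definition~\ref{sup_def:ext}. Therefore the field-multiplication family is a quantum-proof strong $(\kappa,\epsext)$-extractor with seed length $nM$ and output length up to $\kappa-2\log(1/\epsext)$, which establishes existence.

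The only substantive ingredient is the quantum leftover hash lemma itself, whose proof proceeds by bounding the trace distance via the conditional collision ($2$-R\'enyi) entropy and then relating that quantity to the smooth min-entropy; I would not reproduce it, since the lemma is explicitly imported from \cite{renner2008security}. The one place requiring care in the adaptation is constant bookkeeping: confirming that the factor $\tfrac{1}{2}$ multiplies $\epsext$ (rather than the whole expression), and that the $2\epss$ smoothing term is placed to match the convention of Definition~\ref{sup_def:ext}. Once the parameters are lined up as above, no further argument is needed.
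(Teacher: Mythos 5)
Your proof is correct. There is in fact nothing in the paper to compare it against: the paper states this lemma as an imported result, citing \cite{renner2008security} without reproducing any argument, and your derivation is precisely the standard one underlying that citation. Concretely, your two ingredients both check out: the truncated field-multiplication family $h_d(x) = \text{first } \ell \text{ bits of } d\cdot x$ over $\mathbb{F}_{2^{nM}}$ is two-universal with collision probability exactly $2^{-\ell}$ (multiplication by a nonzero $y=x-x'$ is a bijection, and the zero-prefix set has size $2^{nM-\ell}$), and it achieves seed length exactly $s=nM$ as the lemma demands; then the smooth quantum leftover hash lemma gives the bound $\tfrac{1}{2}2^{-(\kappa-\ell)/2}+2\epss$, which collapses to $\tfrac{1}{2}\epsext+2\epss$ once $\ell\leq\kappa-2\log(1/\epsext)$, matching Eq.~\eqref{supp_eq:ext} of Definition~\ref{sup_def:ext} including the placement of the factor $\tfrac{1}{2}$ and the smoothing term. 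One incidental observation your construction makes visible: the Toeplitz extractor that the paper actually deploys in Section~\ref{sec:extraction} requires seed length $nM+\ell-1$, slightly longer than the $s=nM$ stated here, whereas your family meets the stated seed length exactly.
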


Alternatively we can use strong extractors requiring a shorter seed at the cost of a smaller output.  

\begin{lemma}[Trevisan Extractor \cite{foreman2024cryptomite}]
\label{sup_thm:trev}
There exist a quantum-proof strong $(\kappa,\epsext)$-extractor $\Ext:\{0,1\}^{nM}\times\{0,1\}^s\rightarrow\{0,1\}^\ell$ with seed length $s=O(\log (nM))$ and $\ell\leq\kappa - 4\log(\frac{1}{\epsext}) - 4\log \ell -6$.
\end{lemma}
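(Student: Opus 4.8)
The plan is to instantiate Trevisan's construction and invoke its quantum-proof analysis, then optimize the underlying combinatorial parameters to match the stated seed length and output bound. First I would recall that Trevisan's extractor is built from two ingredients: a \emph{weak design}, i.e.\ a family of sets $S_1,\dots,S_\ell\subseteq[s]$ each of size $t$ whose overlaps satisfy $\sum_{j<i}2^{|S_i\cap S_j|}\leq r(i-1)$ for every $i$ and a small overlap constant $r$, and a \emph{one-bit extractor} $C\colon\{0,1\}^{nM}\times\{0,1\}^t\to\{0,1\}$ obtained from a list-decodable code such as a Reed--Muller code. On input $x\in\{0,1\}^{nM}$ and seed $y\in\{0,1\}^s$ the extractor outputs the $\ell$ bits $C(x,y|_{S_1}),\dots,C(x,y|_{S_\ell})$, where $y|_{S_i}$ is the restriction of $y$ to the coordinates indexed by $S_i$.

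Second, I would establish the quantum-proof strong property by a hybrid argument over the $\ell$ output bits, following \cite{de2012trevisan}. One replaces the output bits one at a time with uniform bits and bounds each incurred error by the error $\epsilon_1$ of the one-bit extractor $C$; the weak-design overlap guarantees that the min-entropy consumed in producing the first $i-1$ bits, even against a quantum adversary holding the side register $E$, leaves enough residual entropy for $C$ to extract the $i$-th bit. Summing over the $\ell$ hybrids yields a total trace distance of order $\ell\epsilon_1$, provided the source min-entropy satisfies $\kappa\gtrsim r\ell+\kappa_1$, where $\kappa_1$ is the entropy requirement of $C$. Crucially, the reduction is oblivious to whether the adversary is classical or quantum, so the quantum side information $E$ in \Cref{sup_def:ext} is handled exactly as in the classical case.

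Third, I would choose explicit parameters. A Reed--Muller based one-bit extractor gives $t=O(\log(nM/\epsilon_1))$ and $\kappa_1=O(t)$, and a near-optimal weak design with overlap constant $r\to 1$ gives seed length $s=O(\log(nM))$ once we set $\epsilon_1=\epsext/\ell$; the near-optimal $r$ is what makes the leading coefficient in front of $\kappa$ equal to one. Tracking the constants through the hybrid sum and the design overlap then yields the entropy--output relation $\ell\leq\kappa-4\log(1/\epsext)-4\log\ell-6$, where the $\log\ell$ terms arise from the union bound $\epsext=\ell\epsilon_1$ over the output bits. \textbf{The main obstacle} is obtaining these exact constants (the factor $4$ and the additive $-6$) rather than mere asymptotic scaling, since they depend on the precise weak-design overlap bound and on a careful accounting of error accumulation in the quantum-proof reduction. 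Because this bookkeeping is carried out in \cite{foreman2024cryptomite}, the cleanest route is to import their parameter calculation, verifying only that the source register $S=X^M$ of dimension $2^{nM}$ and the hypothesis $H_{\min}^{\epss}(S|E)\geq\kappa$ match the setting of \Cref{sup_def:ext}.
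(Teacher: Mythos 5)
The paper offers no proof of this lemma at all---it is imported verbatim as a black-box result from \cite{foreman2024cryptomite}, which is exactly the route you conclude is cleanest; your outline of Trevisan's construction (weak designs combined with a one-bit extractor from list-decodable codes) and of the quantum-proof analysis of \cite{de2012trevisan}, with the exact constants deferred to \cite{foreman2024cryptomite}, is an accurate account of what sits inside that citation, so your proposal is consistent with the paper's treatment. One caution for the record: your remark that the reduction is ``oblivious to whether the adversary is classical or quantum'' is itself the main theorem of De--Portmann--Vidick--Renner rather than a free observation (generic multi-bit extractors \emph{can} fail against quantum side information), but since you invoke that reference for precisely this step, nothing in your argument is actually missing.
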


In the following, we show that the smooth min-entropy bound provided in \Cref{thm:entropy_bound} can be used to guarantee $\epssou$-soundness when using a strong randomness extractor.

\begin{corollary}\label{cor:soundness}
    Let $\epssou\in(0,1]$, and suppose that a $(\kappa,\epssou)$-quantum-proof strong extractor is used in the randomness extraction step in the randomness certification protocol, where
    \begin{equation}
        \kappa=\Qmin(n-1) - \log \frac{1}{\epssou} - 2
    \end{equation}
    and where $\Qmin=\min\{Q\,:\,\epsadv(Q,\chi)\geq\epssou\}$.
    Then, the protocol is $\epssou$-sound.

    In particular, the protocol is $\epssou$-sound if a two-universal hash function (e.g., Toeplitz hashing) is utilized and the length of the output satisfies
    \begin{align}
        \ell \leq \Qmin(n-1) - 3\log\frac{1}{\epssou} - 2.
    \end{align}
    Alternatively, the protocol is $\epssou$-sound if a Trevisan extractor is utilized and the length of the output $\ell$ satisfies
    \begin{align}
        \ell\leq \Qmin(n-1) - 5\log \frac{1}{\epssou} -4\log\ell - 8.
    \end{align}
\end{corollary}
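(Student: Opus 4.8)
The plan is to combine the smooth min-entropy lower bound of \Cref{thm:entropy_bound} with the defining property of a quantum-proof strong extractor (\Cref{sup_def:ext}), choosing the smoothing and extraction parameters so that the two error contributions sum to exactly $\epssou$. Concretely, I would set $\epss=\epssou/4$ and instantiate a $(\kappa,\epssou)$-extractor, i.e. take $\epsext=\epssou$. With this choice $\log(1/\epss)=\log(4/\epssou)=\log(1/\epssou)+2$, so the bound of \Cref{thm:entropy_bound} reads $H_{\min}^{\epss}(X^M|\partialsn)_{\sigma_{\wedge\Omega}}\geq \Qmin(n-1)-\log(1/\epssou)-2=\kappa$, exactly matching the hypothesis of the extractor; likewise the quantity $\Qmin=\argmin_Q\{\epsadv(Q,\chi)\geq 4\epss\}$ appearing in \Cref{thm:entropy_bound} coincides with $\Qmin=\min\{Q:\epsadv(Q,\chi)\geq\epssou\}$ from the corollary statement.

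\Cref{thm:entropy_bound} presents a dichotomy, and I would treat its two branches separately. In the branch where the protocol does not abort with overwhelming probability---formally $\Tr[\sigma_{X^M\partialsn\wedge\Omega}]=\Pr[\Omega]\geq 4\epss$, exactly the regime in which the smoothing parameter is admissible---the bound $H_{\min}^{\epss}(X^M|\partialsn)_{\sigma_{\wedge\Omega}}\geq\kappa$ holds. I would then apply \Cref{sup_def:ext} directly to the subnormalised conditional state $\sigma_{X^M\partialsn\wedge\Omega}$, which is legitimate since the definition permits $\epss\in[0,\sqrt{\Tr[\rho_{SE}]}]$ and here $\epss\leq 2\sqrt{\epss}=\sqrt{4\epss}\leq\sqrt{\Pr[\Omega]}$. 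Identifying the extractor's source register $S$ with $X^M$, its side information $E$ with $\partialsn=\seed S^{0}$, its seed $D$ with the uniform private seed $\ExtSeed$, and using $\snapshot=\ExtSeed\,\partialsn$, the extractor output equals $\rho_{K\snapshot\wedge\Omega}$ while the target $\tau_K\otimes\tau_{\ExtSeed}\otimes\sigma_{\partialsn\wedge\Omega}$ equals $\tau_K\otimes\rho_{\snapshot\wedge\Omega}$ (because $\ExtSeed$ is uniform and independent of $\partialsn$). \Cref{sup_def:ext} then yields
\begin{align}
\norm{\rho_{K\snapshot\wedge\Omega}-\tau_K\otimes\rho_{\snapshot\wedge\Omega}}_{\Tr}\leq \tfrac12\epsext+2\epss=\tfrac12\epssou+\tfrac12\epssou=\epssou.
\end{align}

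In the complementary branch the protocol aborts with probability greater than $1-4\epss=1-\epssou$, so $\Pr[\Omega]<\epssou$. Here I would bound the trace distance directly: both $\rho_{K\snapshot\wedge\Omega}$ and $\tau_K\otimes\rho_{\snapshot\wedge\Omega}$ are subnormalised with the same trace $\Pr[\Omega]$, so by the triangle inequality (under the convention $\norm{\cdot}_{\Tr}=\tfrac12\norm{\cdot}_1$) their distance is at most $\tfrac12(\Pr[\Omega]+\Pr[\Omega])=\Pr[\Omega]<\epssou$. Both branches therefore certify $\epssou$-soundness per \Cref{def:soundness}. The explicit output-length bounds then follow by substituting $\epsext=\epssou$ and $\kappa=\Qmin(n-1)-\log(1/\epssou)-2$ into the seed/output tradeoffs of \Cref{sup_thm:toep} ($\ell\leq\kappa-2\log(1/\epsext)$, giving $\ell\leq\Qmin(n-1)-3\log(1/\epssou)-2$) and of \Cref{sup_thm:trev} ($\ell\leq\kappa-4\log(1/\epsext)-4\log\ell-6$, giving $\ell\leq\Qmin(n-1)-5\log(1/\epssou)-4\log\ell-8$).

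I expect the main obstacle to be the careful bookkeeping of the conditional (subnormalised) states rather than any deep inequality: one must verify that the extractor guarantee transfers to the $\wedge\Omega$-conditioned state under the admissibility $\epss\leq\sqrt{\Tr[\sigma_{\wedge\Omega}]}$, and that $\ExtSeed$ is uniform and independent of $\partialsn$ (even after conditioning on $\Omega$, since the seed is drawn privately and used only after the abort decision) so that the ideal target factorises as $\tau_K\otimes\rho_{\snapshot\wedge\Omega}$. The abort branch additionally hinges on the fact that equal-trace subnormalised states lie within trace distance $\Pr[\Omega]$ rather than $2\Pr[\Omega]$, which is precisely where the $\tfrac12\norm{\cdot}_1$ convention is needed to meet the $\epssou$ target exactly.
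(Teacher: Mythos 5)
Your proposal is correct and follows essentially the same route as the paper's own proof: the same case split on $\Pr[\Omega]<\epssou$ versus $\Pr[\Omega]\geq\epssou=4\epss$, the same trace-norm triangle-inequality bound in the abort branch, the same application of \Cref{thm:entropy_bound} with $\epss=\epssou/4$ to match $\kappa$, and the same substitution into \Cref{sup_thm:toep} and \Cref{sup_thm:trev} for the output lengths. Your additional bookkeeping (register identification, admissibility of the smoothing parameter, factorization of the ideal target) makes explicit what the paper leaves implicit, but introduces no new ideas or deviations.
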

\begin{proof}
    We begin by considering the trace distance $\norm{\rho_{KK_{\rm ext}\partialsn\wedge\Omega}-\tau_K\otimes\tau_{K_{\rm ext}}\otimes\rho_{\partialsn\wedge \Omega}}_{\Tr}$, where $\partialsn$ is the initial snapshot $\snapshot$ excluding the randomness extractor seed $\ExtSeed$. In the case where $\Pr[\Omega]_\rho<\epssou=4\epss$,
    \begin{align}
        \norm{\rho_{KK_{\rm ext}\partialsn\wedge\Omega}-\tau_K\otimes\tau_{K_{\rm ext}}\otimes\rho_{\partialsn\wedge \Omega}}_{\Tr} \leq \frac{1}{2}\norm{\rho_{KK_{\rm ext}\partialsn\wedge\Omega}}_{1} + \frac{1}{2}\norm{\tau_K\otimes\tau_{K_{\rm ext}}\otimes\rho_{\partialsn\wedge \Omega}}_1= \Pr[\Omega]_\rho<\epssou,
    \end{align}
    and soundness is automatically guaranteed.
    In the case where $\Pr[\Omega]_\rho \geq\epssou=4\epss,$ \Cref{thm:entropy_bound} implies
    \begin{align}
        H_{\min}^{\epss}(X^M|\partialsn)_{\sigma_{\wedge\Omega}} \geq \Qmin(n-1) - \log \frac{1}{\epss} = \Qmin(n-1) - \log \frac{1}{\epssou} - 2.
    \end{align}
    Setting $\kappa = \Qmin(n-1) - \log \frac{1}{\epssou} - 2$, \Cref{sup_def:ext} of $(\kappa, \epssou)$-quantum-proof strong extractors implies
    \begin{align}
        \norm{\rho_{KK_{\rm ext}\partialsn\wedge\Omega}-\tau_K\otimes\tau_{K_{\rm ext}}\otimes\rho_{\partialsn\wedge \Omega}}_{\Tr} \leq \frac{1}{2}\epssou + 2 \epss = \epssou,
    \end{align}
    provided the length $\ell$ of the output string $K$ satisfies either
    \begin{align}\label{eq:ext_toep}
        \ell \leq \kappa - 2\log\frac{1}{\epssou},
    \end{align}
    as per \Cref{sup_thm:toep} if a $2$-universal extractor is used, or 
    \begin{align}\label{eq:ext_trev}
        \ell\leq\kappa - 4\log\frac{1}{\epssou} - 4\log \ell -6,
    \end{align}
    as per \Cref{sup_thm:trev} if a Trevisan extractor is used. This concludes the proof.
\end{proof}

\subsection{Randomness Expansion}
As introduced in Methods, the protocol generally requires some initial randomness, and we ideally like the output of the protocol (the extracted bits together with the seed used by the extractor) to be larger than the required randomness input.
Here, we examine in detail the amount of randomness expansion of a single protocol run, although the actual expansion rate also depends on assumptions on the source of initial randomness and the output string $K$.
In the worst case, the expected increase in randomness can be given by
\begin{equation}
    \ell_{\rm expand}=\Pr[\Omega](\ell-\abs{\ExtSeed})-r
\end{equation}
with the terms being the following:
\begin{enumerate}
    \item The expected length of random bitstring generated, $\Pr[\Omega]\ell$, noting that the protocol returns $K=\perp$ when the protocol aborts;
    \item The expected length of the randomness extractor seed utilized, $\Pr[\Omega]\abs{\ExtSeed}$, noting that the randomness extraction is not performed when the protocol aborts;
    \item The length of seed required to generate the circuits and to select the test rounds, $r$. We note that this randomness is utilized even when the protocol aborts.
\end{enumerate}

The assumptions on the sources of the random string can impact the terms to consider in $\ell_{\rm expand}$.
For instance, the source of random strings can be from sources that are not certified, but trusted to be random and inaccessible to the adversary (e.g., public randomness beacon announcing random bits after the server's response).
The consideration here would be that the client sacrifices randomness that the client is unable to certify in order to gain randomness that is certified.
Whether this assumption is suitable depends on the context of the protocol usage.\\

Separately, we note that the extractor seed cannot be reused in general, unlike in some QKD protocols where the seed is preshared (not announced).
The reason is that the input to the extractor, $X^M$, is public and known to the server. Since $K$ is a deterministic function of $X^M$ and $\ExtSeed$, the usage of $K$ could result in reveal of the seed $\ExtSeed$.
If we assume that $K$ is never utilized in any public manner, in other words, is used only internally by the client for applications that would not leak $K$ externally, $\ExtSeed$ remains private and can be reused. This is the assumption we use in the main text of the paper to claim randomness expansion.

\section{Details of Protocol Implementation}

We now provide additional details on the engineering challenges and implementation choices associated with our experiment. 

\subsection{Quantum Circuit Simulation by Tensor Network Contraction}\label{sec:preliminaries_simulation_by_TN}

A tensor is a collection of indexed numbers, which can graphically be represented by a node with an edge or ``bond" leaving the node for every index. A tensor network is a graph representation of an equation corresponding to the product of many tensors, where edges/bonds shared between nodes in the graph correspond to dummy indices that are summed over in the product. Open indices that remain unsummed can  belong only to one tensor and typically represent physical degrees of freedom in the context of quantum simulation. The process of calculating the output of a tensor network by iteratively summing over each dummy index is called ``contraction" of the tensor network.

Tensor networks have a wide range of applications including quantum simulation, quantum computation, quantum control, and machine learning~\cite{NIPS2016_5314b967,Ors2019}. In the context of quantum simulation, there are two leading algorithms for simulating random circuits, and both are based on tensor networks. The first approach constructs a tensor network corresponding to the product of unitary matrices in the quantum circuit and contracts the network. The second approach constructs an approximate quantum state using a tensor network and approximately time-evolves it according to the quantum circuit. As shown in Ref.~\cite{qntm_rcs}, achieving fidelities and runtimes comparable to the quantum computer used in this demonstration on circuits of the structure we consider appears to be well beyond state-or-the-art implementations of known algorithms for exact tensor network contraction.  Moreover, one can frustrate approximate tensor-network methods considerably more deeply by slight modifications to the single-qubit gate set of our circuits, without impacting the verification time. A similar conclusion was reached in Ref.~\cite{morvan2023phase} for random circuit sampling on a two-dimensional grid of qubits. Therefore, in this work we focus on analyzing the first approach and refer the reader to Ref.~\cite{qntm_rcs} for a detailed comparison of simulation techniques.

A quantum circuit has a corresponding unitary matrix that can be expressed as matrix multiplications of individual quantum gates. One can  write the unitary matrix of the quantum circuit as a contraction equation of individual gate unitaries and graphically represent it as a tensor network. If we fix the values of the indices of all input qubits to be zero and output qubits to the bits in bitstring $x$, %
respectively, then contracting the tensor network gives the transition amplitude $\langle x \vert U \vert 0\rangle$. The order in which indices are contracted can change the computational cost of performing the contraction dramatically, since different contraction orders lead to different intermediate tensor sizes.

\subsubsection{Index slicing}
Even with an optimized contraction path, the dimension of the largest intermediate tensor may be so large that it cannot be fit into a single graphics processing unit (GPU). As a result, Ref.~\cite{chen2018classical} introduced index slicing, where tensor network indices are removed from the computation by fixing them to certain values. A sliced tensor network can then be contracted with lower memory overhead. This process is repeated for all possible values of the removed indices. For example, if two indices are sliced, we construct four tensor networks that correspond to those indices fixed to $\{0,0\},\{0,1\},\{1,0\},\{1,1\}$. Each tensor network is called a slice and is contracted before the scalar values are aggregated to give the final result of contraction of the original tensor network. If only a fraction $\Phi_{\A}$ of slices are contracted, the simulation fidelity is approximately $\Phi_{\A}$ \cite{markov2018quantum}.

\subsubsection{Trade-off between compute and memory operations}

The joint optimization of the contraction order and the choice of which indices to slice have a significant impact on the total floating-point operations required, the numerical efficiency, and the memory requirement of performing a tensor network contraction~\cite{Liu2021,Chen2023,morvan2023phase}. Moreover, optimizing the tensor network contraction order generally presents a trade-off between these metrics. Optimization generally aims to minimize the number of FLOPs required, but contraction schemes with low FLOP count may be highly inefficient. For example, these schemes might involve contractions between a large tensor and a very small tensor. Such a contraction has low ratio of compute (FLOP count) to data movement (limited by memory bandwidth), which is referred to as arithmetic intensity. As a result, the optimization is usually performed with respect to some combined objective function that balances FLOP count with data movement.

\subsubsection{Simulation algorithm in our experiment}\label{sec:simulation_algorithm_details}

We now present the techniques used to estimate the circuit simulation cost in Sec.~\ref{sec:protocol_circuit_selection} and to perform verification in Sec.~\ref{sec:protocol_verification}. We use the tensor-network optimizer package ``CoTenGra''~\cite{gray2021hyper} to determine the contraction scheme. We refer the reader to Ref.~\cite{qntm_rcs} for a detailed comparison of various methods, which shows that CoTenGra is one of the best tensor-network optimization packages among those that are suitable for single-amplitude contractions. We note that since we are obtaining one amplitude or one sample per circuit, intermediate tensor caching and reuse does not improve performance. 

Within CoTenGra, we perform contraction order optimization using ``KaHyPar'' as the primary method. We then perform slicing by interleaving slicing and subtree reconfiguration as well as simulated annealing. This process ensures that the largest intermediate tensor has $28$ indices, to limit the amount of overall memory demanded to store intermediate states of the contractions.

Each contraction scheme has a certain number of slices with a certain FLOP count. Given a GPU with a known computational power (FLOPs per second, or FLOPS), the time it takes for a GPU to contract a slice is at least the FLOP count per slice divided by the theoretical peak performance of the GPU. We denote this ratio the theoretical contraction time. In practice, the actual time it takes to perform contraction on a GPU differs from the estimate we obtain from the optimized contraction order; the ratio of theoretical contraction time and the actual contraction time is referred to as \textit{efficiency}. We observe that the efficiency of contraction also depends on how much we account for memory operations in the optimization of contraction order. In CoTenGra, the memory-weighted cost is referred to as ``combo-$\alpha$,'' where $\alpha$ is the weighing factor for memory operations. In the limit where we do not take memory operations into account ($\alpha \to 0$), the contraction order is optimized to minimize only the FLOP count. We observe that higher $\alpha$ increases the efficiency at the expense of a potentially higher FLOP count.

To find the best contraction scheme that results in the shortest actual time on GPUs, we optimize the contraction order at different $\alpha$ and measure the actual contraction time and efficiency on different GPUs. As an example, the $\alpha$-dependence on V100 GPUs is shown in Fig.~\ref{fig:contraction-vary-alpha}. The best contraction orders for different GPUs are different. Overall, we produce a total of $\approx 750$k contraction schemes at different $\alpha$ values. For each GPU, we pick the contraction scheme with the lowest time-to-solution for actual verification. Different GPUs have different optimal contraction schemes with different FLOP counts and efficiencies, which we report in \Cref{tab:gpu}. In practice, however, the FLOP counts can be higher than the theoretical estimate due to some additional tensor manipulations. For NVIDIA A100, the total FLOPs for a single circuit is $36.6\times10^{18}$ when benchmarked using NVIDIA Nsight Compute, which corresponds to an efficiency of $63\%$, slightly higher than the $60\%$ efficiency estimated using CoTenGra FLOP count.

\begin{figure}
    \centering
    \includegraphics[width=0.55\textwidth]{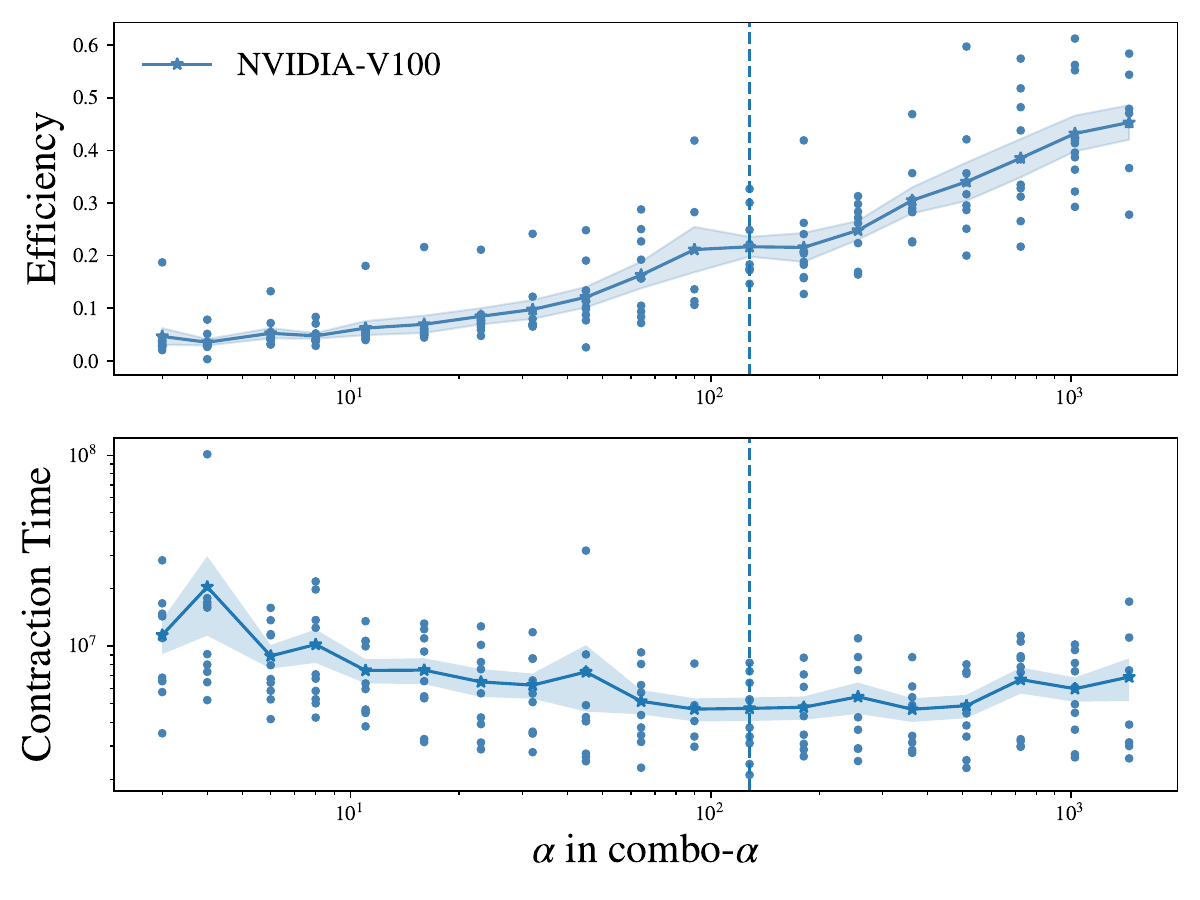}
    \caption{Contraction performance, in terms of efficiency (top) and time (bottom), of ten depth-10 circuits on NVIDIA-V100, as a function of $\alpha$, the weight parameter in the objective of contraction order. The blue shaded region around blue line (mean) represents standard error. The vertical dashed line represents $\alpha=128$, chosen for contraction on Summit. }
    \label{fig:contraction-vary-alpha}
\end{figure}

\begin{table}
    \centering
        \begin{tabular}{c|c|c|c|c}
        \toprule
         GPU type & \centering TFLOPS per GPU (FP32) & Time to simulate one circuit ($10^6$ seconds) & Flop count per circuit ($10^{18}$) & Efficiency\\
         \hline
          AMD MI250X & 53 & 3.5 & 90 & 49\% \\
         \hline
         NVIDIA V100 & 14 & 3.9 & 35 & 64\% \\
         \hline
          NVIDIA A100 & 19.5 & 3.0 & 35 & 60\% \\
     \bottomrule
\end{tabular}
    \caption{GPU performance of AMD MI250X (Frontier \cite{frontier}), NVIDIA V100 (Summit \cite{summit}), and NVIDIA A100 (Perlmutter \cite{perlmutter} and Polaris \cite{polaris}). Each AMD MI250X GPU has 2 Graphics Compute Dies (GCDs). Each GCD has a theoretical peak performance of 26.5 teraFLOPS (TFLOPS) for double and single precision.
    }
    \label{tab:gpu}
\end{table}

\subsubsection{Heterogeneous high-performance computing platforms of our experiment} \label{sec:verification_compute_details}

We use four U.S. Department of Energy supercomputers to perform quantum circuit simulation for verification: Frontier and Summit at the Oak Ridge Leadership Computing Facility, Perlmutter at the National Energy Research Scientific Computing Center, and Polaris at the Argonne Leadership Computing Facility. The technical specifications needed to calculate the single-precision theoretical peak performance (number of  FLOPS) of the supercomputers are listed in \Cref{tab:supercomputers}, and we estimate the time to simulate one circuit as well.

The actual time-to-solution on any given supercomputer depends on numerous factors in addition to the theoretical peak performance. Some computational tasks may be more compute-heavy and are bottlenecked by the theoretical peak performance, whereas others may be more memory operations-heavy and are bottlenecked by the memory bandwidth. The device architecture also affects the ability for the tensor network contraction to utilize caching, reduced or mixed precision, and other techniques. The software stack may also affect the ability to use just-in-time compilation, efficiency of high-dimensional tensor manipulation, specialized math libraries, and proprietary quantum simulation software such as cuQuantum by NVIDIA. For example, while we use CoTenGra for contraction scheme optimization, we use cuQuantum to perform actual contractions on NVIDIA GPU supercomputers and CoTenGra with the JAX backend on Frontier. As a result, the only reliable way to determine the cost of a computational task is to profile the actual time-to-solution on the device instead of relying on metrics such as FLOP count or total data movement cost.

Consider the computational cost of the actual quantum circuits executed on the quantum device to generate certified randomness. In Sec.~\ref{sec:protocol_circuit_selection} we explain how we chose the circuits. For a unified and simple understanding of the computational budget and computational power, we convert the overall supercomputing budget on various supercomputers into Frontier node-hours. The lowest time-to-solution contraction scheme on Frontier, after balancing FLOP count and memory operations, would take 3.5 million seconds to be simulated on a single MI250X GPU. Therefore, simulating a single circuit takes one Frontier node $3.5\times 10^6 / (4 \times 3600)\approx 243$ hours. One Frontier node-hour thus corresponds to $1/243$ of the cost of a single circuit.

To convert our budget on other supercomputers into Frontier node-hours, we find the lowest time-to-solution contraction schemes on these supercomputers. We multiply $243$ by the number of simulated circuits (1522) to get the effective Frontier node-hours we have for validation. We estimate a total of approximately 370k effective Frontier node-hours across all of the supercomputers. This corresponds to $370000\time 4\times 3600\times53\times10^{12}=2.8\times10^{23}$ theoretical floating-point operations on Frontier. The contraction scheme on Frontier corresponds to a FLOP count of $\budget\approx 10^{20}$ per circuit and can be executed with an efficiency of $\ceff\approx 50\%$. We remark that a loose upper bound on maximum achievable Frontier efficiency on a realistic application is provided by the LINPACK performance benchmark used by the TOP500 list \cite{top500}, on which Frontier achieved only $71.1\%$ efficiency. We further note that efficiency at full-machine scale that we measure and report in Table~\ref{tab:verification_scaling} is slightly lower than single-node efficiency. However, since the efficiency decay is small and comparable across all supercomputers utilized in this work, we use single-node benchmarking data for the purpose of node-hour conversion and total budget estimation.

We emphasize that we do not perform exactly $2.8\times 10^{23}$ floating point operations. One obvious reason is that the efficiency of Frontier is not 100\%. A different reason is that any given contraction scheme on a different supercomputer will have a different FLOP count and efficiency. Even the same contraction scheme has different efficiencies on different supercomputers. This is why when computing effective Frontier node-hours, we use only the time-to-solution, and we measure our budget using effective Frontier FLOP count and use the efficiency on Frontier. Similarly, we report the cost of simulating quantum circuits using the FLOP count and efficiency on Frontier.

\begin{table}[]
    \centering
        \begin{tabular}{c|c|c|c|c}
        \toprule
         & \multicolumn{1}{m{0.8in}|}{Compute nodes} & \multicolumn{1}{m{0.8in}|}{\centering GPU type} & \multicolumn{1}{m{0.8in}|}{GPUs per node} & \multicolumn{1}{m{1 in}}{\centering PFLOPS (FP32)}  \\
         \hline
         Frontier & 9408 & AMD MI250X & 4 & 1994 \\ %
         \hline
         Summit & 4608 & NVIDIA V100 & 6 & 387 \\ %
         \hline
         Perlmutter & 1536 & NVIDIA A100 & 4 & 120 \\ %
         \hline
         Polaris & 560 & NVIDIA A100 & 4 & 44 \\ %
     \bottomrule
\end{tabular}
    \caption{Supercomputer technical specifications for Frontier \cite{frontier}, Summit \cite{summit}, Perlmutter \cite{perlmutter}, and Polaris \cite{polaris}. All FLOPS listed are for single precision (FP32). Total performance in petaFLOPS (PFLOPS) is the theoretical peak performance obtained by multiplying the number of GPUs and the single GPU performance, available in \Cref{tab:gpu}.
    }
    \label{tab:supercomputers}
\end{table}

\subsection{Selection of Experimental Parameters}\label{sec:optimization}

The success of our protocol, outlined in the main text and in the overview in Sec.~\ref{sec:protocol-summary}, depends on the careful choice of challenge circuits: classical simulation of these circuits must be difficult enough to preclude fast classical simulation by the adversary and easy enough such that client-side verification is possible with a supercomputer. Since in practice we have a limited verification budget, the difficulty of the circuits is inversely proportional to the number of circuits that we can verify, $m=|\V|$. If the difficulty is too low, then the adversary can simulate the circuits with high fidelity close to the experimental fidelity, making it impossible to distinguish between classical and quantum samples even for large $m$. On the other hand, if the difficulty is too high, then $m$ is too small, and the $\XEB$ score has significant fluctuations from experiment to experiment, and we cannot confidently distinguish between classical and quantum samples either in a given experiment. Therefore, the choice of $m$ is crucial to the success of the protocol.

When planning the experiment, we first specify a target entropy (or $Q$) and the soundness parameter $\epssou$. If the number of verification circuits $m$, the cost of simulating one circuit $\budget$, and the classical computational power of the adversary $\A$ are given, we can determine the $\XEB$ score $\chi$ needed in order for the client to guarantee the target $Q$ and $\epssou$, which allows us to compute the honest case failure probability $\pfail$. However, even in the setting of fixed total verification budget $\totalbudget$, we have the freedom of choosing $m$ as long as we adjust $\budget$ appropriately as well, subject to $m = \lfloor \totalbudget/\budget\rfloor$.

The trade-off between adversary simulation hardness and statistical uncertainties in the \XEB{} score leads to an existence of an optimal $m$ such that $\pfail$ is minimal. Therefore, we optimize $m$ to minimize $\pfail$. Mathematically, this amounts to the following:
\begin{align*}
\text{minimize} & \quad  \pfail(m) = \Pr\left(\XEB_{m, \phi} < \chi \right), \\
\text{ such that } & \quad  \epsadv(Q,\chi,m) = \epssou.
\end{align*}

The failure probability $\pfail(m)$ is a function of $m$ in a sense that $\Phi_{\A}$ in Eq.~\ref{eq:phi_A} depends on $\budget=\totalbudget/m$ and the honest server and adversarial CDFs in Eq.~\ref{eq:cdf_honest_server} and \ref{eq:cdf_adversary} explicitly depend on $m$ as well. The $\Phi_{\A}$ dependence favors small $m$ to decrease the adversarial classical fidelity. The explicit CDF dependence favors large $m$ to reduce the statistical fluctuations. %
With the fixed total verification budget we possessed, with consideration of our circuit hardness and other protocol parameters, we verified $m = \HowManySamplesVerified$ circuits.

\subsection{Selection of Challenge Circuits}\label{sec:protocol_circuit_selection}

When choosing a circuit family for the challenge circuits, we must balance two considerations. On one hand, the circuit should be deep enough to make simulation difficult. On the other hand, the circuit should be shallow enough to enable high-fidelity execution on the quantum computer. Ref.~\cite{qntm_rcs} identifies a circuit family that achieves a high simulation cost while achieving very high fidelity on the H2-1 quantum processor. In this work we use the same circuit family. We refer the interested reader to Ref.~\cite{qntm_rcs} for a detailed discussion of the hardness of simulation of these circuits.

We specify the challenge circuits as follows.
An $n$-qubit random circuit of depth $d$,  $C_{n, d}$, consists of $d$ layers of entangling gates. Each entangling layer is composed of a random set of $n/2$ disjoint $\ZZgate(\pi/2)$ gates, and each entangling layer is sandwiched by layers of random $SU(2)$ gates on all $n$ qubits. $\ZZgate$ is the native two-qubit gate on Quantinuum hardware. The arrangement of entangling operations in $C_{n,d}$ is obtained by solving the edge-coloring problem over a $d$-regular graph with $n$ nodes, with each node representing a qubit. Assigning each layer to a color, the set of edges colored by a color $i \in [d]$ gives the disjoint qubit-pairings for entangling layer $i$. We refer to the arrangement of entangling pairs across all layers as a \textit{topology}.

With a fixed circuit family, we must choose the depth $d$ and a particular $d$-regular graph specifying the topology. For a fixed topology, the cost of performing exact tensor network contraction is independent of the choice of single-qubit gates. We generate numerous topologies with different $d$ and calculate their ``combo-128''  costs. We then ignore topologies that are obviously too expensive or too cheap to simulate, and we estimate contraction time for the rest on the types of GPUs we expect to verify the circuits on. The details of the methods used to estimate the cost of simulation are given in Sec.~\ref{sec:simulation_algorithm_details}. This approach allows us to identify a $d=10$ topology such that we can simulate close to $m$ circuits with our total verification budget. With $n=56$ and $d=10$, these circuits have a total of $10 \times (56/2) = 280$ entangling gates and $11 \times 56 = 616$ single-qubit gates. As discussed earlier, once the circuit is chosen based on estimated costs from preliminary optimizations, $\approx 750$k contraction schemes are then generated with different $\alpha$ to identify the scheme with the lowest time-to-solution on each supercomputer. This process lets us identify a \textit{single} topology with the desired computational hardness and contraction schemes corresponding to the GPU models we use.

We generate challenge circuits by randomizing only the single-qubit $SU(2)$ gates over this chosen topology. In practice, this is achieved with finite randomness by discretizing the continuous group $SU(2)$. Crucially, the seed used to randomize single-qubit gates is kept secret.

\subsection{Client-Server Interaction}
\label{sec:protocol_interaction}

Having fixed a topology, the client maintains a reservoir of circuits, where the single-qubit gates have been randomized using seeds kept secret from the server. These circuits are generated by using the commonly used library ``Pytket'' and are subsequently compiled to a H2-specific file type that spells out ion transport and gating operations necessary to implement the circuit in the H2-1 machine.

In total, we submit \HowManyCircuitsSubmitted{} circuits in \HowManyBatchesSubmitted{} batches and receive \HowManySamplesTotal{} valid samples. The timeline of data collection is shown in Fig.~\ref{fig:time-statistics}. The $\Tqc$ for each circuit is taken to be the time interval recorded for that batch, divided by the size of the batch. 

\begin{figure}
    \centering
    \includegraphics[width=\textwidth]{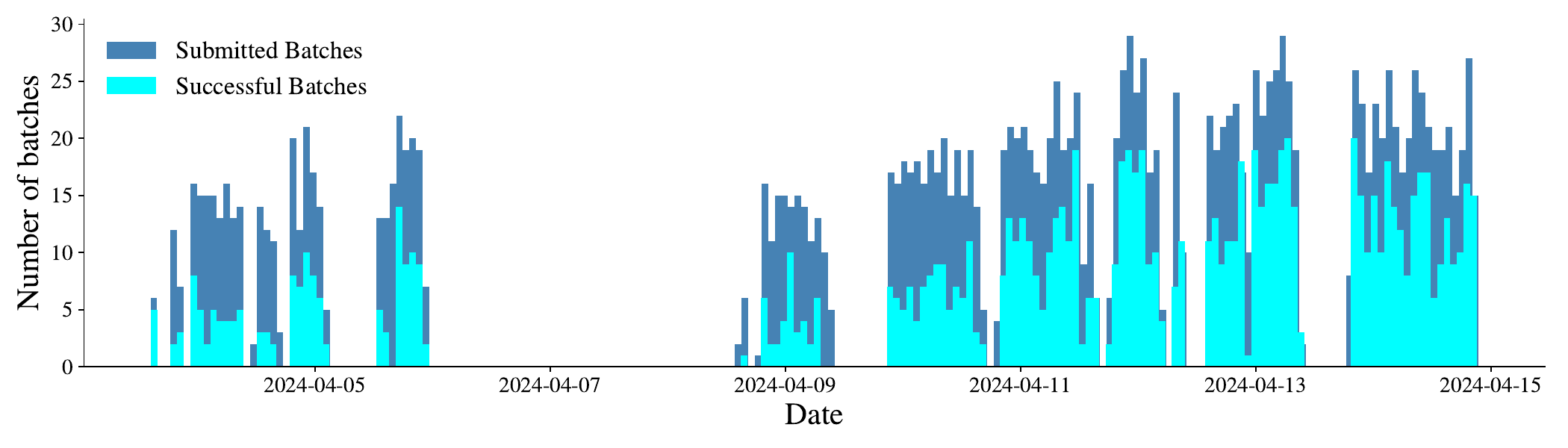}
    \caption{Timeline of data collection}
    \label{fig:time-statistics}
\end{figure}

\subsection{Verification}\label{sec:protocol_verification}

As discussed above, the final contraction schemes used by different supercomputers are obtained by choosing the lowest time-to-solution scheme from $\approx 750$k schemes obtained by using KaHyPar, subtree reconfiguration, and simulated annealing provided by CoTenGra. For Frontier, CoTenGra with the JAX backend is used to perform contraction. For other supercomputers, cuQuantum is used to perform contraction, taking the contraction scheme obtained from CoTenGra as input. The details of techniques used to perform the exact contraction are given in Sec.~\ref{sec:simulation_algorithm_details}.

\begin{table}[h!]
    \centering
        \begin{tabular}{c|c|c|c|c|c}
        \toprule
         & Single-GPU & 10 nodes & 100 nodes & 1000 nodes & full-machine \\
         \hline
         Frontier & 49\% & 45\% & 44\% & 45\% & 45\% \\
         \hline
         Summit & 64\% & 62\% & 61\% & 61\% & 59\% \\
         \hline
         Perlmutter & 60\% & 59\% & 59\% & 60\% & N/A \\
         \hline
         Polaris & 60\% & 60\% & 60\% & N/A & N/A \\
     \bottomrule
\end{tabular}
    \caption{Efficiency of tensor network contraction on different scales. We did not perform full-machine benchmarks on Perlmutter and Polaris. Additionally, Polaris has fewer than 1,000 nodes.}
    \label{tab:verification_scaling}
\end{table}

We remark that contraction of each slice is independent from each other, and the task is embarrassingly parallel. In \Cref{tab:verification_scaling} we show trivial scaling of the verification algorithm on all supercomputers. We also report the cost of contracting a single circuit in terms of the FLOP count, which is different between GPUs since the lowest time-to-solution contraction scheme is different between GPUs. Additionally, the full machines of Frontier and Summit were utilized, achieving numerical efficiencies of 45\% and 59\%, respectively. This translates to a sustained performance of 897 PFLOPS on Frontier and 228 PFLOPS Summit, which gives a total of 1.1 EFLOPS of sustained performance across both machines. The resulting time to simulate one circuit is $\HowManySecondsToVerifyFrontier$ seconds on Frontier and $\HowManySecondsToVerifySummit$ seconds on Summit.

\subsection{Randomness Extraction}\label{sec:extraction}

The numbers corresponding to our experimental results are $n=56, M=30010, T_{\rm tot} = 64641~\text{seconds}, \XEBTest = 0.32, \text{ and } m = 1522$. The experiment passes the randomness extraction protocol with abortion thresholds $\chi=0.30$ and $\Tqcthreshold=2.2$ s. For this protocol we first estimate the minimum number of $\Qmin$ against an adversary $\A$ for various values of $\epssou$ using Lemma~\ref{lem:q_bound} by performing binary search on $Q$ such that $\epsadv(Q,\chi)=\epssou$. Since Lemma~\ref{lem:q_bound} holds for any $\delta\geq 0$, we could optimize $\delta$ to minimize $\epsadv$. In practice, we simply solve for $\delta$ such that $\varepsilon_1=\epsadv/2$.

We then determine the smooth min-entropy certified by our soundness result (Theorem~\ref{thm:entropy_bound} for $\Qmin$ samples, with smoothing parameter $\epss = \epssou/4$). Table I in the main text reports the smooth min-entropy rate $h=H^{\epss}_{\rm min}/(M \cdot n)$. For example, at $\epssou=10^{-6}$ and $\A=4\times$Frontier, we have $h=0.04$.

Having determined the certified entropy present in the $M$ samples, %
we can pass the raw bits into a randomness extractor. The length of the extractor output is given by \Cref{cor:soundness}. We feed a total of $56 \cdot M$ bits into a seeded randomness extractor to obtain $\ell$ bits, which are now expected to be close to the uniform distribution. In this work we use the Toeplitz extractor, which is known to be a quantum-proof strong extractor \cite{konig2005power}. A property of a strong extractor is that the extractor output is independent of the seed used. If the client uses randomness privately, the net output is the concatenation of the extraction seed and the output of the extractor. We use the implementation of the Toeplitz extractor from the open-source package ``Cryptomite'' \cite{foreman2024cryptomite}.

\section{Details on Outlook for Future Experiments}
\label{sec:future}

We would like to understand how improvements in the quantum computer fidelity, average time per sample, and verification budget improve the protocol performance. We are interested in the change in the resulting smooth-min-entropy as well as achievable security parameters. For the protocol to be economically viable, only a very small fraction of samples can be verified ($m/M \ll 1$). In the limit of infinite $M$ and finite $m$, we can lower bound the fraction $Q/M$ of quantum samples $R_\text{Q}$ instead of the absolute number of quantum samples, and the normalized entropy is $h=R_\text{Q}\times (n-1)/n$. The average classical simulation fidelity is
\begin{equation}
    \langle \phi_{\A}\rangle=\min \left(1,\frac{\ceff\cdot \A \cdot \langle \Tqcthreshold \rangle}{(1-R_\text{Q})\cdot\budget}\right)=\min \left(1,\frac{\ceff\cdot \A \cdot \langle \Tqcthreshold \rangle}{(1-R_\text{Q})\cdot\totalbudget/m}\right)\label{eq:ave_phi_a}.
\end{equation}
In the limit of infinite $M$ and finite $m$, there is no uncertainty in the fraction $R$ of Porter--Thomas bitstrings (as can be seen in Eq.~\ref{eq:L_C_max} since $\Exp[L_\text{C}]\rightarrow\infty$), and we have $R=R_\text{Q}+(1-R_\text{Q})\cdot\langle \phi_{\A}\rangle$. Therefore, the number of Porter--Thomas bitstrings in the verification set follows the binomial distribution, and we have
\begin{align}
\Pr\left(\XEBTest \leq \chi\right)
&= \sum_{l=0}^m \Pr\left(\XEBTest \leq \chi |l  \text{ PT bitstrings in }\V\right)\cdot \Pr \left(l \text{ PT bitstrings in }\V \right) \\
&=\sum_{l=0}^{m} \tilde{\Gamma}(m+l, m \cdot (\chi + 1)) \cdot \left[\binom{m}{l}R^l(1-R)^{m-l}\right],
\end{align}
which is the same as the CDF for a finite-fidelity honest server with fidelity $R$.

Now, we would like to understand the effect of improving the quantum computer fidelity $\phi$, decreasing the threshold of average time per sample $\Tqcthreshold$, increasing the verification budget $\budget$, and decreasing the adversary computational power $\A$. Further, as  discussed in Sec.~\ref{sec:bound-on-q}, $\Tqcthreshold$, $\budget$, and $\A$ can be treated interchangeably since they only affect the CDF calculations by entering $\Phi_{\A}$ in Eq.~\ref{eq:phi_A}. For each set of parameters, we can similarly optimize $m$ following the procedure in Sec.~\ref{sec:optimization}. This allows us to map out the landscape of achievable performance in Fig.~3 of the main text.

\section{Table of Variables}

\begin{table*}[ht]
\begin{ruledtabular}
\begin{tabular}{cl}
Label & Meaning \\[.1em]  \hline \\[-.6em]
$n$ & Number of qubits \\[.1em]
$\budget$ & Cost of simulating challenge circuits \\[.1em]
$\totalbudget$ & Total classical computational budget for verification \\[.1em]
$M$ & Number of successful samples  \\[.1em]
$r$ & Length of the circuit generation seed \\[.1em]
$\ExtSeed$ & Seed for the randomness extractor \\
$x_i$ & Bitstring for the $i$th circuit \\[.1em]
$X^M$ & Client's classical register of the received $M$ samples \\[.1em]
$K$ & Client's output register \\[.1em]
$b$ & Number of stitched circuits per job, where each stitched circuits is composed of 2 circuits \\[.1em]
$\batchcutoff$ & Cutoff time for a single batch of $2b$ circuits \\[.1em]
$\Ttot$ & Total time for all successful batches \\[.1em]
$\Tqc$ & Average response time per successful quantum sample \\[.1em]
$\Tqcthreshold$ & Target for average time per successful quantum sample for protocol abort \\[.1em]
$\Tthreshold$ & $M \cdot \Tqcthreshold$ \\[.1em]
$\V$ & The set of indices for circuits used for verification \\[.1em]
$m$ & The size of the verification set \\[.1em]
$\XEBTest$ & The \XEB{} score of the verification set \\[.1em]
$\chi$ & \XEB{} score threshold \\[.1em]
$m$ & Number of samples used in \XEB{} \\[.1em]
$\phi$ & Expected fidelity of H2-1 on challenge circuits \\[.1em]
$\pfail$ & Abort probability for an honest server \\[.1em]
$\epssou$ & Soundness parameter \\[.1em]
$\epsna$ & Probability of the protocol not aborting when interacting with an adversary \\[.1em]
$\epss$ & Min-entropy smoothing parameter \\[.1em]
$U$ & Client \\[.1em]
$S$ & Server \\[.1em]
$\Xi$ & Server controller \\[.1em]
$\SQ$ & Server quantum computer \\[.1em]
$\SC$ & Server classical computer \\[.1em]
$\snapshot$ & Initial snapshot of all classical information of the client and the server \\[.1em]
$\A$ & Adversary classical computational power in FLOPS \\[.1em]
$\ceff$ & Numerical efficiency of tensor network contraction \\[.1em]
$Q$ & Number of quantum samples from the adversary \\[.1em]
$\Qmin$ & Minimum number of quantum samples the adversary must generate to pass the \XEB{} test with probability at least $\epsna$ \\[.1em]
$M'$ & Number of amplitudes to compute for frugal rejection sampling by the adversary \\[.1em]
$\phi_{\A}^{(i)}$ & Adversary's classical simulation fidelity of the $i$th accepted circuit \\[.1em]
$\Phi_{\A}$ & Sum of classical simulation fidelity for an adversary with computational power $\A$ \\[.1em]
$\langle\phi_{\A}\rangle$ & Average classical simulation fidelity for an adversary with computational power $\A$ in the limit of infinitely many samples \\[.1em]
$\U$ & Uniform distribution \\[.1em]
$\Haar$ & Ensemble of Haar-random states \\[.1em]
\end{tabular}
\end{ruledtabular}
\caption{Summary of experimental parameters used in this work.}

\label{tab:estimation}
\end{table*}

\bibliographystyle{apsrev4-1}

\bibliography{citations}